\renewcommand{\Pr}{\mathbb{P}}
\newtheorem{assumption}{Assumption}
\newcommand\scalemath[2]{\scalebox{#1}{\mbox{\ensuremath{\displaystyle #2}}}}
\newcommand{\beq}{\begin{equation}}
\newcommand{\eeq}{\end{equation}}
\newcommand{\beqa}{\begin{eqnarray}}
\newcommand{\eeqa}{\end{eqnarray}}
\newcommand{\beqan}{\begin{eqnarray*}}
	\newcommand{\eeqan}{\end{eqnarray*}}
\newcommand{\beqannumb}{\begin{eqnarray}}
\newcommand{\eeqannumb}{\end{eqnarray}}
\newcommand{\hl}[1]{{\textit{#1}}}
\newcommand{\oam}[1]{\todo[inline,color=orange!80]{{\it OAM:~}#1}}
\long\def\acks#1{\vskip 0.3in\noindent{\large\bf Acknowledgments}\vskip 0.2in
\noindent #1}
\begin{document}

\title{Optimal Strategies for Graph-Structured Bandits}

\author{\name Hassan Saber \email hassan.saber@inria.fr \\
       \addr SequeL Research Group \\
       Inria Lille-Nord Europe \& CRIStAL\\
       Villeneuve-d'Ascq, Parc scientifique de la Haute-Borne, France
       \AND
       \name Pierre Ménard \email pierre.menard@inria.fr \\
       \addr SequeL Research Group \\
       Inria Lille-Nord Europe \& CRIStAL\\
       Villeneuve-d'Ascq, Parc scientifique de la Haute-Borne, France
       \AND
       \name Odalric-Ambrym Maillard \email odalric.maillard@inria.fr \\
       \addr SequeL Research Group \\
       Inria Lille-Nord Europe \& CRIStAL\\
       Villeneuve-d'Ascq, Parc scientifique de la Haute-Borne, France}

\editor{}

\maketitle

\begin{abstract} We study a structured variant of the multi-armed bandit problem
 specified by a set of Bernoulli distributions $ \nu \!= \!(\nu_{a,b})_{a \in \cA, b \in \cB}$  with means $(\mu_{a,b})_{a \in \cA, b \in \cB}\!\in\![0,1]^{\cA\times\cB}$ and  by a given weight matrix $\omega\!=\! (\omegabb)_{b,b' \in \cB}$, where $ \cA$  is a finite set of arms and $ \cB $ is a finite set of users. The weight matrix $\omega$ is such that for any two users $b,b'\!\in\!\cB, \max_{a\in\cA}\abs{\muab \!-\! \mu_{a,b'}} \!\leq\! \omegabb $. 
 This formulation is flexible enough to capture various situations, from highly-structured scenarios  ($\omega\!\in\!\{0,1\}^{\cB\times\cB}$) to fully unstructured setups ($\omega\!\equiv\! 1$).
We consider two scenarios depending on whether the learner chooses only the actions to sample rewards from or both users and actions. We first derive problem-dependent lower bounds on the regret for this generic graph-structure that involves a structure dependent linear programming problem.
 Second, we adapt to this setting the Indexed Minimum Empirical Divergence (\IMED) algorithm introduced by Honda and Takemura (2015), and introduce the \IMEDSstar algorithm. Interestingly, \IMEDSstar does not require computing the solution of the linear programming problem more than about $\log(T)$ times after $T$ steps, while being provably asymptotically optimal. Also, unlike existing bandit strategies designed for other popular structures, \IMEDSstar does not resort to an explicit forced exploration scheme and only makes use of local counts of empirical events.
 We finally provide numerical illustration of our results that confirm the performance of \IMEDSstar. 
\end{abstract}
\begin{keywords}
Graph-structured stochastic bandits, regret analysis, asymptotic optimality, Indexed Minimum
Empirical Divergence (\IMED) algorithm.
\end{keywords}

\section{Introduction}

The multi-armed bandit problem is a popular framework to formalize sequential decision making problems.
It was first introduced in the context of medical trials \citep{thompson1933likelihood,thompson1935criterion} and later formalized by \cite{ro52}.
In this paper, we consider a contextual and structured variant of the problem, specified by a set of distributions $ \nu \!=\! (\nu_{a,b})_{a \in \cA, b \in \cB}$ with means $(\mu_{a,b})_{a \in \cA, b \in \cB}$, where $ \cA $  is a finite set of arms and $ \cB $ is a finite set of users. Such $ \nu $ is called a (bandit) configuration where each $ \nu_b \!=\! (\nu_{a,b})_{ a \in \cA} $ can be seen  as a classical multi-armed bandit problem.
The streaming protocol is the following: at each time $ t \!\geq\! 1 $, the learner deals with a user $b_t \!\in\!\cB$ and chooses an arm $ a_t \!\in\! \cA $, based only on the past. We consider two scenarios: either  the sequence of users is deterministic (uncontrolled scenario) or the learner has the possibility to choose the user (controlled scenario), see Section~\ref{sec:setting}. The learner then receives and observes a reward $ X_t $ sampled according to $ \nu_{a_t,b_t} $ conditionally independent from the past. We assume binary rewards: each $ \nu_{a,b} $ is a Bernoulli distribution $ \textnormal{Bern}(\muab) $ with mean $ \muab \!\in\! (0,1)$ and we denote by $\cD$ the set of such configurations. The goal of the learner is then to maximize its expected cumulative reward over $T$ rounds, or equivalently minimize regret given by 
\[
R(\nu,T) = \Esp_\nu\!\brackets{\sum\limits_{t=1}^T \max\limits_{a \in \cA}\mu_{a,b_t} - X_t}  \,.
\]
For this problem one can run, for example, a separate instance of a bandit algorithm for each user $b$, but we would like to exploit a \hl{known structure} among the users (which we detail below).

\paragraph{Unstructured bandits}
The classical bandit problem (when $|\cB|\!=\!1$) received increased attention in the middle of the $20^{\text{th}}$ century. The seminal paper \cite{lai1985asymptotically} established the first  lower bounds on the cumulative regret, showing that designing a strategy that is optimal uniformly over a given set of configurations $\cD$ comes with a price. 
The study of the lower performance bounds in multi-armed bandits successfully lead to the development of asymptotically optimal strategies for specific configuration sets, such as the \KLUCB strategy \citep{lai1987adaptive,CaGaMaMuSt2013,maillard2018boundary} for exponential families, or alternatively the \DMED and \IMED strategies from \cite{honda2011asymptotically,honda2015imed}.
The lower bounds from \cite{lai1985asymptotically}, later extended by \cite{burnetas1997optimal} did not cover all possible configurations, and in particular \textit{structured} configuration sets were not  handled until \cite{agrawal1989asymptotically} and then \cite{graves1997asymptotically} established generic lower bounds. Here, structure refers to the fact that pulling an arm may reveal information that enables to refine estimation of other arms.
Unfortunately, designing efficient strategies that are provably optimal remains a challenge for many structures at the cost of a high computational complexity.

\paragraph{Structured configurations}
Motivated by the growing popularity of bandits in a number of industrial and societal application domains, the study of structured configuration sets has received increasing attention over the last few years:
The linear bandit problem is one typical illustration \citep{abbasi2011improved, srinivas2010gaussian, durand2017streaming}, for which the linear structure considerably modifies the achievable lower bound, see \cite{lattimore2017end}. 
The study of a unimodal structure naturally appears in the context of wireless communications, and has been considered in \cite{combes2014unimodal} from a bandit perspective, providing an explicit lower bound together with a strategy exploiting this structure.
Other structures  include Lipschitz bandits \citep{magureanu2014oslb}, and we refer to the manuscript \cite{magureanu2018efficient} for other examples, such as cascading bandits that are useful in the context of recommender systems.
\cite{combes2017minimal} introduced a generic strategy called \OSSB (Optimal Structured Stochastic Bandit), stepping the path towards generic multi-armed bandit strategies that are  adaptive to a given structure.

\paragraph{Graph-structure}
In this paper, we consider the following structure: For a \hl{given} weight matrix $\omega \!=\! (\omega_{b,b'})_{b,b'\in\cB}\in[0,1]^{\cB\times\cB}$ inducing a metric on $\cB$, we assume that for any two users $b,b'\!\in\!\cB$, $ \abs{\abs{ \mu_b \!-\! \mu_{b'} }}_\infty\!\coloneqq\!\max_{a \in \cA}\abs{\muab \!-\! \mu_{a,b'}} \!\leq\! \omegabb $.
We see the matrix $\omega$ as an adjacency matrix of a fully connected weighted graph where
each vertex represents a user and each weigh $\omega_{b,b'}$ measures proximity between two users, hence we call this a ``\hl{graph structure}''. The motivation to study such a structure is two-fold. On the one hand, in view of paving the way to  solving generic structured bandits, the graph structure yields nicely interpretable lower bounds that show how $\omega$ effectively modifies the achievable optimal regret and suggests a natural strategy, while being flexible enough to \hl{interpolate} between a fully unstructured and a highly structured setup. 
On the other hand, multi-armed bandits have been extensively applied to recommender systems: In such systems it is natural to assume that users may not react arbitrarily differently from each other, but that two users that are "close" in some sense will also react similarly when presented with the same item (action). Now, the similarity between any two users may be loosely or accurately known (by studying for instance activities of users on various social networks and refining this knowledge once in a while): The weight matrix $\omega$ enables to summarize such imprecise knowledge. Indeed $\omega_{b,b'}\!=\!0$ means that two users behave identically, while $\omega_{b,b'}\!=\!1$ is not informative on the true similarity $\|\mu_b\!-\!\mu_{b'}\|_\infty$ that can be anything from arbitrarily small to $1$.
Hence, studying this structure is both motivated by a theoretical challenge and more applied considerations. To our knowledge this is the first work on \textit{graph structure}. Other structured problems such as Clustered bandits \citep{gentile2014online}, Latent bandits \citep{maillard2014latent}, or Spectral bandits \citep{valko2014spectral} do not deal with this particular setting.

\paragraph{Goal}
The primary goal of this paper is to build a provably optimal strategy for this \hl{flexible} notion of structure. To do so, we derive lower bounds and use them to build intuition on how to handle structure, which enables us to establish a novel bandit strategy, that we prove to be optimal. Although specialized to this structure, the mechanisms leading to the strategy and introduced in the proof technique are novel and are of independent interest.

\paragraph{Outline and contributions} We formally introduce the graph-structure model in Section~\ref{sub:graphstructure}. 
Graph structure is simple enough while interpolating between a fully unstructured case and highly-structured settings such as clustered bandits (see Figure~\ref{fig:example}): This makes it a convenient setting to study \hl{structured} multi-armed bandits.
In Section~\ref{sec:lower_bounds}, we first establish in Proposition~\ref{prop:LB_pull}
a lower bound on  the asymptotic number of times a sub-optimal couple must be pulled by any consistent strategy (see Definition~\ref{def:consistent}),
together with its corresponding lower bound on the regret  (see Corollary~\ref{cor:LB_regret}) involving an optimization problem.
In Section~\ref{sec:imed_algo}, we revisit the Indexed Minimum Empirical Divergence (\IMED) strategy from \cite{honda2011asymptotically} introduced for unstructured multi-armed bandits, and adapt it to the graph-structured setting, making use of the lower bounds of Section~\ref{sec:lower_bounds}.
The resulting strategy is called \IMEDS in the controlled scenario and  \IMEDStwo in the uncontrolled scenario. Our analysis reveals that in view of asymptotic optimality, these strategies may still not optimally exploit the graph-structure in order to trade-off information gathering and low regret.
In order to address this difficulty, we introduce the modified \IMEDSstar strategy for the controlled scenario (and  \IMEDSstartwo in the uncontrolled one).
We show in Theorem~\ref{th:asymptotic_optimality_IMEDSstar}, which is the main result of this paper, that both \IMEDSstar and  \IMEDSstartwo are \hl{asymptotically optimal} consistent strategies.
Interestingly, \IMEDSstar does not compute a solution to the optimization problem appearing in the lower bound \emph{at each time step}, unlike for instance \OSSB introduced for generic structures, but only about $\log(T)$ times after $T$ steps. Also, if \textit{forced exploration} does not seem to be avoidable for this problem, \IMEDSstar does not make use of an explicit \emph{forced exploration} scheme 
%-- a feature that is not considered elegant nor desirable --
but a more implicit one, based on local counters of empirical events. Up to our knowledge, \IMEDSstar is the first strategy with such properties, in the context of a structure requiring to solve an optimization problem, that is provably asymptotically optimal.
On a broader perspective, we believe the mechanism used in \IMEDSstar as well as the proof techniques could be extended beyond the considered graph-structure, thus  opening promising perspective in order to build structure-adaptive optimal strategies for generic structures. 
Last, we provide in Section~\ref{sec:numerical_experiments} numerical illustrations on synthetic data. They show that \IMEDSstar is also \hl{numerically efficient} in practice, both in terms of regret minimization and computation time; this contrasts with 
some bandit strategies introduced for other structures (as in \cite{combes2017minimal}, \cite{lattimore2017end}), that in practice suffer from a prohibitive burn-in phase.

\subsection{Setting}
\label{sec:setting}
Let us recall that the goal of the learner is to maximize its expected cumulative reward over $T$ rounds, or equivalently minimize regret given by 
\[ 
R(\nu,T) = \Esp_\nu\!\brackets{\sum\limits_{t=1}^T \max\limits_{a \in \cA}\mu_{a,b_t} - X_t}\,.
\]
As mentioned, for this problem one can run, for example,  a separate instance of  bandit algorithms for each user $b$, but we would like to exploit the graph structure. We consider two typical scenarios.

\paragraph{Uncontrolled scenario}  The sequence of users $(b_t)_{t \geq 1}$ is assumed deterministic and does not depend on the strategy of the learner. At each time step $t\!\geq\!1$, the user $b_{t}$ is revealed to the learner.

\paragraph{Controlled scenario} The sequence of users $(b_t)_{t \geq 1}$ is strategy-dependent and at each time step $t \!\geq\! 1$, the learner has to choose a user $b_t$ to deal with, based only on the past. 

~\\ Both scenarios are motivated by practical considerations: \textit{uncontrolled scenario} is the most common setup for recommender systems, while \textit{controlled scenario} is more natural in case the learner interacts actively with available users  as in advertisement campaigns. In an \textit{uncontrolled scenario}, the frequencies of user-arrivals are imposed and may be arbitrary, while in a \textit{controlled scenario} all users are available and 
the learner has to deal with them with similar frequency (even if this means considering a subset of users). We formalize the notion of frequency in the following definition.
\begin{definition}[Log-frequency of a user]\label{def:unifspread} 
A sequence of user $(b_t)_{t\geq1}$ has log-frequencies $\beta \!\in\! [0,1]^\cB $ if, almost surely,   the number of times the learner has dealt with user $b \!\in\!\cB$ is 
 $N_b(T) \!=\! \Theta\!\left(T^{\beta_b}\right)$\footnote{We say that $u_T = \Theta(v_T)$, if the two sequences $u_T$ and $v_T$ are equivalent.}. In this case, almost surely we have
\[
 \forall b \in \cB,\  \limT \dfrac{\log\!\left(N_b(T)\right)}{\log(T)} = \beta_b \,.
\]
\end{definition}
 In an \textit{uncontrolled scenario}, we assume that the sequence of users $(b_t)_{t \geq 1}$ has positive log-frequencies $\beta \!\in\! (0,1]^\cB $, with $\beta$ unknown to the learner. In a \textit{controlled scenario}, we focus only on strategies that induce sequences of users with
 same log-frequencies, hence all equal to $1$, independently on the considered configuration, that is strategies such that, almost surely, $N_b(T) \!=\! \Theta(T)$ for all user $b \!\in\!\cB$.
\subsection{Graph Structure}\label{sub:graphstructure}

In this section, we introduce the graph structure. We assume that all bandit configurations $\nu$ belong to a set of the form:
\[
\overline{\cD}_\omega \coloneqq \Set{\nu\in \cD :\ \forall b, b' \in \cB,\, \max\limits_{a \in \cA}\abs{\muab - \mu_{a,b'}} \leq \omega_{b,b'}  }\,,
\]
where $ \omega \!=\! (\omegabb)_{b,b' \in \cB} \in  [0,1]^{\cB \times \cB} $ is a weight matrix known to the learner. Intuitively, when the weights are  close to $1$, we expect no change to the agnostic situation. But, when the weights are close to $\|\mu_b\!-\!\mu_{b'}\|_\infty\!\coloneqq\!\max_{a\in\cA}\abs{\mu_{a,b} \!-\! \mu_{a,b'}}$, we expect significantly lower achievable regret.  

\begin{remark}
For the specific case where $ \omegabb=0$, $\overline{\cD}_\omega$ corresponds to user $b$ and $b'$ known to be perfectly clustered. 
The weight matrix given in Figure~\ref{fig:example} models three smooth clusters of users. Each cluster is included in a ball of diameter $\alpha$ for the infinite norm $\|\cdot\|_\infty$.
\end{remark}
In the sequel we assume the following properties on the weights.
\begin{assumption}[Metric weight property]\label{ass:metric}
The weight matrix $\omega$ satisfies:
\begin{itemize}[label={-}, itemsep = 0 mm]
	%(Positivity, Separation)
    \item  $ \omega_{b,b} \!=\! 0 $ and $\omegabb \!>\! 0$ for all $b \!\neq\! b' \!\in\! \cB$,
    %(Symmetry, Triangular Inequality)
    \item $ \omegabb \!=\! \omega_{b',b}$ and $\omegabb \!\leq\! \omega_{b,b''} \!+\! \omega_{b'',b'} $ for all  $ b, b', b'' \!\in\! \cB $.
\end{itemize}
\end{assumption}
This comes without loss of generality, since for the first property, if two users share exactly the same distribution we can see them as one unique user. For the second property, considering $ \omegatildbb \!=\! \sup_{a \in \cA, \nu \in \overline{\cD}_{\omega}}{\abs{\muab \!-\! \mu_{a,b'}}}  $ leads to the same set of configuration $\overline{\cD}_\omega\!=\! \overline{\cD}_{\omegatild} $ and it holds $ \omegatildbb \!=\! \omegatild_{b',b}$,\, $  \omegatildbb \!\leq\! \omegatild_{b,b''} \!+\! \omegatild_{b'',b'} $. Such a weight matrix $\omega$ naturally induces a metric on $\cB^2$.
\begin{figure}[H]
  \begin{minipage}{.5\textwidth}
\[
\left(\scalemath{0.70}{
\begin{array}{c c c | c c c | c c c}

 \multicolumn{1}{|c}{0} &  & (\alpha) &  &  & \mc{} &  &  &  \\ 
 \multicolumn{1}{|c}{} & \ddots &  &  &  & \mc{} &  & (1) &  \\ 
 \multicolumn{1}{|c}{(\alpha)} &  & 0 &  &  & \mc{} &  &  &  \\ 
\cline{1-6}
  \mc{} &  &  & 0  &  & (\alpha) &  &  &  \\ 
  \mc{} &  &  &   & \ddots &  &  &  &  \\ 
  \mc{} &  &  & (\alpha)  &  & 0 &  &  &  \\ 
\cline{4-9}
  \mc{} &  & \mc{} &   &  &  & 0 &  &  \multicolumn{1}{c|}{(\alpha)}\\ 
  \mc{} & (1) & \mc{} &   &  &  &  & \ddots &  \multicolumn{1}{c|}{}\\ 
  \mc{} &  & \mc{} &   &  &  & (\alpha) &  &  \multicolumn{1}{c|}{0}\\ 
\end{array}}
\right)
\]
  \end{minipage}%
  \begin{minipage}{.5\textwidth}
    \centering
    \includegraphics[scale=0.35]{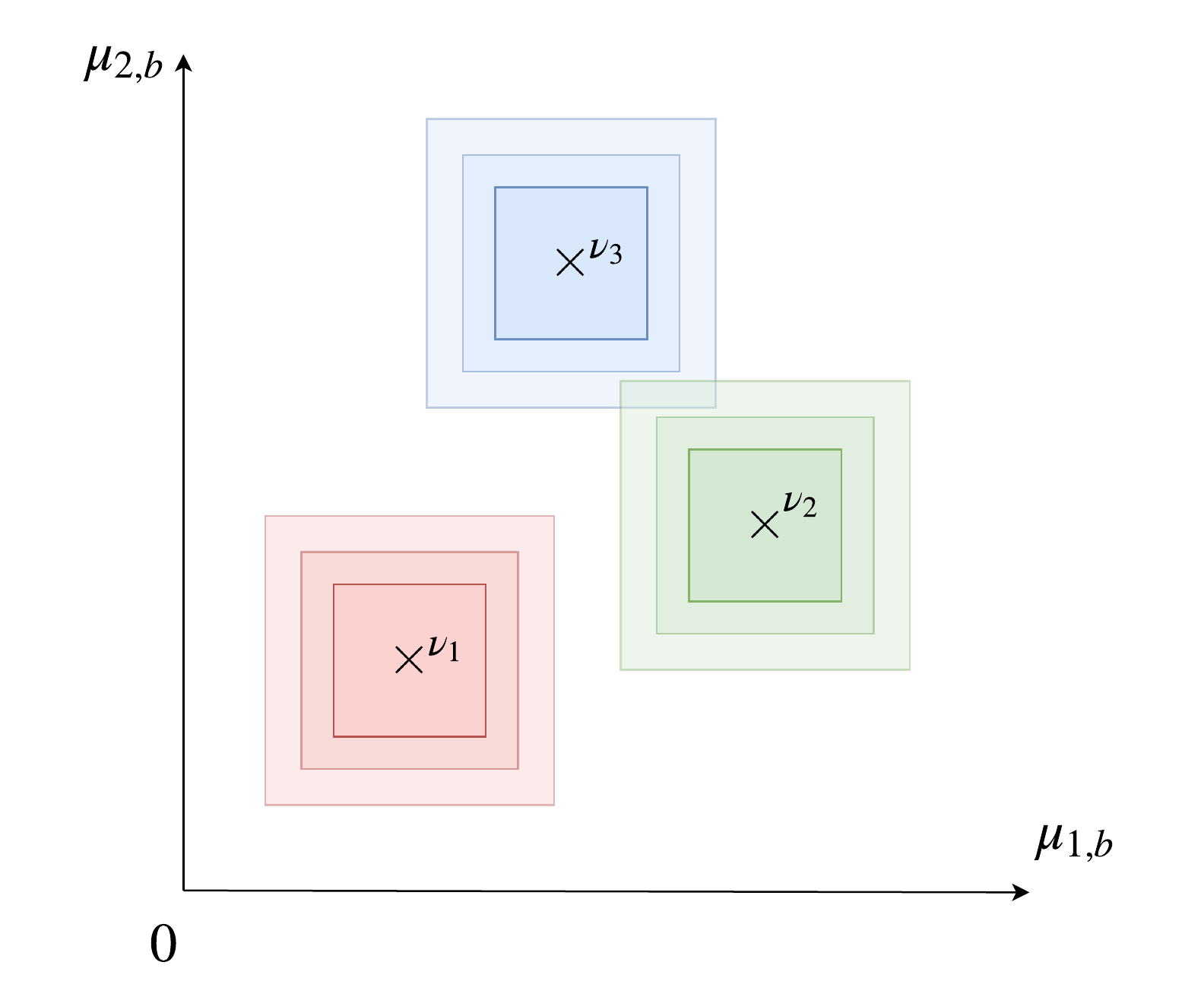}
  \end{minipage}
  \caption{A cluster structure. \textit{Left:} weight matrix of three clusters. \textit{Right:} range of two armed bandit problems included in clusters with center $(\nu_1,\nu_2,\nu_3)$ for various $\alpha$ (the larger $\alpha$ the lighter and larger the box). The value $\alpha\!=\!0$ corresponds to perfect clusters.}
  \label{fig:example}
\end{figure}

\subsection{Notations}
Let $\mu_b^\star \!=\! \max_{a \in \cA }{\muab} $ denote the optimal mean for user $b$ and $\cA_b^\star \!=\! \argmax_{a \in \cA}{\muab}$ the set of optimal arms for this user. We define for a couple $(a,b)\!\in\! \cA\!\times\!\cB$ its gap $\Delta_{a,b} \!=\! \mu^\star_b \!-\! \mu_{a,b}$. Thus a couple is optimal if its gap is equal to zero and sub-optimal if it is  positive. We denote by $\cO^\star \!=\! \Set{ (a,b) \!\in\! \cA\!\times\!\cB \!: \muab \!=\! \mu_b^\star  }$ the set of optimal couples.
Thanks to the chain rule we can rewrite the regret as follows:
\[
R(\nu,T) = \sum_{a,b \in \cA\times \cB} \deltaab\, \Esp_\nu\big[\Nab(T)\big]\,, \quad
\text{where }\Nab(t) = \sum_{s=1}^t \ind_{\big\{(a_s,b_s)=(a,b)\big\}}
\]
is the number of pulls of arm $a$ and user $b$ up to time $t$.

\section{Regret Lower bound}
\label{sec:lower_bounds}
In this subsection, we establish lower bounds on the regret for the structure $\overline{\cD}_\omega$. In order to obtain non trivial lower bounds we consider, as in the classical bandit problem, strategies that are \emph{consistent} (uniformly good) on $\overline{\cD}_\omega$.

\begin{definition}[Consistent strategy]\label{def:consistent}
A strategy is consistent on $\overline{\cD}_\omega$ if for all configuration $\nu\!\in\! \overline{\cD}_\omega$, for all sub-optimal couple $(a,b)$, for all $ \alpha \!>\!  0$,
\[
\limT\Esp_\nu \!\left[\dfrac{N_{a,b}(T)}{N_b(T)^\alpha}\right] = 0\,.
\]
\end{definition}
\begin{remark}
When $\cB \!=\! \{b\}$, $N_b(T) \!=\! T$ and we recover the usual notion of consistency \citep{lai1985asymptotically}.
\end{remark}
Before we provide below the lower bound on the cumulative regret, let us give some intuition:
To that end, we fix a configuration $\nu \!\in\! \overline{\cD}_\omega$  and a sub-optimal couple $(a,b)$. One key observation is that if for all $ b' \!\in\! \cB $ it holds $ \mu_b^\star \!-\! \mu_{a,b'} \!<\! \omega_{b,b'} $, this means we can form an environment 
$\widetilde \nu \!\in\! \overline{\cD}_\omega$ such that $\widetilde \mu_{a',b'}\!=\!\mu_{a',b'}$ for all couples $(a',b')$ except $(a,b)$, and such that $\widetilde \mu_{a,b}$  satisfies $\mu_b^\star \!<\! \widetilde \mu_{a,b} \!<\! \mu_{a,b'} \!+\! \omega_{b,b'}$. Indeed, in this novel environment,
$\widetilde \mu_{a,b} \!-\! \widetilde \mu_{a,b'}\!<\!\omega_{b,b'}$ still holds but $(a,b)$ is now optimal. Hence, we can transform the sub-optimal couple $(a,b)$ in an optimal one without moving the means of the other users. Thanks to this remarkable property, and introducing  $ \kl(\mu|\mu') $ to denote the Kullback-Leibler divergence between two Bernoulli distributions $\textnormal{Bern}(\mu)$ and $\textnormal{Bern}(\mu')$ with the usual conventions, one can prove then that for all consistent strategy
\[
\liminfT \Esp_\nu\!\left[\dfrac{\Nab(T)}{\log\!\left(N_b(T)\right)}\right] \geq \dfrac{1}{\klab}\,,
\]
which is the lower bound that we get \emph{without} graph structure. This suggests that only the users $ b' $ such that $ \mu_b^\star \!-\! \mu_{a,b'} \!>\! \omega_{b,b'} $ provide information about the behavior of user $ b $.
This justifies to introduce for each couple $(a,b)$ the fundamental set
 \[
 \Bab \coloneqq \Set{b' \in \cB:\ \mu_{a,b'} < \mu_b^\star - \omegabb}\,.
 \]
It is also convenient to introduce its frontier, denoted
 $\partial \cB_{a,b}\!\coloneqq\!\Set{b' \!\in\! \cB\!: \mu_{a,b'} \!=\! \mu_b^\star \!-\! \omegabb}$.
Now, in order to report the lower bounds  while avoiding tedious technicalities, we slightly restrict the set $\overline{\cD}_\omega$. To this end, we introduce  the set
\beqan
\cD_\omega \coloneqq  \Set{\nu \in \overline{\cD}_\omega:\ \forall (a,b) \in \cA\times\cB,\, \partial\cB_{a,b}=\emptyset} \,.
\eeqan
This definition is justified since the closure of $\cD_\omega $ is indeed $\overline{\cD}_\omega$ (we only remove from $\overline{\cD}_\omega$ sets of empty interior). We can now state the following proposition.
\begin{proposition}[Graph-structured lower bounds on pulls]\label{prop:LB_pull} Let us consider a consistent strategy. Then, for all configuration $ \nu\!\in\!\cD_\omega$, almost surely it holds for all sub-optimal couple $(a,b) \!\notin\!\cO^\star$,
\beq 
\label{eq:lb_couple}
\limT \!N_b(T) \!<\! + \infty \quad \textnormal{or} \quad 
\liminfT{ \frac{1}{\log\!\left(N_b(T)\right)} \sum\limits_{b'\in \Bab}\!{\klof{\mu_{a,b'}}{\mu_b^\star - \omegabb} \!\, N_{a,b'}(T)}} \geq 1\,.
\eeq
\end{proposition}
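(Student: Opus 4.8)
The plan is to prove \eqref{eq:lb_couple} by a change-of-measure argument that I then upgrade to an almost-sure statement through a peeling (geometric-grid) argument. Fix a consistent strategy, a configuration $\nu\in\cD_\omega$ and a sub-optimal couple $(a,b)\notin\cO^\star$; since the first alternative in \eqref{eq:lb_couple} takes care of the case $\limT N_b(T)<+\infty$, I argue on the event $\{\limT N_b(T)=+\infty\}$ and show the second alternative there. I first record that, because $\omega_{b,b}=0$ and $\mu_{a,b}<\mu_b^\star$, the user $b$ itself always lies in $\Bab$, so the sum in \eqref{eq:lb_couple} is never empty and contains the ``classical'' term $\klof{\mu_{a,b}}{\mu_b^\star}\,\Nab(T)$.

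The core device is an alternative configuration making $(a,b)$ optimal at minimal cost. For small $\eta>0$ I would let $\widetilde\nu^\eta$ agree with $\nu$ everywhere except on arm $a$ at the informative users, setting $\widetilde\mu_{a,b'}=\mu_b^\star-\omega_{b,b'}+\eta$ for each $b'\in\Bab$ (in particular $\widetilde\mu_{a,b}=\mu_b^\star+\eta$), so that in $\widetilde\nu^\eta$ the couple $(a,b)$ becomes strictly optimal for user $b$ and every previously optimal arm $a^\star\in\cA_b^\star$ turns sub-optimal. The first technical check is that $\widetilde\nu^\eta\in\overline\cD_\omega$ for $\eta$ small: for two informative users $c,c'$ the shifts telescope, $\widetilde\mu_{a,c}-\widetilde\mu_{a,c'}=\omega_{b,c'}-\omega_{b,c}$, which is bounded by $\omega_{c,c'}$ by the triangle inequality of Assumption~\ref{ass:metric}; for a pair involving a non-informative user $c'\notin\Bab$ the strict inequality $\mu_{a,c'}>\mu_b^\star-\omega_{b,c'}$ guaranteed by $\partial\Bab=\emptyset$ (i.e. $\nu\in\cD_\omega$) leaves a positive slack that absorbs $\eta$. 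By continuity of $\kl$, the per-pull cost $\klof{\mu_{a,b'}}{\mu_b^\star-\omega_{b,b'}+\eta}$ converges to $\klof{\mu_{a,b'}}{\mu_b^\star-\omega_{b,b'}}$ as $\eta\to0$.

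Since $\nu$ and $\widetilde\nu^\eta$ differ only on the couples $(a,b')$, $b'\in\Bab$, the log-likelihood ratio up to time $T$ is $\Lambda_T^\eta=\sum_{b'\in\Bab}S_{a,b'}(T)$ with $S_{a,b'}(T)=\sum_{s\le T,\,(a_s,b_s)=(a,b')}\log\bigl(\nu_{a,b'}(X_s)/\widetilde\nu^\eta_{a,b'}(X_s)\bigr)$, and for any $E\in\mathcal{F}_T$ one has $\Pr_{\widetilde\nu^\eta}(E)=\Esp_\nu[e^{-\Lambda_T^\eta}\ind_E]$. I use the contrast event $A_T=\{\sum_{a^\star\in\cA_b^\star}N_{a^\star,b}(T)>N_b(T)/2\}$: consistency on $\nu$ makes the optimal arms dominate, so $A_T$ holds eventually almost surely, whereas consistency on $\widetilde\nu^\eta$ (where those arms are sub-optimal) gives $\Pr_{\widetilde\nu^\eta}(A_T)=o\bigl(N_b(T)^{\alpha-1}\bigr)$ for every $\alpha>0$ by Markov's inequality. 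Taking $E=A_T\cap\{\Lambda_T^\eta\le(1-\epsilon)\log N_b(T)\}$ and bounding $e^{-\Lambda_T^\eta}\ge N_b(T)^{-(1-\epsilon)}$ on $E$ yields $\Pr_\nu\bigl(A_T\cap\{\Lambda_T^\eta\le(1-\epsilon)\log N_b(T)\}\bigr)\le N_b(T)^{1-\epsilon}\,\Pr_{\widetilde\nu^\eta}(A_T)=o\bigl(N_b(T)^{\alpha-\epsilon}\bigr)$, which we make small by choosing $\alpha<\epsilon$.

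The main obstacle is turning these vanishing probabilities into an almost-sure $\liminf$, since they need not be summable over all $T$ (in the uncontrolled scenario $N_b(T)$ may grow only like $T^{\beta_b}$). I would resolve this by peeling along a geometric grid $n_k=\lfloor\rho^k\rfloor$ ($\rho>1$) of the values of $N_b$: evaluating the previous bound at the stopping times where $N_b$ crosses $n_k$ gives terms $o(n_k^{\alpha-\epsilon})$ that are summable in $k$, so Borel--Cantelli yields that almost surely, for $k$ large, $\Lambda_{\tau_{n_k}}^\eta>(1-\epsilon)\log n_k$ on $A_{\tau_{n_k}}$. Interpolation to all $T$ uses the monotone proxy $\Sigma_T^\eta:=\sum_{b'\in\Bab}\klof{\mu_{a,b'}}{\mu_b^\star-\omega_{b,b'}+\eta}\,N_{a,b'}(T)$ together with the strong law $S_{a,b'}(T)/N_{a,b'}(T)\to\klof{\mu_{a,b'}}{\mu_b^\star-\omega_{b,b'}+\eta}$ on couples pulled infinitely often (those pulled finitely often contribute a bounded, hence negligible, amount to both $\Lambda^\eta$ and $\Sigma^\eta$); this forces $\Sigma_T^\eta\to\infty$ and $\Lambda_T^\eta/\Sigma_T^\eta\to1$, so the additive $\log\rho$ error washes out in the ratio and $\liminfT \Sigma_T^\eta/\log N_b(T)\ge1-\epsilon$ almost surely on $\{N_b(T)\to\infty\}$. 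Letting $\eta\to0$ and $\epsilon\to0$ along countable sequences and invoking continuity of $\kl$ then gives \eqref{eq:lb_couple}.
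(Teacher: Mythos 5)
Your overall route is the paper's route: you build the same maximal confusing instance $\nutild$ (shifting only $\mu_{a,b'}$ for $b'\in\Bab$ to $\mu_b^\star-\omegabb+\eta$, with the same metric-property and $\partial\cB_{a,b}=\emptyset$ checks), apply the same change of measure through the log-likelihood ratio restricted to those couples, and invoke consistency under $\nutild$ (where all arms $a'\neq a$ are sub-optimal for user $b$) to kill the resulting expectation. The genuine divergence is in the concentration machinery: the paper controls $\Pr_\nu(\Omega_T\cap E_T^c)$ by turning the likelihood ratio into exponential super-martingales and applying Doob's maximal inequality uniformly over the lattice $\cN_{a,b}$ of admissible count vectors, whereas you use the strong law along pull counts plus a geometric peeling of the values of $N_b$ and Borel--Cantelli. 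Your peeling is in fact a more careful treatment of the passage from vanishing probabilities to an almost-sure $\liminf$, a step the paper compresses into a single inequality.

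There is, however, a concrete gap in your version: the claim that $A_T=\{\sum_{a^\star\in\cA_b^\star}N_{a^\star,b}(T)>N_b(T)/2\}$ \emph{holds eventually almost surely} under $\nu$. Consistency (Definition~\ref{def:consistent}) only gives $\Esp_\nu[N_{a',b}(T)/N_b(T)^\alpha]\to 0$ at deterministic horizons, hence convergence in probability of $A_T^c$ to a null event, with no rate; this yields neither almost-sure occurrence of $A_T$ eventually nor summability of $\Pr_\nu(A_{\tau_{n_k}}^c)$ along your grid, so Borel--Cantelli cannot close this half of the argument. Since your conclusion after Borel--Cantelli is only ``$\Lambda^\eta_{\tau_{n_k}}>(1-\epsilon)\log n_k$ \emph{on} $A_{\tau_{n_k}}$,'' the whole chain breaks without it. The paper avoids needing any such event: on the bad event $\Omega_T$ itself one has $N_{a,b}(T)<c'\log\!\left(N_b(T)\right)$, hence $N_b(T)\leq c'\log\!\left(N_b(T)\right)+\sum_{a'\neq a}N_{a',b}(T)$, so the factor $N_b(T)^{1-\alpha}$ produced by the change of measure is dominated \emph{inside a single expectation under $\nutild$} by quantities that consistency controls directly. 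You should replace $A_T$ by this device (it also repairs the related issue that you apply Markov/consistency at the random times $\tau_{n_k}$ rather than at deterministic horizons). With that substitution, your peeling and SLLN interpolation go through and deliver the stated almost-sure bound.
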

We then introduce the notion of \textit{Pareto-optimality} based on the lower bounds given in  Proposition~\ref{prop:LB_pull}.
\begin{definition}[Pareto-optimality]
A strategy is asymptotically Pareto-optimal if for all $\nu \!\in\! \cD_\omega$, 
\[
\forall a\in\cA, \quad \limsupT \min\limits_{b:\, (a,b) \notin \cO^\star}\frac{1}{\log\!\left(N_b(T)\right)} \sum\limits_{b'\in \Bab}\!\klof{\mu_{a,b'}}{\mu_{b}^\star - \omega_{b,b'} }\!\, N_{a,b'}(T) \leq  1\,,
\]
with the convention $\min_\emptyset \!=\! -\infty$.
\end{definition}

\begin{remark}
This  proposition reveals that the set 
$\Bab\!=\!\Set{b' \!\in\! \cB\!: \mu_{a,b'} \!<\! \mu_b^\star \!-\! \omegabb}$ plays a crucial role in the graph structure.
The definition of $\cD_\omega$ excludes specific situations when there exists $b,b'\!\in\!\cB$, $a\!\in\!\cA$, $ \omega_{b,b'}\!=\!\mu_{b}^\star\!-\!\mu_{a,b'}\!=\! \deltaab\!+\!\mu_{a,b}\!-\!\mu_{a,b'}$, that belong to the close set $\overline{\cD}_\omega$. Extending the result to $\overline{\cD}_\omega$ seems possible but at the price of clarity due to the need to handle degenerate cases.
\end{remark}
In order to derive an asymptotic lower bound on the regret from these asymptotic lowers bounds, we have to characterize the growth of the counts $\left(N_b(\cdot)\right)_{b \in \cB}$. 
\begin{corollary}[Lower bounds on the regret]\label{cor:LB_regret}Let us consider a consistent strategy and sequences of users with log-frequencies $\beta \!\in\! (0,1]^\cB$ independently of the considered configuration in $\cD_\omega$. Then, for all configuration $ \nu \!\in\! \cD_\omega$ 
\beqannumb
\liminfT \dfrac{R(\nu,T)}{\log(T)} \geq C_\omega^\star(\beta,\nu):=
& \min&  \bigg\{\sum\limits_{a,b \notin \cO^\star}  \deltaab\,n_{a,b}:\ n \in \Real_+^{\cA\times\cB}  \label{eq:lb_regret_structure}
\\
& s.t. & \forall (a,b) \notin \cO^\star, \quad \sum\limits_{b' \in \Bab}\!\klof{\mu_{a,b'}}{\mu_b^\star - \omegabb}\!\, n_{a,b'} \geq \beta_b\,\bigg\}\,. \nonumber
\eeqannumb
\end{corollary}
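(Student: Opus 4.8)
The plan is to combine the almost-sure pull lower bounds of Proposition~\ref{prop:LB_pull} with the regret decomposition, convert the resulting almost-sure bound into a bound on the expected regret through Fatou's lemma, and recognise the normalised pull counts as an asymptotically feasible point of the linear program defining $C_\omega^\star(\beta,\nu)$, so that its objective is bounded below by the optimal value.

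First I would start from $R(\nu,T) = \sum_{(a,b)\notin\cO^\star}\deltaab\,\Esp_\nu[\Nab(T)]$ and set $Y_T \coloneqq \frac{1}{\log(T)}\sum_{(a,b)\notin\cO^\star}\deltaab\,\Nab(T)\geq 0$, so that $\Esp_\nu[Y_T]=R(\nu,T)/\log(T)$. Since $\beta\in(0,1]^\cB$ forces $N_b(T)=\Theta(T^{\beta_b})\to\infty$, the alternative $\limT N_b(T)<+\infty$ in Proposition~\ref{prop:LB_pull} is excluded, leaving almost surely, for every sub-optimal couple $(a,b)$, the bound $\liminfT \frac{1}{\log(N_b(T))}\sum_{b'\in\Bab}\klof{\mu_{a,b'}}{\mu_b^\star-\omegabb}N_{a,b'}(T)\geq 1$. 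Multiplying by $\log(N_b(T))/\log(T)\to\beta_b>0$ from Definition~\ref{def:unifspread} yields, almost surely and simultaneously over the finitely many couples, $\liminfT \frac{1}{\log(T)}\sum_{b'\in\Bab}\klof{\mu_{a,b'}}{\mu_b^\star-\omegabb}N_{a,b'}(T)\geq \beta_b$. Writing $n^T\coloneqq (\Nab(T)/\log(T))_{a,b}\in\Real_+^{\cA\times\cB}$, this states exactly that $n^T$ asymptotically satisfies the constraints of the program with right-hand sides $\beta_b$.

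Next I would fix $\varepsilon>0$. By the liminf bounds above and finiteness of $\cA\times\cB$, there is almost surely a random $T_0$ such that for all $T\geq T_0$ and all sub-optimal $(a,b)$ one has $\sum_{b'\in\Bab}\klof{\mu_{a,b'}}{\mu_b^\star-\omegabb}\,n^T_{a,b'}\geq \beta_b-\varepsilon$; that is, $n^T$ is feasible for the program with right-hand sides $\beta-\varepsilon$ (understood componentwise). By definition of the optimal value this gives $Y_T=\sum_{(a,b)\notin\cO^\star}\deltaab\,n^T_{a,b}\geq C_\omega^\star(\beta-\varepsilon,\nu)$ for all $T\geq T_0$, whence $\liminfT Y_T\geq C_\omega^\star(\beta-\varepsilon,\nu)$ almost surely. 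Here I would use that any sub-optimal $(a,b)$ satisfies $b\in\Bab$, since $\omega_{b,b}=0$ and $\mu_{a,b}<\mu_b^\star$ give $\klof{\mu_{a,b}}{\mu_b^\star}>0$ as a constraint coefficient of $n_{a,b}$; this keeps the program feasible with finite value for every right-hand side, so that $C_\omega^\star(\cdot,\nu)$ is a finite, convex, nondecreasing function of the right-hand side, hence continuous at the interior point $\beta\in(0,1]^\cB$. Letting $\varepsilon\to0^+$ therefore yields $\liminfT Y_T\geq C_\omega^\star(\beta,\nu)$ almost surely, and since $Y_T\geq0$, Fatou's lemma gives $\liminfT R(\nu,T)/\log(T)=\liminfT\Esp_\nu[Y_T]\geq\Esp_\nu[\liminfT Y_T]\geq C_\omega^\star(\beta,\nu)$, which is the claim.

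The main obstacle I anticipate is the passage from the per-constraint liminf bounds to a single asymptotically feasible point, together with the continuity of the program value in its right-hand side: one must argue that the finitely many constraints can be made to hold simultaneously up to slack $\varepsilon$ from some random time on, and that shrinking $\varepsilon\to0$ recovers exactly $C_\omega^\star(\beta,\nu)$ rather than some strictly smaller limit. Both points rest on the observation that $\Bab$ always contains $b$, which guarantees feasibility with finite value and hence continuity of the value function on $(0,1]^\cB$; the remaining steps are a routine combination of Fatou's lemma with the log-frequency normalisation.
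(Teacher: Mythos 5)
Your proof is correct, but it follows a genuinely different route from the paper's. The paper proves the corollary by extracting a subsequence $(T_k)$ along which $R(\nu,T_k)/\log(T_k)$ attains the liminf, then using a Cantor diagonal argument to extract a further subsequence along which every $\Esp_\nu[\Nab(T_k')]/\log(T_k')$ converges to some limit $N_{a,b}$; Proposition~\ref{prop:LB_pull} combined with the log-frequency assumption is then invoked to show the limit vector $(N_{a,b})$ is feasible for the linear program, so that $\sum N_{a,b}\deltaab \geq C_\omega^\star(\beta,\nu)$. You instead work pathwise: you show that the random normalised counts $\Nab(T)/\log(T)$ are, almost surely and for all $T$ large enough, feasible for the $\varepsilon$-relaxed program, deduce $\liminfT Y_T \geq C_\omega^\star(\beta-\varepsilon,\nu)$ almost surely, and transfer to expectation via Fatou. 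Your route buys a cleaner handling of the fact that Proposition~\ref{prop:LB_pull} is an almost-sure statement about the random counts (the paper applies it directly to the expectations $\Esp_\nu[N_{a,b'}(T_k')]/\log(N_b(T_k'))$, which mixes a deterministic numerator with a random denominator and is notationally loose); the price is that you must justify continuity of the value function $C_\omega^\star(\cdot,\nu)$ in its right-hand side, which you do correctly — the program is feasible with finite value for every right-hand side because $b\in\Bab$ whenever $(a,b)\notin\cO^\star$ (indeed $\mu_{a,b}<\mu_b^\star-\omega_{b,b}=\mu_b^\star$), so the value function is finite convex, hence continuous, and the $\varepsilon\to0$ limit recovers $C_\omega^\star(\beta,\nu)$. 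The paper sidesteps this continuity issue by taking exact limits along a subsequence, at the cost of the diagonal extraction. Both arguments are valid and of comparable length.
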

Hence such a strategy is asymptotically optimal if for all $\nu \!\in\! \cD_\omega$
\[
\limsupT \dfrac{R(\nu,T)}{\log(T)} \leq C_\omega^\star(\beta,\nu)\,.
\]
\begin{remark}In the previous corollary, log-frequencies $\beta$ may be either strategy dependent or independent. In an \textit{uncontrolled scenario}, $\beta$ is imposed by the setting and does not depend on the followed strategy, while in a \textit{controlled scenario} we consider strategies that impose $\beta \!=\! 1_\cB \!\coloneqq\! (1)_{b \in\cB} $. 
%Then we simply denote $C_\omega^\star(\nu) \!\coloneqq\!C_\omega^\star(1_\cB,\nu)$.
\end{remark}
Like other structured bandit problems (as in \cite{combes2017minimal}, \cite{lattimore2017end}) this lower bound is characterized by a problem-dependent constant $C_\omega^\star(\beta,\nu)$ solution to an \hl{optimization problem}.
%Note that 
In the agnostic case we recover the lower bound of the classical multi-armed bandit problem. Indeed, let us introduce for $\alpha\!\in\![0,1]$ the weight matrix $\omega_\alpha$ where all the weights are equal to $\alpha$ (except for the zero diagonal). $\omega_\alpha$ is the same weight matrix as in Figure~\ref{fig:example} but only for one cluster. Then  when there is no structure  ($\omega\!\equiv\!\omega_1$), we obtain the explicit constant
\begin{equation}
\label{eq:constant_agnostic}
    C_{\omega_1}^\star(\beta,\nu) = \sum_{b \in \cB}\beta_b\!\! \sum_{a \in \cA:\ (a,b)\notin \cO^\star}\frac{\deltaab}{\kl(\mu_{a,b}|\mu_b^\star)}\,,
\end{equation}
that corresponds to solving $|\cB|$ bandit problems in parallel (independently the ones from the others). Thus the graph structure allows to interpolate smoothly between $|\cB|$ independent bandit problems and a unique one when all the users share the same distributions. 
In order to illustrate the gain of information due to the graph structure we plot in Figure~\ref{fig:quotient_c} the expectation $\Esp_{\nu \sim \mathcal{U}(\cD_{\omega_\alpha})}\!\left[ C_{\omega_\alpha}^\star(1_\cB,\nu)/C_{\omega_1}^\star(1_\cB,\nu)\right]$ of the ratio between the constant in the structured case~\eqref{eq:lb_regret_structure} and that in the agnostic case~\eqref{eq:constant_agnostic}, where $\mathcal{U}(\cD_{\omega_\alpha})$ denotes the uniform distribution over $\cD_{\omega_\alpha}$, $\alpha \!\in\! [0,1]$. 
\begin{figure}[H]
    \centering
    \includegraphics[scale=0.7]{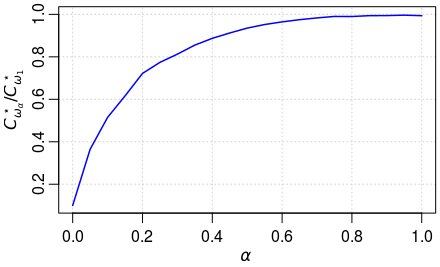}
        \caption{Plot of $\alpha \mapsto \Esp_{\nu \sim 
        \mathcal{U}(\cD_{\omega_\alpha})}\!\left[ C_{\omega_\alpha}^\star(1_\cB,\nu)/C_{\omega_1}^\star(1_\cB,\nu)\right]$  where $\omega_\alpha$ is a matrix where all the weights are equal to $\alpha$ (except for the zero diagonal) and $\nu$ is sampled uniformly at random in $\cD_{\omega_\alpha}$.  }
    \label{fig:quotient_c}
\end{figure}

% \pmd{Cut the graph or solve the "p,1roblem" at 0}
% \oam{I guess this is because at $0$ we get to $\overline{D}_\omega$, not $D_\omega$ ?}
% \pmd{ Yes, I do not really remember, but it is a point cited in the reviews}

\section{\IMED type strategies for Graph-structured Bandits}
\label{sec:imed_algo}
In this section, we present for both the controlled and uncontrolled scenarios, two strategies: \IMEDSstar that matches the asymptotic lower bound of Corollary~\ref{cor:LB_regret} and \IMEDS with a lower computational complexity but weaker guaranty. Both are inspired by the Indexed Minimum Empirical Divergence (\IMED) proposed by \citet{honda2011asymptotically}. The general idea behind this algorithm is to enforce, via a well chosen index, the constraints~\eqref{eq:lb_couple} that appears in the optimization problem~\eqref{eq:lb_regret_structure} of the asymptotic lower bound. These constraints intuitively serve as tests to assert whether or not a couple is optimal. 
\subsection{\IMED type strategies for the \textit{controlled scenario}}
We consider the \textit{controlled scenario} where the sequence of users $(b_t)_{t \geq 1}$ is strategy-dependent and at each time step $t \!\geq\! 1$, the learner has to choose a user $b_t$ and an arm $a_t$, based only on the past.

\subsubsection{The \texorpdfstring{\IMEDS}{TEXT}
strategy.}
We denote by $ \muhatab(t) \!=\! \frac{1}{N_{a,b}(t)} \sum\limits_{s=1}^t{\ind_{\{ (a_s,b_s)= (a, b)\}} X_s } $ if $ \Nab(t) \!>\!0  $, $ 0 $ otherwise,  the empirical mean of the rewards from couple $(a,b)$. Guided by the lower bound~\eqref{eq:lb_couple} we generalize the \IMED index to take into account the graph structure as follows. For a couple $(a,b)$ and at time $t$ we define
\begin{equation}
    \label{eq:def_imed_index_couple}
    I_{a,b}(t) = \Bigg\{\begin{array}{ll}
     \log\!\left(N_{a,b}(t)\right) &\text{if } (a,b) \in \Ohat^\star(t) \\ 
     \sum\limits_{b'\in \Bhat_{a,b}(t)} \! \klof{\muhat_{a,b'}(t)}{\muhat_{b}^\star(t) - \omegabb} \!\,N_{a,b'}(t)  + \log\!\left(N_{a,b'}(t)\right)  &\text{otherwise} 
\end{array} \,,
\end{equation}
where $\muhat_b^\star(t)\! =\! \max_{a\in\cA}\muhatab(t)$ is the current best mean for user $b$, the current set of optimal couple is
\[
\Ohat^\star(t) \coloneqq \bigg\{ (a,b) \in \cA \times \cB:\ \muhatab(t) = \muhat_b^\star(t)   \bigg\}
\]
and the current set of informative users for an empirical sub-optimal couple $(a,b)$ is 
\[
\Bhat_{a,b}(t) \coloneqq \bigg\{b' \in \cB:\ N_{a,b'}(t) > 0 \ad \muhat_{a,b'}(t) < \muhat_b^\star(t) - \omegabb \bigg\} \,.
\]
This quantity can be seen as a transportation cost for ``moving\footnote{This notion refers to the generic proof technique used to derive regret lower bounds. It involves a change-of-measure argument, from the initial configuration in which the couple is sub-optimal to another one chosen to make it optimal.}'' a sub-optimal couple to an optimal one, plus exploration terms (the logarithms of the numbers of pulls). When an optimal couple is considered, the transportation cost is null and only the exploration part remains. Note that, as stated in \citet{honda2011asymptotically}, $I_{a,b}(t)$ is an index in the weaker sense since it is not determined only by samples from the couple $(a,b)$ but also uses empirical means of current optimal arms. We define \IMEDS (Indexed Minimum Empirical Divergence for Graph Structure) to be the strategy consisting of pulling a couple with minimum index in Algorithm~\ref{alg:IMEDS}. It works well in practice, see Section~\ref{sec:numerical_experiments}, and has a low computational complexity (proportional to the number of couples). However, it is known for other structures, see \citet{lattimore2017end}, that such greedy strategy \hl{does not exploit optimally} the structure of the problem. Indeed, at a high level, pulling an apparently sub-optimal couple $(a,b)$ allows to gather information not only about this particular couple but also about other couples due to the structure. In order to attain optimality one needs to find  couples that provide the best trade-off between information and low regret. This is exactly what is done in the optimization problem~\eqref{eq:lb_regret_structure}.

\subsubsection{The \texorpdfstring{\IMEDSstar}{TEXT} strategy}\label{sub:IMEDstar}

In order to address this difficulty we first, thanks to the (weak) indexes, decide whether we need to exploit or explore. 
In the second case, in order to explore optimally according to the graph structure we  solve the optimization problem~\eqref{eq:lb_regret_structure} parametrized by the current estimates of the means and then track the optimal numbers of pulls given by the solution of this problem. More precisely at each round we \hl{choose} but not immediately pull a couple with minimum index
\[
(\aul_t,\bul_t) \in \argmin\limits_{(a,b)\in \cA\times\cB}I_{a,b}(t)\,.
\]
\textbf{Exploitation:} If this couple is currently optimal, $(\aul_t,\bul_t) \!\in\! \Ohat^\star(t)$, we exploit, that is pull this couple. \\
\textbf{Exploration:} Else we explore arm $a_{t+1}\!=\!\aul_t$. To this end, let $\nopt(t)$ be a solution of the empirical version of~$\eqref{eq:lb_regret_structure}$ with $\beta = 1_\cB$, that is
\beqa \label{eq:current_optimization_pb}
 \nopt(t) \in &\argmin\limits_{n \in \Real_+^{\cA\times\cB} }& \bigg\{ \sum\limits_{(a,b) \in \cA\times\cB}  \big(\muhat_b^\star(t)-\muhat_{a,b}(t)\big)\,n_{a,b}    \\
& s.t. & \forall (a,b) \notin \Ohat^\star(t), \Bhat_{a,b}(t) \neq \emptyset: \! \sum\limits_{b' \in \Bhat_{a,b}(t)}\klof{\muhat_{a,b'}(t)}{\muhat_b^\star(t) - \omegabb}\!\, n_{a,b'} \geq 1\bigg\}. \nonumber 
\eeqa
The current optimal numbers of pulls given by 
\beq \label{eq:Nopt}
\Nopt_{a,b}(t)  = \nopt_{a,b}(t) \min\limits_{b' \in \cB} I_{a,b'}(t) \,.
\eeq
We then \textit{track}
\beq \label{tracked user}
b_{t+1} \in \argmax\limits_{b \in \Bhat_{{\aul_t},{\bul_t}}(t)\cup\Set{\bul_t}} \Nopt_{\aul_t,b}(t) - N_{\aul_t,b}(t) \,.
\eeq
Asymptotically, we expect that all the sub-optimal couples are pulled roughly $\log(T)$ times. Therefore, for all sub-optimal couple $(a,b)$, the index $I_{a,b}(T)$ should be of order $\log(T)$. Thus we asymptotically recover in the definition of $\Nopt_{a,b}(\cdot)$ the optimal number of pulls of couple $(a,b)$, that is  $n^\nu_{a,b} \log(T)$ as  suggested in Corollary~\ref{cor:LB_regret}. Finally we pull the selected couple $(a_{t+1}, b_{t+1})$. In order to ensure optimality, however, such a direct tracking of the current optimal number of pulls is still a bit too aggressive and we need to force exploration in \textit{some} exploration rounds. We proceed as follows: when we explore arm $\aul_t$ we automatically pull a couple $(\aul_t,b)$ if its number of pulls $N_{\aul(t),b}$ is lower than the logarithm of the number of time we decided to explore this arm. See Algorithm~\ref{alg:IMEDSstar} for details. This does not hurt the asymptotic optimally because we expect to explore a sub-optimal arm not more than $\log(T)$ times. On the bright side, this is still different than the traditional forced exploration. Indeed, only few rounds are dedicated to exploration thanks to the first selection with the indexes and among them only a logarithmic number will consist of pure exploration: Thus, we expect an overall $\log\log(T)$ rounds of forced exploration. Note also that all the quantities involved in this forced exploration use empirical counters. Putting all together we end up with strategy \IMEDSstar described in Algorithm~\ref{alg:IMEDSstar}.
\paragraph{Comparison with other strategies}\IMEDSstar combines ideas from \IMED introduced by \citet{honda2011asymptotically} and from \OSSB by \citet{combes2017minimal}. More precisely, it generalizes the index from \IMED to the graph structure. From \OSSB it borrows the tracking of the optimal counts given by the asymptotic lower bound (see also \citet{lattimore2017end}) and the way to force exploration sparingly. The main difference with \OSSB is that \IMEDSstar leverages the indexes to deal with the exploitation-exploration trade-off. In particular \IMEDSstar \hl{does not need} to solve at each round the optimization problem~\eqref{eq:lb_regret_structure}. This greatly improves the computational complexity. Also, note that \OSSB requires choosing a tuning parameter that must  be positive to ensure theoretical guarantees but that must be set equal to $0$ to work well in practice. This is not the case for \IMEDSstar that requires no parameter tuning and that works well both in theory and in practice (see Section~\ref{sec:numerical_experiments}).
\begin{figure}[htb]
\centering
\makebox[0pt][c]{%
\begin{minipage}[]{0.45\textwidth}
\small\vspace{-5mm}
\begin{algorithm}[H]
\small
\caption{\IMEDS (controlled scenario)}
\label{alg:IMEDS}
\begin{algorithmic}
\REQUIRE Weight matrix $(\omegabb)_{b,b' \in \cB}  $.
\FOR {$ t = 1 ... T $} 
\STATE Pull $ (a_{t+1},b_{t+1}) \!\in\! \argmin\limits_{ (a,b)\in \cA\times\cB }\!I_{a,b}(t) $
\ENDFOR
\vspace{62.5 mm}
\end{algorithmic}
\end{algorithm}
\end{minipage}
\hfill
\begin{minipage}[]{0.57\textwidth}
\vspace{-5mm}
\begin{algorithm}[H]
\small
\caption{\IMEDSstar (controlled scenario)}
\label{alg:IMEDSstar}
\begin{algorithmic}
\REQUIRE Weight matrix $(\omegabb)_{b,b' \in \cB}  $.
\STATE $\forall a \in \cA, \quad c_a, c_a^+ \leftarrow 1$
\FOR{ For $ t = 1 ... T $}
\STATE Choose $ (\aul_t,\bul_t) \in \argmin\limits_{(a,b)\in \cA\times\cB}I_{a,b}(t) $
\IF{$(\aul_t,\bul_t)\in \Ohat^\star(t)$}
\STATE Choose $ (a_{t+1},b_{t+1}) =  (\aul_t,\bul_t)$
\ELSE
\STATE  Set $a_{t+1} = \aul_t$
\IF{$c_{a_{t+1}} = c_{a_{t+1}}^+ $}
\STATE $c_{a_{t+1}}^+ \leftarrow 2 c_{a_{t+1}}^+   $
\STATE Choose  $b_{t+1} \in \argminb \Nab(t) $ \vspace{-2mm}
\ELSE 
\STATE Choose  $b_{t+1} \!\!\in\!\!\!\!\! \argmax\limits_{b \in \Bhat_{{\aul_t},{\bul_t}}(t)\cup\Set{\bul_t}}\!\!\!\!\!\! \Nopt_{a_{t+1},b}(t) \!-\! N_{a_{t+1},b}(t)  $ \vspace{-2mm}
\ENDIF
\STATE $c_{a_{t+1}} \leftarrow c_{a_{t+1}}  + 1 $
\ENDIF
\STATE Pull $(a_{t+1},b_{t+1})$
\ENDFOR
\end{algorithmic}
\end{algorithm}
\end{minipage}
}%
\vspace{-5mm}
\end{figure}

\subsection{\IMED type strategies for the \textit{uncontrolled scenario}}
In this section, an \textit{uncontrolled scenario} is considered where the sequence of users $(b_t)_{t \geq 1}$ is assumed deterministic and does not depend on the strategy of the learner. We adapt the two previous strategies \IMEDS and \IMEDSstar to this scenario.

\paragraph{\IMEDStwo strategy} At time step $t\!\geq\!1$ the choice of user $b_{t}$ is no longer strategy-dependent but is imposed by the sequence of users $(b_t)_{t\geq1}$ which is assumed to be deterministic in the \textit{uncontrolled scenario}. The learner only chooses an arm to pull $a_{t}$ knowing user $b_{t}$.  We define \IMEDStwo to be the strategy consisting of pulling an arm with minimum index in Algorithm~\ref{alg:IMEDS2} of Appendix~\ref{app:algo_senario2}. \IMEDStwo suffers the same advantages and shortcomings as \IMEDS. It does not exploit optimally the structure of the problem but it works well in practice, see Section~\ref{sec:numerical_experiments}, and has a low computational complexity.

\paragraph{\IMEDSstartwo strategy} In order to explore optimally according to the graph structure in the \textit{uncontrolled scenario}, we also track the optimal numbers of pulls. $\beta$ may be at first glance different from $1_\cB$. This requires some normalizations.  First, for all time step $ t \!\geq\!1$, $\nopt(t)$ now denotes a solution of the empirical version of~$\eqref{eq:lb_regret_structure}$ with $\beta \!=\!(\hat \beta_b(t))_{b \in \cB}$ where $\hat\beta_b(t)\!=\! \!\log\!\left(N_b(t)\right)\!/\log(t)$ estimates log-frequency $\beta_b$ of user $b \!\in\!\cB$. Second, we have to consider normalized indexes $\widetilde I_{a,b}(t) \!=\! I_{a,b}(t)/\hat\beta_b(t)$ for couples $(a,b)\!\in\!\cA\!\times\!\cB $ in order to have $\widetilde I_{a,b}(T) \!\sim\! \log(T)$ as in
the \textit{controlled scenario}.  An additional difficulty is that at a given time step $t\!\geq\!1$, while the indexes indicate to explore, the current tracked user (see Equation~\ref{tracked user}) given is likely to be different from user $b_{t}$ with whom the learner deals. This difficulty is easy to circumvent by postponing and prioritizing the exploration until the learner deals with the tracked user. Priority in exploration phases is given to first delayed forced-exploration and delayed exploration based on solving optimization problem~\eqref{eq:lb_regret_structure}, then exploration based on current indexes (see Algorithm~\ref{alg:IMEDSstar2} in Appendix~\ref{app:algo_senario2}). \IMEDSstartwo corresponds essentially to \IMEDSstar with some delays due to the fact that the tracked and the current users may be different. This has no impact on the optimality of \IMEDSstartwo since log-frequencies of users are enforced to be positive.

\subsection{Asymptotic optimality  of IMED type strategies}
In order to prove the asymptotic optimality of \IMEDSstar we introduce the following mild assumptions on the configuration considered.
\begin{definition}[Non-peculiar configuration]
\label{def:not_pathological}
    A configuration $\nu\!\in\! \cD_\omega$ is non-peculiar if the optimization problem~\eqref{eq:lb_regret_structure} admits a unique solution and each user $ b $ admits a  unique optimal arm $ a_b^\star$.
\end{definition}
%\pmd{The references of the appendix are not the right ones! Is it?\\
%appendix D: Pareto optimality of \IMEDSstar \\
%appendix E: asymptotic optimality of  \IMEDSstar \\
%appendix G: not clear, what do you prove there actually?
%}
In Theorem~\ref{th:asymptotic_optimality_IMEDSstar} we state the main result of this paper, namely, the asymptotic optimality of \IMEDSstar and \IMEDSstartwo. We prove this result for \IMEDSstar in Appendix~\ref{app:proof_main_result} and adapt this proof in Appendix~\ref{app: extended proof} for \IMEDSstartwo. Please refer to Proposition~\ref{prop:upper bounds} (Appendix~\ref{app: IMEDSstar finite time analysis}) for more refined finite-time upper bounds. As a byproduct of this analysis we deduce the Pareto-optimality of \IMEDS and \IMEDStwo stated in Proposition~\ref{th:asymptotic_Pareto_optimality} and proved in Appendix~\ref{app: extended proof}.  
% We are now ready to state the main results of this paper (proved in Appendix~\ref{app: IMEDSstar finite time analysis},\,\ref{app:proof_main_result}).

%\oam{Add that in the controlled scenario, 
% \IMEDSstar indeed induces users with log-frequencies $1_\cB$ ?
%}

\begin{theorem}[Asymptotic optimality]
\label{th:asymptotic_optimality_IMEDSstar}
Both \IMEDSstar and \IMEDSstartwo are consistent strategies. Further, they are asymptotically optimal on the set of non-peculiar configurations, that is, for all $\nu \in \cD_\omega$ non-peculiar, under \IMEDSstar
the sequence of users has log-frequencies $1_\cB$ and 
we have
\[
\limsupT \dfrac{R(\nu,T)}{\log(T)} \leq C_\omega^\star(1_\cB,\nu)\,,
\]
and, under \IMEDSstartwo, assuming a sequence of users with log-frequencies $\beta \!\in\! (0,1]^\cB$, we have 
\[
\limsupT \dfrac{R(\nu,T)}{\log(T)} \leq C_\omega^\star(\beta,\nu)\,.
\]
\end{theorem}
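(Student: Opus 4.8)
The plan is to bound each expected count $\Esp_\nu[\Nab(T)]$ for sub-optimal couples and then recover the constant through the regret decomposition $R(\nu,T) = \sum_{a,b}\deltaab\,\Esp_\nu[\Nab(T)]$. I would organize the argument around a handful of almost-sure events and an asymptotic tracking analysis. The target is to show $\Esp_\nu[\Nab(T)] \le (n^\nu_{a,b}+o(1))\log(T)$ for every sub-optimal $(a,b)$, where $n^\nu$ is the (unique, by non-peculiarity) minimizer of the linear program~\eqref{eq:lb_regret_structure} with $\beta=1_\cB$, while optimal couples absorb the remaining $\Theta(T)$ pulls.

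First I would establish \textbf{sufficient exploration}: almost surely $\Nab(t)\to\infty$ for every couple. The exploitation step pulls the optimal couple of smallest index $\log(\Nab(t))$, hence the least-sampled optimal couple; this balances the pulls so that $N_b(T)=\Theta(T)$ for every $b$, which also establishes the claimed log-frequencies $1_\cB$ and makes the indices of optimal couples grow like $\log(T)$. Any couple with bounded count would then eventually realize a strictly smaller index and be selected by the minimum-index rule; when such a sub-optimal arm $\aul_t$ is chosen, the implicit forced-exploration counters $c_a,c_a^+$ guarantee that at the doubling times of $c_a$ the least-pulled user $\argminb \Nab(t)$ for that arm is sampled. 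Iterating this contradiction argument (in the spirit of the sufficient-exploration lemma of \cite{honda2011asymptotically}) forces $\min_b \Nab(t)\to\infty$ for all arms.

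Second, I would introduce the \textbf{concentration events} on which every empirical mean lies within a vanishing tolerance of its true value once its couple has been sampled enough; these events have complement probabilities summable in $t$ by standard Bernoulli deviation bounds and Borel--Cantelli. Combined with sufficient exploration, on these events and for $t$ large each user's optimal arm is correctly identified, so $\Ohat^\star(t)=\cO^\star$ and $\Bhat_{a,b}(t)=\Bab$. Because $\nu$ is \emph{non-peculiar}, the program~\eqref{eq:lb_regret_structure} has a unique solution, and by continuity of the minimizer of a linear program in its (here converging) coefficients, the empirical solution $\nopt(t)$ converges to $n^\nu$. I would then run the \textbf{index and tracking analysis}: the minimum index $\min_{b'}I_{a,b'}(t)$ driving $\Nopt_{a,b}(t)=\nopt_{a,b}(t)\min_{b'}I_{a,b'}(t)$ behaves like $\log(N_b(t))\sim\log(T)$, so the target counts behave like $n^\nu_{a,b}\log(T)$, and the $\argmax$ tracking of $\Nopt-N$ in~\eqref{tracked user} keeps the realized counts within an $O(1)$ slack of their targets, by the usual tracking-lemma argument (as in \cite{combes2017minimal}). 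Since the forced pure-exploration pulls of each arm occur only at doubling times of its exploration counter, they contribute only $O(\log\log(T))$ extra pulls per arm and do not affect the leading order. Summing the bounds on $\Esp_\nu[\Nab(T)]$ in the regret decomposition gives $\limsupT R(\nu,T)/\log(T)\le C_\omega^\star(1_\cB,\nu)$, and consistency is immediate because each sub-optimal count is $O(\log(T))$ while $N_b(T)=\Theta(T)$.

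The main obstacle is the \textbf{coupling between the index-driven exploration decision and the LP tracking}. Unlike \OSSB, the program is not re-solved every round and the tracking of~\eqref{eq:Nopt} is triggered only on explore rounds, so one must show that these sparse explore rounds are nonetheless frequent and well-distributed enough to drive the realized counts to the target $n^\nu_{a,b}\log(T)$ — in particular that the indices never prematurely classify a sub-optimal couple as optimal, and that the lazy $\log\log(T)$ forced exploration suffices to keep all empirical means accurate without inflating the regret. Controlling this feedback loop, in which the empirical means determine both the program and the indices, which in turn select which couples get explored and thereby refine the empirical means, is the delicate part and is precisely where the finite-time statements of Proposition~\ref{prop:upper bounds} are needed. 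The extension to \IMEDSstartwo then requires only handling the delay between the tracked user and the revealed user $b_t$, together with the normalization of the indices by $\hat\beta_b(t)$; since the log-frequencies $\beta$ are assumed positive, this delay is absorbed asymptotically and yields $\limsupT R(\nu,T)/\log(T)\le C_\omega^\star(\beta,\nu)$.
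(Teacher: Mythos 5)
Your proposal follows essentially the same route as the paper's proof: strategy-based empirical lower/upper bounds derived from the index rule, a set of "reliable" times whose complement has finite expected cardinality (the paper bounds $\Esp_\nu[\abs{\cT_{\epsilon,\gamma}^c}]$ directly rather than via Borel--Cantelli, but this is the same mechanism), continuity of the LP minimizer under the non-peculiarity assumption so that $\nopt(t)\to n^\nu$, an $O(\log\log T)$ accounting of the forced-exploration rounds via the doubling counters, and the regret decomposition over sub-optimal couples. You also correctly identify the crux — controlling the feedback loop between the indexes, the sparse explore rounds, and the tracked targets — which is exactly what the paper's Proposition~\ref{prop:upper bounds} and the stopping-time argument in Lemma~\ref{asymptotic upper bound} resolve.
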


\begin{proposition}[Asymptotic Pareto-optimality]
\label{th:asymptotic_Pareto_optimality}
Both \IMEDS and \IMEDStwo are consistent strategies. Further, they are asymptotically Pareto-optimal on the set of non-peculiar configurations, that is, under \IMEDS or \IMEDStwo, for all $\nu \in \cD_\omega$ non-peculiar,
\[
\forall a \in\cA, \quad \limsupT \min\limits_{b:\ (a,b) \notin \cO^\star}\frac{1}{\log\!\left(N_b(T)\right)} \sum\limits_{b'\in \Bab}\kl(\mu_{a,b'}|\mu_{b}^\star - \omega_{b,b'} ) N_{a,b'}(T) \leq  1\,.
\]
\end{proposition}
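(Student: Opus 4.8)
The plan is to leverage the fact that both \IMEDS and \IMEDStwo pull, at every round, a couple of \emph{minimal} index, so that the transportation cost appearing in the Pareto-optimality statement is, up to lower-order terms, exactly the index $I_{a,b}$ of the couple at hand. I would first settle consistency: the finite-time control of sub-optimal pulls (Proposition~\ref{prop:upper bounds}) gives $\Esp_\nu[N_{a,b}(T)]=O(\log N_b(T))$ for every sub-optimal $(a,b)$, whence $\Esp_\nu[N_{a,b}(T)/N_b(T)^\alpha]\to 0$ for all $\alpha>0$, i.e.\ consistency in the sense of Definition~\ref{def:consistent}. I would then record the almost-sure concentration facts underlying the rest: optimal arms are pulled $\Theta(N_b(T))$ times whereas sub-optimal couples are pulled only logarithmically often, so empirical means concentrate, and for a non-peculiar $\nu\in\cD_\omega$ (one optimal arm per user and empty frontier $\partial\cB_{a,b}$, hence strict defining inequalities) one has eventually almost surely $\Ohat^\star(t)=\cO^\star$ and $\Bhat_{a,b}(t)=\Bab$ for every couple. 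Note that $b\in\Bab$ for each sub-optimal $(a,b)$ since $\omega_{b,b}=0$, so these informative sets are nonempty.

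The heart of the argument is a ``last sub-optimal pull'' analysis carried out arm by arm. Fix $a\in\cA$ and a horizon $T$. If only finitely many empirically sub-optimal couples $(a,\cdot)$ are ever pulled, then every count $N_{a,b'}(T)$ with $(a,b')\notin\cO^\star$ is bounded, the cost $\sum_{b'\in\Bab}\klof{\mu_{a,b'}}{\mu_b^\star-\omegabb}N_{a,b'}(T)$ is $O(1)=o(\log N_b(T))$, and the claim holds trivially. Otherwise, let $\tau=\tau(T)\le T$ be the last time up to $T$ at which an empirically sub-optimal arm-$a$ couple $(a,b^\circ)$ is pulled; then $\tau\to\infty$ with $T$, and by definition of $\tau$ none of the genuinely sub-optimal couples $(a,b')$ with $b'\in\cB_{a,b^\circ}$ is pulled after $\tau$, so their counts freeze: $N_{a,b'}(T)=N_{a,b'}(\tau)$. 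Since $(a,b^\circ)$ attains the minimal index at $\tau$ and each optimal couple $(a_d^\star,d)$ has index $\log N_{a_d^\star,d}(\tau)=\log\tau+O(1)$ (because $N_{a_d^\star,d}(\tau)=\Theta(\tau)$), we obtain $I_{a,b^\circ}(\tau)\le\min_{(c,d)\in\cO^\star}I_{c,d}(\tau)\le\log\tau\,(1+o(1))$. Expanding the sub-optimal index, replacing $\Bhat_{a,b^\circ}(\tau)$ by $\cB_{a,b^\circ}$ and $\muhat_{a,b'}(\tau)$ by $\mu_{a,b'}$, and absorbing the $\log N_{a,b'}$ summands and the finitely-pulled couples into the $o(\log\tau)$ error, this yields $\sum_{b'\in\cB_{a,b^\circ}}\klof{\mu_{a,b'}}{\mu_{b^\circ}^\star-\omega_{b^\circ,b'}}N_{a,b'}(T)\le\log\tau\,(1+o(1))$. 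In the controlled scenario $\log\tau\le\log T=\log N_{b^\circ}(T)(1+o(1))$, so dividing exhibits a sub-optimal user $b^\circ$ with normalised cost $\le 1+o(1)$; as the minimum over sub-optimal $b$ is no larger, $\limsupT\min_{b:(a,b)\notin\cO^\star}(\cdots)\le 1$.

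For the uncontrolled scenario and \IMEDStwo the same scheme applies, the sole difference being that users grow at heterogeneous rates $N_b(T)=\Theta(T^{\beta_b})$; I would normalise the index comparison by the estimated log-frequencies $\hat\beta_b(t)=\log N_b(t)/\log t$, exactly as when passing to \IMEDSstartwo, so that the output of the last-pull argument is again measured against $\log N_{b^\circ}(T)$ rather than a single global $\log T$. I expect the main obstacle to be precisely this quantitative conversion: turning ``$(a,b^\circ)$ has minimal index at $\tau$'' into the clean inequality on the \emph{true}-mean transportation cost requires simultaneously (i) ensuring $\tau$ is large enough that the concentration events and the identifications $\Ohat^\star=\cO^\star$, $\Bhat_{a,b}=\Bab$ have all occurred, (ii) controlling the empirical-to-true substitution inside $\kl(\cdot|\cdot)$ uniformly over the frozen informative couples, separating those with diverging counts (where $\muhat_{a,b'}(\tau)\to\mu_{a,b'}$) from those with bounded counts (whose aggregate contribution is $O(1)$), and (iii) comparing $\log\tau$ with $\log N_{b^\circ}(T)$ — trivial when all frequencies coincide but needing the normalisation above in the uncontrolled case. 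The remainder is routine bookkeeping around the minimal-index property.
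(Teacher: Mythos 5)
Your architecture is essentially the paper's: at times when an empirically sub-optimal couple of arm $a$ is pulled, the minimal-index property bounds the empirical transportation cost by the logarithm of the count of an optimal couple; concentration converts empirical quantities into true ones; the counts of the sub-optimal couples freeze (up to finitely many bad rounds) after the last such pull; and one divides by $\log N_b(T)$. However, there is a genuine gap in how you justify the ``recorded concentration facts'', namely that eventually almost surely $\Ohat^\star(t)=\cO^\star$ and $\Bhat_{a,b}(t)=\Bab$. Your justification --- optimal couples are pulled $\Theta(N_b(T))$ times while sub-optimal ones only logarithmically often, hence empirical means concentrate --- is circular: the logarithmic control of sub-optimal pulls is essentially the statement being proved, and the linear growth of optimal pulls presupposes that the algorithm correctly identifies the optimal couples, which is exactly what the concentration step is supposed to deliver. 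Hoeffding-type concentration (Lemma~\ref{imedstar concentration}) only controls couples whose count is at least a fraction $\gamma$ of the count of the couple currently pulled; the dangerous event is that the true optimal arm of some user is \emph{under}-estimated precisely because it has stopped being pulled. Ruling this out requires an ingredient absent from your sketch: the strategy-based \emph{empirical lower bound} (Lemma~\ref{imeds empirical lower bounds}), which says that whenever anything is pulled every couple's index is at least $\log N_{a_{t+1},b_{t+1}}(t)$; combined with the self-normalized large-deviation bound of Lemma~\ref{imedstar large deviation}, this yields (via the decomposition $\cT_{\epsilon,\gamma}^c\setminus\cE_{\epsilon,\gamma}^c\subset\Lambda_{\epsilon_\nu}$ of Lemma~\ref{lem:decomposition_T_epsilon}) that misidentification occurs only finitely often in expectation. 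Without this, ``eventually $\Ohat^\star(t)=\cO^\star$'' is unsupported, and the rest of your argument has nothing to stand on.

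A secondary, fixable weakness: you compare the index of the pulled couple with that of an \emph{arbitrary} optimal couple, getting $I_{a,b^\circ}(\tau)\le\log\tau$, and must then establish $\log\tau\le(1+o(1))\log N_{b^\circ}(T)$, i.e.\ the log-frequency-one property of the user sequence induced by \IMEDS --- an extra fact to prove, and the reason you need normalized indexes by hand in the uncontrolled case. The paper's route compares instead with the empirically optimal arm \emph{of the same user} $b^\circ$, whose index is $\log N_{a',b^\circ}(\tau)\le\log N_{b^\circ}(\tau)\le\log N_{b^\circ}(T)$, so the division is immediate (and for \IMEDStwo the $\hat\beta_{b^\circ}$ normalization cancels in this within-user comparison). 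Adopting that comparison removes item (iii) of your list of obstacles entirely.
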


% We prove asymptotic optimality of $\IMEDSstar$ in Appendix~\ref{app:proof_main_result} and provide more refined finite-time upper bounds in Proposition~\ref{prop:upper bounds} (Appendix~\ref{app: IMEDSstar finite time analysis}). Optimality of other strategies is discussed in Appendix~\ref{app: extended proof}.

\paragraph{Discussion} Removing forced exploration remains the most challenging task for structured bandit problems. 
In the context of this structure, forced exploration would have been to use criteria like "if $N_{a,b}(T) \!<\! \hat{c}_{a,b}(T) \log(T)$, then pull couple $(a,b)$" for some constants $\hat{c}_{a,b}(T)$ that depends on the minimization problem coming from the lower bound and where $\hat{c}_{a,b}(T) \log(T)$ can be interpreted as an estimator of the theoretical asymptotic lower bound on the numbers of pulls of couple $(a,b)$. In stark contrast, in \IMEDSstar there is no forced exploration in choosing the  arm to explore and, in choosing the user to explore, the used criteria is more intrinsic as it reads "if $N_{a,b}(T) \!<\! \hat{c}_{a,b}(T) I_a(T) $, then pull couple $(a,b)$", where $I_{a}(T)\!\sim\! \log(T) $ but really depends on $(N_{a,b}(T))_{(a,b) \notin \mathcal{C}^\star}$. Thus, the used criteria are not asymptotic, and do not dependent on the time $t$ but on the current numbers of pull of sub-optimal arms. Since theoretical asymptotic lower bounds on the numbers of pulls are  significantly larger than the current numbers of pulls in finite horizon (see Figure~\ref{fig:experiments}), \IMEDSstar strategy is
also expected to behave better than strategies based on usual (conservative) forced exploration.  Although entirely removing forced exploration would be nicer, in \IMEDSstar, forced exploration is only done in a sparing way.

\section{Numerical experiments}
\label{sec:numerical_experiments}
In this section, we compare empirically the following strategies introduced beforehand: \IMEDS and \IMEDSstar
described respectively in Algorithms~\ref{alg:IMEDS},\,\ref{alg:IMEDSstar}, \IMEDStwo and \IMEDSstartwo described respectively in Algorithms~\ref{alg:IMEDS2},\,\ref{alg:IMEDSstar2} and the baseline \IMED by \citet{honda2011asymptotically} that does not exploit the structure. We compare these strategies on two setups, each with $|\cB|\!=\! 10$ users and $|\cA|\! =\!5$ arms. For the \textit{uncontrolled scenario} we consider the round-robin sequence of users. As expected the  strategies leveraging the graph structure perform better than the baseline \IMED that does not exploit it. Furthermore, the plots suggest that 
 \IMEDS and \IMEDSstar (respectively  \IMEDStwo and \IMEDSstartwo) perform similarly in practice.

\paragraph{Fixed configuration }Figure~\ref{fig:experiments}: \textit{Left} -- For these experiments  we investigate these strategies on a \emph{fixed} configuration. The weight matrix $\omega$ and the configuration $\nu\!\in\! \cD_\omega$ are given in Appendix~\ref{app:detail_numerical_exp}. This enables us to plot also the asymptotic lower bound on the regret for reference: We plot the unstructured lower bound (\verb|LB_agnostic|) in dashed red line, and the structured lower bound (\verb|LB_struct|) in dashed blue line.

\paragraph{Random configurations }Figure~\ref{fig:experiments}: \textit{Right} -- In these experiments we average regrets over random configurations. We proceed as follows: At each run we sample uniformly at random  
a weight matrix $\omega$  and then sample uniformly at random a configuration $\nu \!\in\!\cD_\omega$. 
\begin{figure}[H]
    \centering
     \includegraphics[scale=0.62]{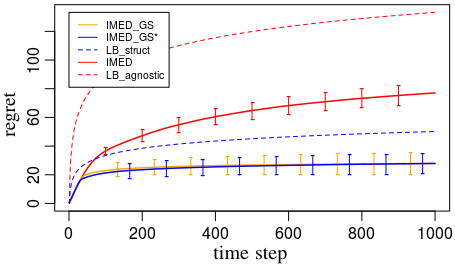} 
     \hfill \includegraphics[scale=0.62]{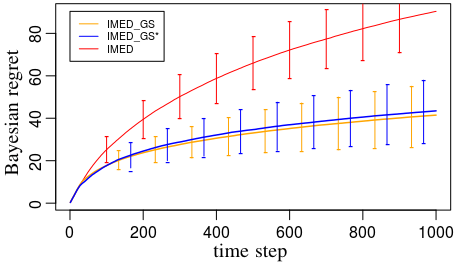}
     \hfill
     \includegraphics[scale=0.62]{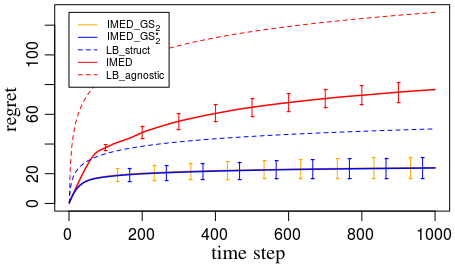} 
     \hfill\includegraphics[scale=0.62]{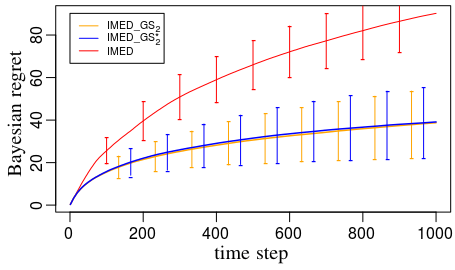}
\vspace{-5mm}\noindent
     \caption{  Regret approximated over $1000$ runs. \textit{Top:} \textit{controlled scenario}. \textit{Bottom:}  \textit{uncontrolled scenario},  $(b_t)_{t\geq1}$ is the round-robin sequence of users. \textit{Left:} Fixed configuration. \textit{Right:} Random configurations.}
 \label{fig:experiments}
\end{figure}

\vspace{-5mm}\noindent

~\\
Additional experiments in Appendix~\ref{app:detail_numerical_exp} confirm that both \IMEDS and \IMEDSstar induce sequences of users with log-frequencies all equal to $1$.

\acks{This work was supported by CPER Nord-Pas de Calais/FEDER DATA Advanced data science and
technologies 2015-2020, the French Ministry of Higher Education and Research, the French Agence
Nationale de la Recherche (ANR) under grant ANR-16- CE40-0002 (project BADASS), and Inria
Lille – Nord Europe.}

\newpage

\bibliography{biblio}

\newpage
\appendix

\tableofcontents

\section{Notations and reminders}\label{notations}
For $ \nu \in \cD_\nu$, we define
\[
\epsilon_\nu \coloneqq \min\limits_{\substack{(a,b)\notin \cO^\star, b'\in \cB \\ \mu_{a,b'}- \mu_b^\star - \omegabb \neq 0 }}\Set{ \dfrac{\abs{\mu_{a,b'}- \mu_b^\star - \omegabb}}{4}, \dfrac{\muab}{4}, \dfrac{1-\mu_b^\star}{4} } \,.
\]
Then, there exists $ \alpha_\nu : \Real_+^\star \to \Real_+^\star $ such that $ \lim\limits_{\epsilon\to0}\alpha_\nu(\epsilon) = 0 $ and such that for all $ 0 < \epsilon < \epsilon_\nu $, for all $(a,b) \notin \cO^\star$, for all $b' \in \cB$,
{\small 
\[
\mu_{a,b'} \neq \mu_b^\star - \omegabb   \quad \Rightarrow \quad  \dfrac{\klabb}{1+\alpha_\nu(\epsilon)}\!\leq\! \kl(\mu_{a,b'} \pm \epsilon |\mu_b^\star - \omegabb \pm \epsilon )\!\leq\! (1+\alpha_\nu(\epsilon))\klabb  \,.
\]
}
~\\We also introduce the following constant of interest
$$E_\nu \coloneqq 6\e\,\maxab \left(1- \frac{\log(1-\muab - \epsilon_\nu)}{\log(1-\muab)}\right)^{-1}\left(1-\e^{-\!\left(1- \frac{\log(1-\muab - \epsilon_\nu)}{\log(1-\muab)}\right)\kl(\muab|\muab - \epsilon_\nu)}\right)^{-3} \,.$$
Lastly, for all couple $(a,b) \in \cA\times\cB$, for all $n \geq 1$, we consider the stopping times $$\tau_{a,b}^n \coloneqq \inf{ \Set{t \!\geq\! 1\!: N_{a,b}(t)\! =\! n} } $$ 
and define $$ \muhatab^n \!\coloneqq\! \muhatab(\tau_{a,b}^n)  \,.$$

\section{Proof related to the regret lower bound (Section~\ref{sec:lower_bounds})}
\label{app:proof_lb}
In this section we regroup the proofs related to the lower bounds.

\subsection{Almost sure asymptotic lower bounds  under consistent strategies }
In this section we prove Proposition~\ref{prop:LB_pull}. 

~\\ Let us consider a  consistent strategy on $\cD_\omega$. Let $ \nu  \in \cD_\omega $ and let us consider  $(a,b) \notin \cO^\star $. We show that almost surely $\lim\limits_{T \to \infty}N_b(T) = +\infty$ implies
$$  \liminfT \dfrac{1}{\log\!\left(N_b(T)\right)}  \sum\limits_{b' \in \Bab} N_{a,b'}(T) \kl(\mu_{a,b'}|\mu_b^\star - \omegabb) \geq 1 \,$$
where $ \Bab = \Set{b' \in \cB :\ (a,b') \notin \cO^\star \ad \mu_{a,b'} < \mu_b^\star - \omegabb} $.

\begin{proof}  Let us consider $ \nutild \in \cD_\omega$, a maximal confusing distribution for the sub-optimal couple $(a,b)$ such that $a$ is the unique optimal arm of user $b$ for $\widetilde \nu$, defined as follows: \begin{itemize}[label= -, itemsep = 0 cm]
\item $\forall a' \neq a,\ b' \in \cB, \quad \widetilde{\mu}_{a',b'} = \mu_{a',b'} $
\item $\forall  b'' \notin \cB_{a,b}, \quad \widetilde{\mu}_{a,b''} = \mu_{a,b''} $
\item $\forall  b' \in \cB_{a,b}, \quad \widetilde{\mu}_{a,b'} = \mu_b^\star - \omegabb + \epsilon $
\end{itemize}
where $0 < \epsilon < \epsilon_0 = \min\limits_{b' \in \cB } \abs{\mu_b^\star - \omegabb - \mu_{a,b'}} $. Our assumption on $\cD_\omega\subset\overline{\cD}_\omega $ ensures that $\epsilon_0 > 0$. Note that $\epsilon$ is chosen in such a way that for all $ b',b'' \in \cB$,\ $ \maxa\abs{\mutild_{a,b'} - \mutild_{a,b''}} \leq \omega_{b',b''} $. Indeed we have:
~\\- for $b',b'' \in \Bab : \quad \abs{\mutild_{a,b'} - \mutild_{a,b''}} = \abs{\omega_{b,b'} - \omega_{b,b''}} \leq \omega_{b',b''} $
~\\- for $ b', b'' \notin \Bab : \quad \abs{\mutild_{a,b'} - \mutild_{a,b''}} = \abs{\mu_{a,b'} - \mu_{a,b''}}\leq \omega_{b',b''}   $
~\\ - for $b' \in \Bab$ and $b'' \notin \Bab$, we have $ \mutild_{a,b'} - \mutild_{a,b''} = \mu_b^\star - \omegabb + \epsilon - \mu_{a,b''} $. Since in this case $ b' \in \Bab$ it implies $\mu_{a,b'} \leq  \mu_b^\star - \omega_{b,b'} $ and since $ b''\notin \Bab :  \mu_{a,b''} \geq \mu_b^\star - \omega_{b,b''} + \epsilon_0 $. Therefore on one hand we get
$$ \mu_b^\star - \omegabb + \epsilon - \mu_{a,b''} \geq \mu_{a,b'}  - \mu_{a,b''} \geq - \omega_{b',b''} \,,$$
and on the other hand
$$ \mu_b^\star - \omegabb + \epsilon - \mu_{a,b''} \leq \mu_b^\star - \omegabb + \epsilon - (\mu_b^\star - \omega_{b,b''} + \epsilon_0) = \epsilon - \epsilon_0 + \omega_{b,b''} - \omega_{b,b'} \leq \omega_{b',b''}  \,.$$

~\\Actually, we can choose $ 0 < \epsilon < \epsilon_\nu $ so that :
$$ \forall b' \in \Bab,\ \kl(\mu_{a,b'}|\widetilde{\mu}_{a,b'}) \leq (1+\alpha_\nu(\epsilon))\kl(\mu_{a,b'}|\mu_b^\star - \omegabb)  .$$
We refer to Appendix~\ref{notations} for the definitions of $\epsilon_\nu $ and $\alpha_\nu(\cdot)$. Note that  $\alpha_\nu(\cdot)$ is such that $ \lim\limits_{\epsilon \to 0} \alpha_\nu(\epsilon) = 0 $.

~\\Let $ 0 < c < 1 $ .We will show that almost surely $\lim\limits_{T \to \infty}N_b(T) = + \infty$ implies
$$    \liminfT \dfrac{1}{\log\!\left(N_b(T)\right)}  \sum\limits_{b' \in\Bab} N_{a,b'}(T) \kl(\mu_{a,b'}|\mutild_{a,b'}) \geq c \,.$$
We start with the following inequality
\beqan
\lefteqn{\Pr_\nu\!\left(   \liminfT \dfrac{1}{\log\!\left(N_b(T)\right)}  \sum\limits_{b' \in\Bab} N_{a,b'}(T) \kl(\mu_{a,b'}|\mutild_{a,b'}) < c,\ \limT N_b(T) = \infty \right)}\\
&&\leq \liminfT \Pr_\nu\!\left(\dfrac{1}{\log\!\left(N_b(T)\right)}  \sum\limits_{b' \in\Bab} N_{a,b'}(T) \kl(\mu_{a,b'}|\mutild_{a,b'}) < c,\ \limT N_b(T) = \infty \right)\,. 
\eeqan
Let us consider an horizon $ T \geq 1$ and let us introduce the event
$$ \Omega_T = \Set{ \sum\limits_{ b' \in \Bab} N_{a,b'}(T) \kl(\mu_{a,b'}| \mutild_{a,b'}) < c \log\!\left(N_b(T)\right),\ \limT N_b(T) = \infty} .$$

~\\We want to provide an upper bound on $ \Pr_\nu(\Omega_T) $ to ensure $ \limT\!\Pr_\nu(\Omega_T) = 0 $. We start by taking advantage of the following lemma.
\begin{lemma}[Change of measure]For every  event $\Omega$ and random variable $Z$ both measurable with respect to $\nu$ and $\nutild$,
\[
\Pr_\nu(\Omega\cap E) = \Esp_{\nutild}\!\left[ \dfrac{\textnormal{ d}\nu}{\textnormal{ d}\nutild}(\psi) \ind_{\Set{\Omega\cap E }}\right] \leq \Esp_{\nutild}\!\left[ e^{Z}\ind_{\Set{\Omega }}\right] 
\]
where  $ E = \Set{ \log\left (\dfrac{\textnormal{ d}\nu}{\textnormal{ d}\nutild}(\psi)\right) \leq Z } $ and  $\psi=((a_t,b_t),X_t)_{t = 1 .. T} $ is the sequence of pulled couples and rewards.

\end{lemma}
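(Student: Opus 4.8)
The plan is to recognize this as the standard likelihood-ratio change of measure underlying essentially all bandit lower bounds, and to split it into two independent steps: an exact identity coming straight from the definition of the Radon--Nikodym derivative, followed by a deterministic truncation bound that exploits the definition of the event $E$.

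First I would make precise the law on trajectories. The trajectory $\psi = ((a_t,b_t),X_t)_{t=1}^T$ lives in a product space, and both $\nu$ and $\nutild$ induce laws on it through the \emph{same} learner's strategy (which selects $(a_t,b_t)$ from the past and is configuration-independent) composed with the respective reward kernels. Writing out the joint density, the policy factors are identical under the two configurations and cancel, leaving
\[
\dfrac{\textnormal{d}\nu}{\textnormal{d}\nutild}(\psi) = \prod_{t=1}^T \dfrac{\textnormal{d}\nu_{a_t,b_t}}{\textnormal{d}\nutild_{a_t,b_t}}(X_t)\,.
\]
Since every Bernoulli mean lies in $(0,1)$, each factor is well defined and strictly positive, so the two laws are mutually absolutely continuous and this Radon--Nikodym derivative exists. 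The defining property of the Radon--Nikodym derivative then gives, for any event $A$ measurable with respect to both laws,
\[
\Pr_\nu(A) = \Esp_{\nutild}\!\left[\dfrac{\textnormal{d}\nu}{\textnormal{d}\nutild}(\psi)\,\ind_{\Set{A}}\right]\,,
\]
and specializing to $A = \Omega\cap E$ yields the claimed equality.

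For the inequality I would simply invoke the definition of $E$: on $E$ one has $\log\!\big(\tfrac{\textnormal{d}\nu}{\textnormal{d}\nutild}(\psi)\big)\le Z$, hence $\tfrac{\textnormal{d}\nu}{\textnormal{d}\nutild}(\psi)\le e^{Z}$ pointwise. Multiplying by the nonnegative indicator and then bounding $\ind_{\Set{\Omega\cap E}}\le\ind_{\Set{\Omega}}$ (legitimate because $e^{Z}\ge 0$) gives
\[
\Esp_{\nutild}\!\left[\dfrac{\textnormal{d}\nu}{\textnormal{d}\nutild}(\psi)\,\ind_{\Set{\Omega\cap E}}\right] \le \Esp_{\nutild}\!\left[e^{Z}\ind_{\Set{\Omega\cap E}}\right] \le \Esp_{\nutild}\!\left[e^{Z}\ind_{\Set{\Omega}}\right]\,,
\]
which closes the chain.

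The only genuine subtlety, and the natural candidate for the ``hard part,'' is the measure-theoretic bookkeeping in the first step: justifying carefully that the strategy's contribution to the joint density coincides under $\nu$ and $\nutild$ so that it cancels, and that mutual absolute continuity holds along every realizable trajectory. Once the reward distributions are fixed to be Bernoulli with means in $(0,1)$ this is routine, and the remaining truncation argument is entirely elementary.
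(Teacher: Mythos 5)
Your proof is correct and follows the standard change-of-measure argument; the paper states this lemma without a separate proof, but the only nontrivial ingredient you use --- the factorization $\frac{\textnormal{d}\nu}{\textnormal{d}\nutild}(\psi)=\prod_{t=1}^T\frac{\textnormal{d}\nu_{a_t,b_t}}{\textnormal{d}\nutild_{a_t,b_t}}(X_t)$ with the policy terms cancelling --- is exactly what the paper invokes a few lines later when controlling $\Pr_\nu(\Omega_T\cap E_T^c)$. The exact identity from the Radon--Nikodym definition plus the pointwise bound $\frac{\textnormal{d}\nu}{\textnormal{d}\nutild}(\psi)\le e^{Z}$ on $E$ and $\ind_{\Set{\Omega\cap E}}\le\ind_{\Set{\Omega}}$ is the intended argument.
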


~\\Let $ \alpha \in (0,1) $ and let us introduce the event
$$ E_T = \Set{ \log\left (\dfrac{\textnormal{ d}\nu}{\textnormal{ d}\nutild}(\psi)\right) \leq (1-\alpha) \log\!\left(N_b(T)\right)} .$$
Then we can decompose the probability as follows 
$$ \Pr_\nu(\Omega_T) = \Pr_\nu(\Omega_T\cap E_T) + \Pr_\nu(\Omega_T\cap E_T^c)  \leq \Esp_{\widetilde \nu}\!\left[ N_b(T)^{1-\alpha}\ind_{\Set{\Omega_T }}\right] + \Pr_\nu(\Omega_T\cap E_T^c) $$
Now, we control successively $ \Esp_{\widetilde \nu}\!\left[ N_b(T)^{1-\alpha}\ind_{\Set{\Omega_T }}\right] $  and $ \Pr_\nu(\Omega_T\cap E_T^c)  $ and show that they both tend to $0$ as $T $ tends to $\infty$.

\subsubsection{ \texorpdfstring{$\Esp_{\widetilde \nu}\!\left[ N_b(T)^{1-\alpha}\ind_{\Set{\Omega_T }}\right]   $}{TEXT} tends to \texorpdfstring{$0$}{TEXT} when \texorpdfstring{$T$}{TEXT} tends to infinity}
We first provide an upper bound on $\ind_{\Set{\Omega_T}} $ as follows, denoting $c' = c/\kl(\mu_{a,b'}|\mutild_{a,b'})$,
\beqan
\Omega_T 
&\subset& \Set{N_{a,b}(T) < c' \log\!\left(N_b(T)\right),\ \limT N_b(T) = \infty} \\
&=& \Set{N_{b}(T) < c' \log\!\left(N_b(T)\right) + \sum\limits_{a' \neq a} N_{a',b}(T),\ \limT N_b(T) = \infty}  \,.
\eeqan
Thus, we have
\[
\ind_{\Set{\Omega_T }} \leq c'\ind_{\Set{N_b(T) \geq 1,\ N_b(T) \to \infty}} \dfrac{\log\!\left(N_b(T)\right)}{N_b(T)} + \sum\limits_{a'\neq a } \dfrac{N_{a',b}(T)}{N_b(T)}
\]
Considering $f_\alpha\!: x \!\geq\! 1 \!\mapsto\! \log(x)/x^\alpha$, we have $f_\alpha \!\leq\! \e^{-1}/\alpha$. Then,  the dominated convergence theorem ensures
\[ 
\Esp_{\widetilde \nu}\!\left[ \ind_{\Set{N_b(T) \geq 1,\ N_b(T) \to \infty}} \dfrac{\log\!\left(N_b(T)\right)}{N_b(T)^\alpha} \right] = o(1) \,.
\]
Furthermore, since the considered strategy is assumed consistent we know that for $a' \neq a$, since $a'$ is a sub-optimal arm for user $b$ and configuration $\widetilde \nu$, 
$$ \Esp_{\nutild}\!\left[\dfrac{N_{a,b'}(T)}{N_b(T)^\alpha}\right] = o(1) \,,$$
therefore we get 
$$ \Esp_{\widetilde \nu}\!\left[ N_b(T)^{1-\alpha}\ind_{\Set{\Omega_T }}\right] = o(1) .$$

\subsubsection{\texorpdfstring{$\Pr_{\nu}(\Omega_T\cap E_T^c)$}{TEXT} tends to \texorpdfstring{$0$}{TEXT} when \texorpdfstring{$T$}{TEXT} tends to infinity}
For each time $t = 1,\ldots,T$, the reward $ X_t  $ is sampled independently from the past and according to $ \nu_{a_t,b_t} $. Hence the likelihood ratio rewrites
$$ \dfrac{\textnormal{ d}\nu}{\textnormal{ d}\nutild}(\psi) = \prod_{t = 1}^T\dfrac{\textnormal{ d}\nu_{a_t,b_t}}{\textnormal{ d}\nutild_{a_t,b_t}}(X_t)  $$
where, for all $(a,b) \in \cA\times\cB$ and for all $ x \in \Set{0,1} $, we have : $  \dfrac{\textnormal{ d}\nu_{a,b}}{\textnormal{ d}\nutild_{a,b}}(x) = \dfrac{\muab^x (1-\muab)^{1-x}}{\mutild_{a,b}^x (1-\mutild_{a,b})^{1-x}}  $ .
~\\ Thus, since for all $ b' \notin \Bab,\ \muab = \mutild_{a,b}$, the log-likelihood ratio is

$$ \log\left (\dfrac{\textnormal{ d}\nu}{\textnormal{ d}\nutild}(\psi)\right) = \sum\limits_{b' \in \Bab}\sum_{t=1}^T \ind_{\Set{ (a_t,b_t) = (a,b') }} \log\left(\dfrac{\textnormal{ d}\nu_{a,b'}}{\textnormal{ d}\nutild_{a,b'}}(X_t)\right) \,.$$
~\\
Let us introduce, for $(a,b) \in \cA\times\cB$,  $ X_{a,b}^n =  X_{ \tau_{a,b}^n } $ where $ \tau_{a,b}^n = \min\Set{t \geq 1 \st N_{a,b}(t) = n }$. Note that the random variables $ \tau_{a,b}^n$ are predictable stopping times, since $ \Set{ \tau_{a,b}^n  = t } $ is measurable with respect to the filtration generated by $ ((a_1,b_1), X_1, ..., (a_{t-1},b_{t-1}), X_{t-1} ) $.  
Hence we can rewrite the event $E_T$
$$ E_T = \Set{ \sum\limits_{b' \in \Bab}\sum\limits_{t = 1}^T \ind_{\Set{(a_t,b_t) =(a,b)}} \dfrac{\textnormal{ d}\nu_{a,b'}}{\textnormal{ d}\nutild_{a,b'}}(X_t) \leq (1-\alpha) \log\!\left(N_b(T)\right) }   $$
and, since  $ \Omega_T = \Set{ \sum\limits_{ b' \in \Bab} N_{a,b'}(T) \kl(\mu_{a,b'}| \mutild_{a,b'}) < c \log\!\left(N_b(T)\right),\ \limT N_b(T) = \infty}$, we have 
\beqan
\Omega_T\cap E_T^c &\subset  \bigg\{ \exists (n_{b'})_{b' \in \Bab}: \ \sum\limits_{ b' \in \Bab } n_{b'} \kl(\mu_{a,b'}| \mutild_{a,b'}) < c \log\!\left(N_b(T)\right),\ \limT N_b(T) = \infty&\\
&\textnormal{ and } \sum\limits_{b' \in \Bab}\sum\limits_{n = 1 .. n_{b'}}  \dfrac{\textnormal{ d}\nu_{a,b'}}{\textnormal{ d}\nutild_{a,b'}}(X_{a,b'}^n) > (1-\alpha) \log\!\left(N_b(T)\right) &\bigg\}\,.
\eeqan
For $ b' \in \Bab$ and $ n \geq 1$, let us consider $ Z_{b'}^n = \dfrac{\textnormal{ d}\nu_{a,b'}}{\textnormal{ d}\nutild_{a,b'}}(X_{a,b'}^n) $. Then $ Z_{b'}^n$ is positive and bounded by $B_{b'} =  \dfrac{1}{\mutild_{a,b'}(1 - \mutild_{a,b'})} $, with mean $ \Esp_\nu[Z_{b'}^n] = \kl(\mu_{a,b'}|\mutild_{a,b'}) $ . Furthermore, the  random variables $Z_{b'}^n $, for $b'\in \Bab$ and $ n \geq 1$, are independent.
Thus, it holds
$$ \Omega_T\cap E_T^c \subset \Set{ \max\limits_{(n_b') \in \cN_{a,b}} \sum\limits_{b' \in \Bab}\sum\limits_{n = 1 .. n_{b'}} \!\!\!\!\! Z_{b'}^n - \Esp_\nu[Z_{b'}^n]  > \left(\dfrac{1-\alpha}{c} - 1 \right)c \log\!\left(N_b(T)\right) ,\  N_b(T) \!\to\! \infty \!}, $$
where 
$$ \cN_{a,b} \coloneqq \Set{(n_{b'})_{b' \in \Bab}:\ \sum\limits_{b' \in \Bab} n_{b'} \kl(\mu_{a,b'}|\mutild_{a,b'}) < c \log(T)  }  .  $$
In the following, we apply Doob's maximal inequality. For $ b' \in \Bab$ and $\lambda > 0 $, let us introduce the super-martingale
$$ M_n^{b'} = \exp\!\left( \lambda \sum\limits_{k = 1}^n  \big(Z_{b'}^k - E [Z_{b'}^k]\big) - n \lambda^2 \dfrac{B_{b'}^2}{8} \right).$$
Then noting that 
$$ \dfrac{\sum\limits_{ b' \in \Bab } \lambda^2 n_{b'}\frac{B_{b'}^2}{8}}{c\log(T)}   < \dfrac{\sum\limits_{ b' \in \Bab } \lambda^2 n_{b'}\frac{B_{b'}^2}{8}}{\sum\limits_{ b' \in \Bab } n_{b'} \kl(\mu_{a,b'}| \mutild_{a,b'} )}   \leq \lambda^2 \dfrac{ \max\limits_{b'\in \Bab}B_{b'}^2}{8\min\limits_{b'\in \Bab}\kl(\mu_{a,b'}| \mutild_{a,b'} )}\,,  $$
we obtain  
\beqan  \Omega_T\cap E_T^c &\subset& \Set{ \max\limits_{(n_b') \in \cN_{a,b}} \prod\limits_{b' \in \Bab} M_{n_b'}^{b'} > T^{\Big[ \lambda (\frac{1 - \alpha}{c} - 1 ) - \lambda^2 \frac{\max_{ b' \in \Bab } B_{b'}^2}{8\min_{b'\in \Bab}\kl(\mu_{a,b'}| \mutild_{a,b'} )}\Big] c } ,\  N_b(T)\!\to\!\infty }  \\
&\subset& \Set{\exists b' \in \Bab:\  \max\limits_{  n \leq  n_{\max}  } M_n^{b'} > N_b(T)^\gamma,\ N_b(T) \!\to\!\infty }\,,
\eeqan
where $ n_{\max} = \frac{c \log(T)}{\min\limits_{b'\in \Bab}\kl(\mu_{a,b'}| \mutild_{a,b'} )} $ and $ \gamma= \Big[ \lambda (\frac{1 - \alpha}{c} - 1 ) - \lambda^2 \frac{\max_{ b' \in \Bab } B_{b'}^2}{8\min_{b'\in \Bab}\kl(\mu_{a,b'}| \mutild_{a,b'} )}\Big] \dfrac{c}{\abs{\Bab}}   $. In order to have $\gamma > 0$, we impose:
\\ - $ 0 < \alpha < 1-c $ \quad (this implies $\frac{1-\alpha}{c} - 1 > 0$)
~\\ - $ \lambda \in \argmax\limits_{\lambda' \geq 0}\Set{ \lambda' (\dfrac{1 - \alpha}{c} - 1 ) - {\lambda'}^2 \frac{\max_{ b' \in \Bab } B_{b'}^2}{8\min_{b'\in \Bab}\kl(\mu_{a,b'}| \mutild_{a,b'} )} } > 0 $.
~\\Thus for $A > 0 $, we have
\beqan
\Pr_\nu(\Omega_T\cap E_T^c) 
&\leq& \sum\limits_{b' \in \Bab}\Pr_\nu\!\left(\max\limits_{  n \leq  n_{\max}  } M_n^{b'} > N_b(T)^\gamma ,\ N_b(T) \!\to\!\infty \right) \quad \textnormal{(Union bound)} \\
&\leq& \sum\limits_{b' \in \Bab}\Pr_\nu\!\left( N_b(T)^\gamma \leq A,\ N_b(T) \!\to\!\infty\right) + \Pr_\nu\!\left(\max\limits_{  n \leq  n_{\max}  } M_n^{b'} > A \right)  \\
& \leq & \abs{\Bab} \Pr_\nu\!\left( N_b(T)^\gamma \leq A,\ N_b(T) \!\to\!\infty\right) + \sum\limits_{b' \in \Bab} \dfrac{\Esp_\nu[M_0^{b'}]}{A} \quad \textnormal{(Doob's maximal inequality)} \\
&=& \abs{\Bab} \Pr_\nu\!\left( N_b(T)^\gamma \leq A,\ N_b(T) \!\to\!\infty\right) + \dfrac{\abs{\Bab}}{A} \,.
\eeqan
Furthermore, we have
\beqan
\limT \Pr_{\widetilde \nu}\!\left( N_b(T)^\gamma 
\leq A,\ N_b(T) \!\to\!\infty \right) 
&\leq&  \Pr_{\widetilde \nu}\!\left(\limsupT \left(N_b(T)^\gamma \leq A\right),\ N_b(T) \!\to\!\infty \right) \\
&\leq& \Pr_{\widetilde \nu}\!\left( \limsupT N_b(T) < \infty,\ N_b(T) \!\to\!\infty \right) = 0 \,.
\eeqan
Thus we have shown
\[
\forall A > 0,\quad \limsupT\Pr_\nu(\Omega_T\cap E_T^c) \leq \dfrac{\abs{\Bab}}{A} \,,
\]
that is $\Pr_\nu(\Omega_T\cap E_T^c) = o(1)$.

\end{proof}

\subsection{Asymptotic lower bounds on the regret}
Here, we explain how we obtain the lower bounds on the regret given in Corollary~\ref{cor:LB_regret}.

\begin{proof}[Proof of Corollary~\ref{cor:LB_regret}.] Let us consider a  consistent strategy on $\cD_\omega$ and let $\nu \in \cD_\omega$. Let $(T_k)_{k\in\Nat}$ be a sub-sequence such that 
\[
\liminf_{T\to\infty}\frac{R(T,\nu)}{\log(T)}= \lim_{k\to \infty} \frac{R(T_k,\nu)}{\log(T_k)}\,.
\]
We assume that this limit is finite otherwise the result is straightforward. This implies in particular that for all $(a,b)\notin \cO^\star$
\[
\limsup_{k\to\infty} \frac{\Esp_\nu\left[N_{a,b}(T_k)\right]}{\log(T_k)}<+\infty\,.
\]
By Cantor's diagonal argument there exists an extraction of $(T_k)_{k\in\Nat}$ denoted by $(T_k')_{k\in\Nat}$ such that for all $(a,b)\notin \cO^\star$, there exist $N_{a,b}\not \in \cO^\star$ such that 
\[
\lim_{k'\to\infty}  \frac{\Esp_\nu\left[N_{a,b}(T_{k}')\right]}{\log(T_{k}')} = N_{a,b}\,.
\]
Hence we get 
\[
\liminf_{T\to\infty}\frac{R(T,\nu)}{\log(T)} = \sum_{(a,b)\notin \cO^\star} N_{a,b}\Delta_{a,b}\,.
\]
But thanks to Proposition~\ref{prop:LB_pull} we have for all $(a,b)\not\in \cO^\star$, since user $b$ has a log-frequency $\beta_b$,
\begin{align*}
&\sum\limits_{b'\in \Bab}\kl(\mu_{a,b'}|\mu_{b}^\star - \omega_{b,b'} ) N_{a,b'} \\
&= \lim_{k\to \infty} \sum\limits_{b'\in \Bab}\kl(\mu_{a,b'}|\mu_{b}^\star - \omega_{b,b'} )  \frac{\Esp_\nu \big[N_{a,b'}(T_{k}')\big]}{\log(T_{k}')}\\
&\geq  \liminf_{k\to\infty}\sum\limits_{b'\in \Bab}\kl(\mu_{a,b'}|\mu_{b}^\star - \omega_{b,b'} ) \frac{\Esp_\nu \big[N_{a,b'}(T_{k}')\big]}{\log\!\left(N_b(T'_k)\right)} \times \liminf_{k \to \infty}\dfrac{\log\!\left(N_b(T'_k)\right)}{\log(T'_k)}\geq \beta_b \,. \\
\end{align*}
Therefore we obtain the lower bound 
\beqan
\liminfT \dfrac{R(\nu,T)}{\log(T)} \geq C_\omega^\star(\beta,\nu):=& \min\limits_{n \in \Real_+^{\cA\times\cB} }&  \sum\limits_{a,b \notin \cO^\star} n_{a,b} \deltaab   %\label{eq:lb_regret_structure}
\\
& s.t. & \forall (a,b) \not\in \cO^\star: \quad \sum\limits_{b' \in \Bab}\kl(\mu_{a,b'}|\mu_b^\star - \omegabb) n_{a,b'} \geq \beta_b\,.    
\eeqan

\end{proof}
\section{Algorithms for the \textit{uncontrolled scenario}}
\label{app:algo_senario2}
We regroup in this section the algorithms \IMEDStwo  and \IMEDSstartwo  for the \textit{uncontrolled scenario}.

\vspace{-6mm}
\begin{figure}[H]
\centering
\makebox[0pt][c]{%
\begin{minipage}[]{0.35\textwidth}
\small
\begin{algorithm}[H]
\small
\caption{\IMEDStwo}
\label{alg:IMEDS2}
\begin{algorithmic}
\REQUIRE Weight matrix $(\omegabb)_{b,b' \in \cB}  $.
\FOR {$ t = 1 ... T $} 
\STATE Pull  $ a_{t+1} \in \argmin\limits_{a \in \cA }\widetilde I_{a,b_{t+1}}(t) $
\ENDFOR

\vspace{148.5 mm}
\end{algorithmic}
\end{algorithm}
\end{minipage}
\hfill
\begin{minipage}[]{0.6\textwidth}
\begin{algorithm}[H]
\small
\caption{\IMEDSstartwo}
\label{alg:IMEDSstar2}
\begin{algorithmic}
\REQUIRE Weight matrix $(\omegabb)_{b,b' \in \cB}  $.
\STATE $\forall a \in \cA,\ c_a, c_a^+ \leftarrow 1 $
\STATE $\forall b \in \cB,\ \textnormal{E}(b) \leftarrow \emptyset,\ \textnormal{FE}(b) \leftarrow \emptyset$
\FOR{ For $ t = 1 ... T $}
\STATE Choose  $\aol_t \in \argmin\limits_{a \in \cA}\widetilde I_{a,b_{t+1}}(t)$
\IF{$(\aol_t,b_{t+1})\in \Ohat^\star(t)$ }
\STATE Choose $a_{t+1} = \aol_t$
\ELSE
\STATE
\STATE Choose  $ (\aul_t,\bul_t) \in \argmin\limits_{(a,b)\in \cA\times\cB}\widetilde I_{a,b}(t) $ 
\IF{$(\aul_t,\bul_t)\notin \Ohat^\star(t)$}
\IF{$c_{\aul_t} = c_{\aul_t}^+ $}
\STATE $c_{\aul_t}^+ \leftarrow 2 c_{\aul_t}^+   $
\STATE Choose   $\bol_t \in \argminb \Nab(t) $ 
\STATE $\textnormal{FE}\!\left(\bol_t\right) \leftarrow \aul_t$
\ELSE 
\STATE Choose $\bol_t \!\in\!\!\!\!\!\!\!\!\! \argmax\limits_{b \in \Bhat_{{\aul_t},{\bul_t}}(t)\cup\Set{\bul_t}} \Nopt_{a_{t+1},b}(t) \! - \!N_{a_{t+1},b}(t)  $
\STATE $\textnormal{E}\!\left(\bol_t\right) \leftarrow \aul_t$
\ENDIF
\STATE $c_{a_{t+1}} \leftarrow c_{a_{t+1}}  + 1 $
\ENDIF
\STATE
\STATE \textbf{Priority rule in exploration phases}:
\IF{$\textnormal{FE}(b_{t+1})\neq\emptyset$}    
\STATE Choose $a_{t+1} \!=\!\textnormal{FE}(b_{t+1})$\ \textnormal{(delayed forced-exploration)} 
\STATE $\textnormal{FE}(b_{t+1}) \leftarrow \emptyset$
\ELSIF{$\textnormal{E}(b_{t+1})\neq\emptyset$}    
\STATE Choose $a_{t+1} \!=\! \textnormal{E}(b_{t+1})$ \ \textnormal{(delayed exploration)}
\STATE $\textnormal{E}(b_{t+1}) \leftarrow \emptyset$
\ELSE 
\STATE Choose $a_{t+1} \!=\! \aol_t$    \ \textnormal{(current exploration)}
\ENDIF
\STATE
\ENDIF
\STATE
\STATE Pull $a_{t+1}$
\ENDFOR
\end{algorithmic}
\end{algorithm}
\end{minipage}
}%
\end{figure}

\newpage
\section{\label{app: IMEDSstar finite time analysis}\texorpdfstring{\IMEDSstar}{TEXT}: Finite-time analysis}
\IMEDSstar strategy implies empirical lower and empirical upper bounds on the numbers of pulls (Lemma~\ref{imedstar empirical lower bounds}, Lemma~\ref{imedstar empirical upper bounds}). Based on concentration lemmas (see Appendix~\ref{subsec : imed_concentration}), the strategy-based empirical lower bounds ensure the reliability of the estimators of interest (Lemma~\ref{imedstar reliability}). Then, combining the reliability of these estimators with the obtained strategy-base empirical upper bounds, we get upper bounds on the average numbers of pulls (Proposition~\ref{prop:upper bounds}). We first show that \IMEDSstar strategy is Pareto-optimal (for minimization problem~\ref{eq:lb_regret_structure}) and that it is a consistent strategy which induces sequences of users with log-frequencies all equal to $1$ (independently from the considered bandit configuration). From an asymptotic analysis, we then prove that \IMEDSstar strategy is asymptotically optimal.

%\oam{I changed Theorem~\ref{th:upper bounds}  to Proposition~\ref{prop:upper bounds}}

    %\subsection{\label{imedstar notations} Notations}        \oam{This section is empty??}
	\subsection{Strategy-based empirical bounds} \label{imedstar empirical bounds}
	\IMEDSstar strategy implies inequalities between the indexes that can be rewritten as inequalities on the numbers of pulls. While asymptotic analysis suggests lower bounds involving $\log\!\left(N_{b_{t+1}}(t)\right)$ might be expected, we show in this non-asymptotic context lower bounds on the numbers of pulls involving instead the logarithm of the number of pulls of the current chosen arm, $\log\!\left(N_{a_{t+1},b_{t+1}}(t)\right)$. In contrast, we provide upper bounds involving $\log\!\left(N_{b_{t+1}}(t)\right)$ on $N_{a_{t+1},b_{t+1}}(t)$.

	We believe that establishing these empirical lower and upper bounds is a key element of our proof technique, that is of independent interest and not \textit{a priori} restricted to the graph structure.
	\begin{lemma}[Empirical lower bounds]\label{imedstar empirical lower bounds}Under \IMEDSstar, at each step time $t \geq 1$, for all couple $(a,b) \notin \Ohat^\star(t)$,
		\[ 
         \log\!\left(N_{a_{t+1}, b_{t+1}}(t)\right)  \leq \sum\limits_{b'\in \Bhat_{a,b}(t)} N_{a,b'}(t) \, \klof{\muhat_{a,b'}(t)}{\muhat_{b}^\star(t) - \omegabb}   + \log\!\left(N_{a,b'}(t)\right) \,.
        \]
    Furthermore, for all couple $(a,b) \in \Ohat^\star(t)$,
        \[
          N_{a_{t+1},b_{t+1}}(t) \leq N_{a,b}(t) \,.
        \]
	\end{lemma}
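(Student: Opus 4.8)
The plan is to reduce both displayed inequalities to a single \emph{key inequality}, namely that the log-count of the couple actually pulled at time $t+1$ is dominated by the minimal index,
\[
\log\!\left(N_{a_{t+1},b_{t+1}}(t)\right) \;\leq\; \min_{(a,b)\in\cA\times\cB} I_{a,b}(t) \;=\; I_{\aul_t,\bul_t}(t),
\]
adopting throughout the convention $\log 0 = -\infty$. Granting this, the first bound is immediate: for $(a,b)\notin\Ohat^\star(t)$ the right-hand side of the claim is \emph{by definition} exactly $I_{a,b}(t)$, and $I_{a,b}(t)\geq\min_{(a',b')}I_{a',b'}(t)\geq\log(N_{a_{t+1},b_{t+1}}(t))$. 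Likewise, for $(a,b)\in\Ohat^\star(t)$ one has $I_{a,b}(t)=\log(N_{a,b}(t))$, so the key inequality gives $\log(N_{a_{t+1},b_{t+1}}(t))\leq\log(N_{a,b}(t))$, and monotonicity of $\log$ yields $N_{a_{t+1},b_{t+1}}(t)\leq N_{a,b}(t)$. Thus everything rests on the key inequality, which I would prove by walking through the branches of Algorithm~\ref{alg:IMEDSstar}.

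\textbf{Proof of the key inequality.} Since $(\aul_t,\bul_t)$ is chosen to minimise the index, it suffices to show $\log(N_{a_{t+1},b_{t+1}}(t))\leq I_{\aul_t,\bul_t}(t)$. In the \emph{exploitation} branch, $(\aul_t,\bul_t)\in\Ohat^\star(t)$ and $(a_{t+1},b_{t+1})=(\aul_t,\bul_t)$, so the inequality holds with equality. In every \emph{exploration} branch one has $a_{t+1}=\aul_t$ and $(\aul_t,\bul_t)\notin\Ohat^\star(t)$, whence
\[
I_{\aul_t,\bul_t}(t) = \sum_{b'\in\Bhat_{\aul_t,\bul_t}(t)} \klof{\muhat_{\aul_t,b'}(t)}{\muhat_{\bul_t}^\star(t)-\omega_{\bul_t,b'}}\, N_{\aul_t,b'}(t) + \log\!\left(N_{\aul_t,b'}(t)\right).
\]
The central remark is that \emph{every summand is nonnegative}: the divergence and the counts are nonnegative, and each $b'\in\Bhat_{\aul_t,\bul_t}(t)$ satisfies $N_{\aul_t,b'}(t)\geq 1$, hence $\log(N_{\aul_t,b'}(t))\geq 0$. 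Consequently $I_{\aul_t,\bul_t}(t)\geq\log(N_{\aul_t,b'}(t))$ for \emph{each} individual $b'\in\Bhat_{\aul_t,\bul_t}(t)$, and it remains only to locate $b_{t+1}$ relative to this set.

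\textbf{Tracking and forced-exploration sub-branches.} In the tracking rule~\eqref{tracked user} one has $b_{t+1}\in\Bhat_{\aul_t,\bul_t}(t)\cup\{\bul_t\}$; if $b_{t+1}\in\Bhat_{\aul_t,\bul_t}(t)$ the previous display concludes directly, while if $b_{t+1}=\bul_t$ I would use the elementary observation that, since $\omega_{\bul_t,\bul_t}=0$ and $(\aul_t,\bul_t)\notin\Ohat^\star(t)$ forces $\muhat_{\aul_t,\bul_t}(t)<\muhat_{\bul_t}^\star(t)$, the index $\bul_t$ lies in $\Bhat_{\aul_t,\bul_t}(t)$ as soon as $N_{\aul_t,\bul_t}(t)>0$; the only residual case $N_{\aul_t,\bul_t}(t)=0$ gives $\log 0=-\infty\leq I_{\aul_t,\bul_t}(t)$. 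In the forced-exploration sub-branch, $b_{t+1}\in\argminb N_{\aul_t,b}(t)$, so $N_{\aul_t,b_{t+1}}(t)\leq N_{\aul_t,b'}(t)$ for every $b'$; choosing $b'\in\Bhat_{\aul_t,\bul_t}(t)$ when that set is nonempty concludes, and when $\Bhat_{\aul_t,\bul_t}(t)=\emptyset$ the same dichotomy forces $N_{\aul_t,\bul_t}(t)=0$, hence $\min_b N_{\aul_t,b}(t)=0$ and again $\log 0=-\infty$.

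\textbf{Main obstacle.} Away from degeneracies the argument is simply ``a single summand of a nonnegative sum dominates a single log-count,'' so I expect the only genuinely delicate point to be the bookkeeping of the edge cases: the possibility $\Bhat_{\aul_t,\bul_t}(t)=\emptyset$, couples with $N_{\aul_t,\cdot}(t)=0$ handled via the $\log 0=-\infty$ convention, and verifying that the tracking set $\Bhat_{\aul_t,\bul_t}(t)\cup\{\bul_t\}$ always connects to an actual index summand. Checking that $\bul_t\in\Bhat_{\aul_t,\bul_t}(t)$ precisely when $N_{\aul_t,\bul_t}(t)>0$ is the linchpin that makes the tracking branch fall into the same mould as the others.
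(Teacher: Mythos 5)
Your proposal is correct and follows essentially the same route as the paper's proof: both reduce the two claims to the single inequality $\log\!\left(N_{a_{t+1},b_{t+1}}(t)\right)\leq I_{\aul_t,\bul_t}(t)$ and then verify it branch by branch through Algorithm~\ref{alg:IMEDSstar} (exploitation, tracking, forced exploration), using the same linchpin observation that $\bul_t\in\Bhat_{\aul_t,\bul_t}(t)$ unless $N_{\aul_t,\bul_t}(t)=0$. Your treatment of the degenerate cases ($\Bhat_{\aul_t,\bul_t}(t)=\emptyset$, zero counts via $\log 0=-\infty$) is in fact slightly more explicit than the paper's, but the argument is the same.
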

	
	\begin{proof} According to \IMEDSstar~strategy (see Algorithm~\ref{alg:IMEDSstar}),  $a_{t+1} = \aul_t$ and for all couple $(a,b) \in \cA\times\cB$ 
	\[
	I_{a,b}(t) \geq I_{\aul_t,\bul_t}(t)  \,.
	\]
	There is three possible cases.
	~\\\underline{{Case\,1}}: $(a_{t+1},b_{t+1}) = (\aul_t,\bul_t) \in \Ohat^\star(t)$ and $I_{\aul_t,\bul_t}(t) = \log\!\left(N_{a_{t+1},b_{t+1}}(t)\right)$.
	~\\\underline{{Case\,2}}: $b_{t+1} \in \Bhat_{\aul_t,\bul_t}(t)\cup\Set{\bul_t}$ and ~\\$I_{\aul_t,\bul_t}(t) = \sum\limits_{b'\in \Bhat_{\aul_t,\bul_t}(t)} N_{\aul_t,b'}(t) \, \klof{\muhat_{\aul_t,b'}(t)}{\muhat_{\bul_t}^\star(t) - \omega_{\bul_t,b'}}  + \log(N_{\aul_t,b'}(t)) $. Note that $\bul_t \in \Bhat_{\aul_t,\bul_t}(t)$ except if $N_{\aul_t,\bul_t}(t) = 0$. Thus $ I_{\aul_t,\bul_t}(t) \geq \log\!\left(N_{a_{t+1},b_{t+1}}(t)\right) $.
	~\\\underline{{Case\,3}}: $b_{t+1} \in \argminb N_{\aul_t,b}(t)$ and $I_{\aul_t,\bul_t}(t) \geq \min_{b \in \Bhat_{\aul_t,\bul_t}(t)}\log\!\left(N_{a_{t+1},b}(t)\right)\geq\log\!\left(N_{a_{t+1},b_{t+1}}(t)\right)$.
	~\\This implies for all couple $(a,b) \in \cA\times\cB$,
	\[
	 I_{a,b}(t) \geq \log\!\left( N_{a_{t+1},b_{t+1}}(t)\right)  \,.
	\]
	Thus, according to the definition of the indexes (Eq.~\ref{eq:def_imed_index_couple}) for all couple $(a,b) \notin \Ohat^\star(t)$ we obtain
	\[
     \log\left(N_{a_{t+1}, b_{t+1}}(t)\right)  \leq \sum\limits_{b'\in \Bhat_{a,b}(t)} N_{a,b'}(t) \, \klof{\muhat_{a,b'}(t)}{\muhat_{b}^\star(t) - \omegabb }   + \log\left(N_{a,b'}(t)\right) \,,
     \]
     and for all couple $(a,b) \in \Ohat^\star(t)$,
     \[
     \log\left(N_{a_{t+1}, b_{t+1}}(t)\right)  \leq \log\left(N_{a,b}(t)\right) \,.
     \]
     Taking the exponential in the last inequality allows us to conclude.
	\end{proof}
	
	\begin{lemma}[Empirical upper bounds]\label{imedstar empirical upper bounds}
		Under \IMEDSstar, at each step time $t \geq 1$ such that ~\\$(a_{t+1},b_{t+1}) \notin \Ohat^\star(t)$ we have
		\[
		\sum\limits_{b'\in \Bhat_{a_{t+1},\bul_t}(t)} N_{a_{t+1},b'}(t)\, \klof{\muhat_{a_{t+1},b'}(t)}{\muhat_{\bul_t}^\star(t) - \omega_{\bul_t,b'}} \leq \log\!\left(N_{\bul_t}(t)\right)
		\]
		and
		\[
		\dfrac{N_{a_{t+1},b_{t+1}}(t)}{\log(t)} \leq \left\{ \begin{array}{l}
		    \dfrac{1}{\klof{\muhat_{a_{t+1},\bul_t}(t)}{\muhat_{\bul_t}^\star(t)}}  \textnormal{, if } c_{a_{t+1}} = c_{a_{t+1}}^+ \\
		    \min\left(\dfrac{1}{\klof{\muhat_{a_{t+1},b_{t+1}}(t)}{\muhat_{\bul_t}^\star(t) - \omega_{\bul_t,b_{t+1}}}},\ \nopt_{a_{t+1},b_{t+1}}(t) \right) \ \textnormal{ otherwise.} 
		\end{array} \right.
		\]
	\end{lemma}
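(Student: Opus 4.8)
The plan is to observe first that the hypothesis $(a_{t+1},b_{t+1}) \notin \Ohat^\star(t)$ rules out the exploitation branch of Algorithm~\ref{alg:IMEDSstar}, so the round is an exploration round with $a_{t+1} = \aul_t$ and $(\aul_t,\bul_t) \notin \Ohat^\star(t)$. Every estimate below then hinges on a single inequality, obtained by comparing the minimal index with the index of an empirically optimal couple for the selected user $\bul_t$. Writing $\hat a$ for an empirically optimal arm of $\bul_t$, so that $(\hat a,\bul_t) \in \Ohat^\star(t)$, one has
\[
I_{\aul_t,\bul_t}(t) = \min_{(a,b)\in\cA\times\cB} I_{a,b}(t) \leq I_{\hat a,\bul_t}(t) = \log\!\left(N_{\hat a,\bul_t}(t)\right) \leq \log\!\left(N_{\bul_t}(t)\right) \leq \log(t),
\]
using $N_{\hat a,\bul_t}(t) \leq N_{\bul_t}(t) \leq t$. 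Substituting the sub-optimal expression~\eqref{eq:def_imed_index_couple} of $I_{\aul_t,\bul_t}(t)$ and discarding the nonnegative logarithmic summands then gives, since $a_{t+1}=\aul_t$, exactly the first displayed inequality of the lemma.

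For the bound on $N_{a_{t+1},b_{t+1}}(t)/\log(t)$ I would split along the two exploration sub-branches. In the forced-exploration branch ($c_{a_{t+1}}=c_{a_{t+1}}^+$) the user $b_{t+1}$ minimizes $N_{a_{t+1},\cdot}(t)$, hence $N_{a_{t+1},b_{t+1}}(t)\leq N_{a_{t+1},\bul_t}(t)$. Isolating in $I_{\aul_t,\bul_t}(t)$ the summand indexed by $\bul_t$ (which belongs to $\Bhat_{\aul_t,\bul_t}(t)$ whenever $N_{\aul_t,\bul_t}(t)>0$, and where $\omega_{\bul_t,\bul_t}=0$) and invoking the key inequality yields $\klof{\muhat_{a_{t+1},\bul_t}(t)}{\muhat_{\bul_t}^\star(t)}\,N_{a_{t+1},\bul_t}(t) \leq \log(t)$, which is the first case after dividing by $\log(t)$.

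In the tracking branch the bound is a minimum of two terms. The divergence term is obtained as before, by isolating the summand indexed by $b_{t+1}$ in $I_{\aul_t,\bul_t}(t)$ (valid when $b_{t+1}\in\Bhat_{\aul_t,\bul_t}(t)$; the only alternative, $b_{t+1}=\bul_t\notin\Bhat_{\aul_t,\bul_t}(t)$, forces $N_{a_{t+1},b_{t+1}}(t)=0$ and is immediate). For the term involving $\nopt_{a_{t+1},b_{t+1}}(t)$, I would first record that $\min_{b'}I_{a_{t+1},b'}(t)=I_{\aul_t,\bul_t}(t)$, since $\bul_t$ realizes the global minimum, so by~\eqref{eq:Nopt} we have $\Nopt_{a_{t+1},b}(t)=\nopt_{a_{t+1},b}(t)\,I_{\aul_t,\bul_t}(t) \leq \nopt_{a_{t+1},b}(t)\log(t)$; it then suffices to prove $N_{a_{t+1},b_{t+1}}(t)\leq \Nopt_{a_{t+1},b_{t+1}}(t)$.

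This last reduction, which links the argmax tracking rule to the optimization problem, is where the main work lies. The plan is to multiply the active constraint of the empirical problem~\eqref{eq:current_optimization_pb} for $(a_{t+1},\bul_t)$ by $I_{\aul_t,\bul_t}(t)$, obtaining $\sum_{b'\in\Bhat_{\aul_t,\bul_t}(t)} \klof{\muhat_{a_{t+1},b'}(t)}{\muhat_{\bul_t}^\star(t)-\omega_{\bul_t,b'}}\,\Nopt_{a_{t+1},b'}(t) \geq I_{\aul_t,\bul_t}(t)$, and to compare it with the first inequality already proved, $\sum_{b'\in\Bhat_{\aul_t,\bul_t}(t)} \klof{\muhat_{a_{t+1},b'}(t)}{\muhat_{\bul_t}^\star(t)-\omega_{\bul_t,b'}}\,N_{a_{t+1},b'}(t) \leq I_{\aul_t,\bul_t}(t)$. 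Subtracting, the divergence-weighted sum of the deficits $\Nopt_{a_{t+1},b'}(t)-N_{a_{t+1},b'}(t)$ is nonnegative; since every weight is strictly positive on $\Bhat_{\aul_t,\bul_t}(t)$, at least one user there has nonnegative deficit, and as $b_{t+1}$ maximizes the deficit over $\Bhat_{\aul_t,\bul_t}(t)\cup\{\bul_t\}$, the deficit at $b_{t+1}$ is nonnegative too, giving $N_{a_{t+1},b_{t+1}}(t)\leq\Nopt_{a_{t+1},b_{t+1}}(t)$. The edge case $\Bhat_{\aul_t,\bul_t}(t)=\emptyset$ again forces $N_{a_{t+1},b_{t+1}}(t)=0$. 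The hardest part is this pigeonhole-type argument coupling the tracking rule with the constraint; the remaining steps are routine bookkeeping with the index definition.
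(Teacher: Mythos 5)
Your proof is correct and follows essentially the same route as the paper's: the same key comparison $I_{\aul_t,\bul_t}(t)\leq I_{\hat a,\bul_t}(t)=\log\!\left(N_{\hat a,\bul_t}(t)\right)\leq\log\!\left(N_{\bul_t}(t)\right)\leq\log(t)$, the same case split on the two exploration sub-branches, and the same isolation of individual summands of the index. Your final pigeonhole argument establishing $N_{a_{t+1},b_{t+1}}(t)\leq \Nopt_{a_{t+1},b_{t+1}}(t)$ is exactly the paper's Lemma~\ref{nopt dominates}, reproved inline with the same edge-case handling.
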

	\begin{proof} For all current optimal couple $(a,b) \in \Ohat^\star(t)$, we have $ I_{a,b}(t) = \log\left(N_{a,b}(t)\right) \leq \log\!\left(N_b(t)\right)$. 
	~\\ This implies
	\[
	I_{\aul_t,\bul_t}(t) \leq \log\!\left(N_{\bul_t}(t)\right) \,.
	\]
	Furthermore, since $(a_{t+1},b_{t+1}) \notin \Ohat^\star(t)$, we have $a_{t+1} = \aul_t$ and the previous inequality implies
	\beq \label{eq:upper_bounds_1}
	\sum\limits_{b'\in \Bhat_{a_{t+1},\bul_t}(t)} N_{a_{t+1},b'}(t)\, \klof{\muhat_{a_{t+1},b'}(t)}{\muhat_{\bul_t}^\star(t) - \omega_{\bul_t,b'}} \leq \log\!\left(N_{\bul_t}(t)\right) \,.
	\eeq
	In the following we study separately the two cases either $c_{a_{t+1}} = c_{a_{t+1}}^+$ or $c_{a_{t+1}} < c_{a_{t+1}}^+$.
	~\\\underline{Case\,1}: $c_{a_{t+1}} = c_{a_{t+1}}^+$
	~\\ Then $b_{t+1} \in \argmin\limits_{b \in \cB} N_{a_{t+1},b}(t)$ and from Eq.~\ref{eq:upper_bounds_1} we get 
	\[
	N_{a_{t+1},b_{t+1}}(t) \leq N_{a_{t+1},\bul_t}(t) \leq  \dfrac{\log(t)}{\klof{\muhat_{a_{t+1},\bul_t}(t)}{\muhat_{\bul_t}^\star(t)}} \, .
	\]
	
	~\\\underline{Case\,2}: $c_{a_{t+1}} < c_{a_{t+1}}^+$
	~\\Then $b_{t+1} \in \argmax\limits_{b \in \Bhat_{\aul_t,\bul_t}(t)\cup\Set{\bul_t}}\Nopt_{\aul_t,b}(t) - N_{\aul_t,b}(t)$ and we have
	\[
	N_{a_{t+1},b_{t+1}}(t) \leq \dfrac{\log(t)}{\klof{\muhat_{a_{t+1},b_{t+1}}(t)}{\muhat_{\bul_t}^\star(t) - \omega_{\bul_t,b_{t+1}}}} \,.
	\]
	Last, Lemma~\ref{nopt dominates} stated below implies
	\[
	N_{a_{t+1},b_{t+1}}(t) \leq \Nopt_{a_{t+1},b_{t+1}}(t) = \nopt_{a_{t+1},b_{t+1}}(t) \, \min\limits_{b \in \cB} I_{a_{t+1},b}(t) \leq \nopt_{a_{t+1},b_{t+1}}(t) \, \log(t) \,.
	\]
	\end{proof}
	\begin{lemma}[$\Nopt$ dominates $N$]\label{nopt dominates} Under \IMEDSstar, at each time step $t \geq 1$ such that ~\\$(\aul_t,\bul_t) \notin \Ohat^\star(t)$ we have 
	\[
	\max\limits_{b \in \Bhat_{\aul_t,\bul_t}(t)\cup\Set{\bul_t}} \Nopt_{\aul_t,b}(t) - N_{\aul_t,b}(t) \geq 0 \,.
	\]
	\end{lemma}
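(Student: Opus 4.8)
The plan is to match, coefficient by coefficient, the constraint satisfied by the empirical solution $\nopt(t)$ of~\eqref{eq:current_optimization_pb} against the index $I_{\aul_t,\bul_t}(t)$, exploiting the fact that both involve the very same divergences. Throughout, abbreviate $a \coloneqq \aul_t$ and, for $b' \in \Bhat_{\aul_t,\bul_t}(t)$, set $d_{b'} \coloneqq \klof{\muhat_{a,b'}(t)}{\muhat_{\bul_t}^\star(t) - \omega_{\bul_t,b'}}$; by definition of $\Bhat_{\aul_t,\bul_t}(t)$ each $d_{b'} > 0$ and $N_{a,b'}(t) \geq 1$.

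The first step I would record is that $I_{\aul_t,\bul_t}(t)$ is exactly the scaling factor appearing in $\Nopt_{a,\cdot}(t)$. Since $(\aul_t,\bul_t) \in \argmin_{(a,b)} I_{a,b}(t)$, in particular $I_{\aul_t,\bul_t}(t) \leq I_{\aul_t,b'}(t)$ for every $b' \in \cB$, whence $\min_{b' \in \cB} I_{\aul_t,b'}(t) = I_{\aul_t,\bul_t}(t)$ and therefore, by~\eqref{eq:Nopt}, $\Nopt_{a,b'}(t) = \nopt_{a,b'}(t)\, I_{\aul_t,\bul_t}(t)$ for all $b'$.

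In the main case $\Bhat_{\aul_t,\bul_t}(t) \neq \emptyset$, the hypothesis $(\aul_t,\bul_t) \notin \Ohat^\star(t)$ means the constraint of~\eqref{eq:current_optimization_pb} is active for this couple, i.e. $\sum_{b' \in \Bhat_{\aul_t,\bul_t}(t)} d_{b'}\, \nopt_{a,b'}(t) \geq 1$. Multiplying by $I_{\aul_t,\bul_t}(t) \geq 0$ and using the identity above yields $\sum_{b' \in \Bhat_{\aul_t,\bul_t}(t)} d_{b'}\, \Nopt_{a,b'}(t) \geq I_{\aul_t,\bul_t}(t)$. On the other hand, the definition~\eqref{eq:def_imed_index_couple} of the index of the sub-optimal couple together with $\log N_{a,b'}(t) \geq 0$ gives $I_{\aul_t,\bul_t}(t) = \sum_{b' \in \Bhat_{\aul_t,\bul_t}(t)} \big( d_{b'}\, N_{a,b'}(t) + \log N_{a,b'}(t) \big) \geq \sum_{b' \in \Bhat_{\aul_t,\bul_t}(t)} d_{b'}\, N_{a,b'}(t)$. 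Chaining the two inequalities, $\sum_{b' \in \Bhat_{\aul_t,\bul_t}(t)} d_{b'}\big( \Nopt_{a,b'}(t) - N_{a,b'}(t) \big) \geq 0$; since all $d_{b'} > 0$, a pigeonhole argument forces $\Nopt_{a,b'}(t) - N_{a,b'}(t) \geq 0$ for at least one $b' \in \Bhat_{\aul_t,\bul_t}(t) \subseteq \Bhat_{\aul_t,\bul_t}(t)\cup\Set{\bul_t}$, which is stronger than the claim.

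The remaining degenerate case $\Bhat_{\aul_t,\bul_t}(t) = \emptyset$ I would dispatch directly: because $\omega_{\bul_t,\bul_t} = 0$ (Assumption~\ref{ass:metric}) and $\muhat_{a,\bul_t}(t) < \muhat_{\bul_t}^\star(t)$ (as $(\aul_t,\bul_t) \notin \Ohat^\star(t)$), any positive count $N_{a,\bul_t}(t) > 0$ would place $\bul_t$ in $\Bhat_{\aul_t,\bul_t}(t)$; hence $N_{a,\bul_t}(t) = 0$, and the maximum, now over the singleton $\Set{\bul_t}$, equals $\Nopt_{a,\bul_t}(t) \geq 0$. The only genuinely delicate point is the double reading of $I_{\aul_t,\bul_t}(t)$ in the third step — simultaneously as $\min_{b'} I_{\aul_t,b'}(t)$, the factor defining $\Nopt$, and as the weighted count $\sum_{b'} d_{b'} N_{a,b'}(t) + \log N_{a,b'}(t)$ — which is what lets the $\nopt$-constraint and the $N$-based index be compared on equal footing; once this is noticed, the argument collapses to a single averaging step.
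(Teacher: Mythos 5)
Your proof is correct and follows essentially the same route as the paper's: identify $I_{\aul_t,\bul_t}(t)$ as the scaling factor in $\Nopt$, multiply the constraint of~\eqref{eq:current_optimization_pb} by it to get $\sum_{b'} d_{b'}\Nopt_{\aul_t,b'}(t) \geq I_{\aul_t,\bul_t}(t) \geq \sum_{b'} d_{b'}N_{\aul_t,b'}(t)$, and conclude by positivity of the weights. The only cosmetic difference is the case split (you separate on $\Bhat_{\aul_t,\bul_t}(t)=\emptyset$ versus the paper's $\bul_t\notin\Bhat_{\aul_t,\bul_t}(t)$), and both dispositions of the degenerate case via $N_{\aul_t,\bul_t}(t)=0$ are valid.
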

	\begin{proof}If $\bul_t \notin \Bhat_{\aul_t,\bul_t}(t) $, then $N_{\aul_t,\bul_t}(t) = 0$ and $  \Nopt_{\aul_t,\bul_t}(t) - N_{\aul_t,\bul_t}(t) \geq 0$. In the following we assume that $\bul_t \in \Bhat_{\aul_t,\bul_t}(t) \neq \emptyset$.
	
	~\\ From Eq.~\ref{eq:Nopt}, since $\min\limits_{b' \in \cB}I_{\aul_t,b'}(t) = I_{\aul_t,\bul_t}(t)$, for all $b \in \Bhat_{\aul_t,\bul_t}(t)$ we have 
	\beq \label{eq:NoptN_1}
	\Nopt_{\aul_t,b}(t) = \nopt_{\aul_t,b}(t) \, I_{\aul_t,\bul_t}(t)\,,
	\eeq
	and, since $(\aul_t,\bul_t) \notin \Ohat^\star(t)$, from Eq.~\ref{eq:current_optimization_pb} we get 
	\beq \label{eq:NoptN_2}
	\sum\limits_{b' \in \Bhat_{\aul_t,\bul_t}(t)}\klof{\muhat_{\aul_t,b'}(t)}{\muhat_{\bul_t}^\star(t) - \omega_{\bul_t,b'}}\, \nopt_{\aul_t,b'}(t) \geq 1 \,.
	\eeq 
	Then Eq.~\ref{eq:NoptN_1} and \ref{eq:NoptN_2} imply
	\[
	\sum\limits_{b' \in \Bhat_{\aul_t,\bul_t}(t)}\Nopt_{\aul_t,b'}(t) \,\klof{\muhat_{\aul_t,b'}(t)}{\muhat_{\bul_t}^\star(t) - \omega_{\bul_t,b'}} \geq I_{\aul_t,\bul_t}(t) \,.
	\] 
	Hence from the definitions of the indexes (Eq.~\ref{eq:def_imed_index_couple}) this implies
	\begin{align} \label{eq:NoptN_3}
	\sum\limits_{b' \in \Bhat_{\aul_t,\bul_t}(t)} \Nopt_{\aul_t,b'}(t)& \,\klof{\muhat_{\aul_t,b'}(t)}{\muhat_{\bul_t}^\star(t) - \omega_{\bul_t,b'}} \\
	&\geq  \sum\limits_{b' \in \Bhat_{\aul_t,\bul_t}(t)}N_{\aul_t,b'}(t) \,\klof{\muhat_{\aul_t,b'}(t)}{\muhat_{\bul_t}^\star(t) - \omega_{\bul_t,b'}}  \,. \nonumber
	\end{align}
	Since we assume $\bul_t \in \Bhat_{\aul_t,\bul_t}(t) \neq \emptyset$, previous Eq.~\ref{eq:NoptN_3} implies
	\[
	\max\limits_{b \in \Bhat_{\aul_t,\bul_t}(t)\cup\Set{\bul_t}} \Nopt_{\aul_t,b}(t) - N_{\aul_t,b}(t) =\max\limits_{b \in \Bhat_{\aul_t,\bul_t}(t)} \Nopt_{\aul_t,b}(t) - N_{\aul_t,b}(t) \geq 0 \,.
	\]
	\end{proof}
	
\subsection{\label{subsec: reliable estimators}Reliable current best arm and  means}
	In this subsection, we consider the subset $\cT_{\epsilon,\gamma}$ of times where everything behaves well, that is: The current best couples correspond to the true ones, and, the empirical means of the best couples and the couples at least pulled proportionally (with a coefficient $\gamma \in (0,1)$) to  the number of pulls of the current chosen couple are $\epsilon$-accurate for $0<\epsilon<\epsilon_\nu$, i.e.
	\[
		\cT_{\epsilon, \gamma} \coloneqq \Set{t \geq 1:\ \begin{array}{l}
		  \Ohat^\star(t) = \cO^\star  \\  \forall (a,b) \textnormal{ s.t. } N_{a,b}(t) \geq \gamma\, N_{a_{t+1},b_{t+1}}(t) \textnormal{ or } (a,b) \in \cO^\star, \ \abs{\muhatab(t) - \muab } < \epsilon 
		\end{array}   }\,. 
		\]
	 We will show that its complementary is finite on average. In order to prove this we decompose the set $\cT_{\epsilon,\gamma}$ in the following way. Let $\cE_{\epsilon,\gamma}$ be the set of times where the means are well estimated,
	 \[
		\cE_{\epsilon,\gamma} \coloneqq \Set{t \geq 1:\ \forall (a,b) \textnormal{ s.t. } N_{a,b}(t) \geq \gamma\, N_{a_{t+1},b_{t+1}}(t) \textnormal{ or } (a,b) \in \Ohat^\star(t), \ \abs{\muhatab(t) - \muab } < \epsilon  }\,,
	  \]
	 and $\Lambda_\epsilon$ the set of times where the mean of a couple that is not the current optimal neither pulled is underestimated
	 {\small 
			\[
			\Lambda_{\epsilon}\! \coloneqq \!\Set{\!t \geq 1 \left| \exists a\in \cA, \cB'\! \subset\! \cB\! : \! \begin{array}{l}
				\forall b \in \cB', \ \muhatab(t) < \muab - \epsilon\\	\log(N_{a_{t+1},b_{t+1}}(t))\! \leq \!
				\sum\limits_{b \in \cB'}\Nab(t) \klof{\muhatab(t)}{\muab - \epsilon_\nu}\!  +\! \log\left(\Nab(t)\right) 
				\end{array} \right. \!}\!.
			\]
		}
		Then we prove below the following inclusion. 		
\begin{lemma}[Relations between the subsets of times]\label{lem:decomposition_T_epsilon}For $ 0 < \epsilon < \epsilon_\nu$ and $\gamma \in (0,1)$,
\begin{equation}
    \cT_{\epsilon,\gamma}^c\setminus\cE_{\epsilon,\gamma}^c  \subset \Lambda_{\epsilon_\nu}  \,.
\label{eq:decomp_T_epsilon}
\end{equation}
Refer to Appendix~\ref{notations} for the definition of $\epsilon_\nu$.
\end{lemma}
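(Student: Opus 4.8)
The plan is to first rewrite the set difference as $\cT_{\epsilon,\gamma}^c\setminus\cE_{\epsilon,\gamma}^c = \cT_{\epsilon,\gamma}^c\cap\cE_{\epsilon,\gamma}$, so that the task becomes: every time $t$ at which the estimates of the current-optimal and proportionally-pulled couples are $\epsilon$-accurate (i.e. $t\in\cE_{\epsilon,\gamma}$) but at which $\cT_{\epsilon,\gamma}$ nonetheless fails belongs to $\Lambda_{\epsilon_\nu}$. The key preliminary observation is that if $t\in\cE_{\epsilon,\gamma}$ and $\Ohat^\star(t)=\cO^\star$, then the defining condition of $\cE_{\epsilon,\gamma}$ becomes verbatim the second condition of $\cT_{\epsilon,\gamma}$, whence $t\in\cT_{\epsilon,\gamma}$; contrapositively, any $t\in\cE_{\epsilon,\gamma}\cap\cT_{\epsilon,\gamma}^c$ must satisfy $\Ohat^\star(t)\neq\cO^\star$. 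Using that $\nu$ is non-peculiar (the standing assumption of this analysis), the true best set at each user is the singleton $\{a_b^\star\}$, so a mismatch $\Ohat^\star(t)\neq\cO^\star$ forces a user $b_0$ whose current-best set, being nonempty and different from $\{a_{b_0}^\star\}$, contains a \emph{sub-optimal} arm $a$ with $(a,b_0)\in\Ohat^\star(t)$.

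Next I would quantify the misranking this creates. Since $(a,b_0)\in\Ohat^\star(t)$, the accuracy guaranteed by $\cE_{\epsilon,\gamma}$ gives $\muhat_{b_0}^\star(t)=\muhat_{a,b_0}(t)<\mu_{a,b_0}+\epsilon$. The crucial quantitative input is that taking $b'=b$ in the definition of $\epsilon_\nu$ (where $\omega_{b,b}=0$) yields $4\epsilon_\nu\leq\Delta_{a,b_0}$ for every sub-optimal couple, so that $\muhat_{b_0}^\star(t)<\mu_{b_0}^\star-4\epsilon_\nu+\epsilon<\mu_{b_0}^\star-3\epsilon_\nu$ because $\epsilon<\epsilon_\nu$. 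Feeding this bound back into $\cE_{\epsilon,\gamma}$ shows $(a_{b_0}^\star,b_0)\notin\Ohat^\star(t)$: were it current-optimal it would be $\epsilon$-accurate, i.e. $\muhat_{b_0}^\star(t)\geq\muhat_{a_{b_0}^\star,b_0}(t)>\mu_{b_0}^\star-\epsilon$, contradicting $\muhat_{b_0}^\star(t)<\mu_{b_0}^\star-3\epsilon_\nu$. This makes Lemma~\ref{imedstar empirical lower bounds} applicable to the truly optimal, currently sub-optimal couple $(a_{b_0}^\star,b_0)$, giving
\[
\log\!\left(N_{a_{t+1},b_{t+1}}(t)\right)\leq\sum_{b'\in\Bhat_{a_{b_0}^\star,b_0}(t)}N_{a_{b_0}^\star,b'}(t)\,\kl\!\big(\muhat_{a_{b_0}^\star,b'}(t)\,\big|\,\muhat_{b_0}^\star(t)-\omega_{b_0,b'}\big)+\log\!\left(N_{a_{b_0}^\star,b'}(t)\right).
\]

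I would then propose $\cB'=\Bhat_{a_{b_0}^\star,b_0}(t)$ together with arm $a_{b_0}^\star$ as the witness for $\Lambda_{\epsilon_\nu}$, which reduces to two checks. For underestimation, each $b'\in\cB'$ obeys $\muhat_{a_{b_0}^\star,b'}(t)<\muhat_{b_0}^\star(t)-\omega_{b_0,b'}<\mu_{b_0}^\star-3\epsilon_\nu-\omega_{b_0,b'}$, and combining with the weight inequality $\mu_{a_{b_0}^\star,b'}\geq\mu_{a_{b_0}^\star,b_0}-\omega_{b_0,b'}=\mu_{b_0}^\star-\omega_{b_0,b'}$ gives $\muhat_{a_{b_0}^\star,b'}(t)<\mu_{a_{b_0}^\star,b'}-3\epsilon_\nu<\mu_{a_{b_0}^\star,b'}-\epsilon_\nu$, which is condition~1. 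For the divergence comparison, set $p=\muhat_{a_{b_0}^\star,b'}(t)$, $q_1=\muhat_{b_0}^\star(t)-\omega_{b_0,b'}$ and $q_2=\mu_{a_{b_0}^\star,b'}-\epsilon_\nu$; membership in $\Bhat$ gives $p<q_1$, condition~1 gives $p<q_2$, and $q_2-q_1\geq 2\epsilon_\nu>0$ again from $\Delta_{a,b_0}\geq 4\epsilon_\nu$, so that since $q\mapsto\kl(p|q)$ is increasing on $q>p$ we obtain $\kl(p|q_1)\leq\kl(p|q_2)$. Substituting this bound into each summand turns the displayed inequality into exactly the defining condition of $\Lambda_{\epsilon_\nu}$ for the witness $(a_{b_0}^\star,\cB')$, which closes the inclusion (the degenerate possibility $\Bhat_{a_{b_0}^\star,b_0}(t)=\emptyset$, forcing $N_{a_{t+1},b_{t+1}}(t)\leq1$, is absorbed by the empty witness). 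I expect the main obstacle to be precisely this final threshold swap in the KL: one must keep $p$ below \emph{both} thresholds while controlling their ordering $q_1\leq q_2$, and it is exactly here that the slack $4\epsilon_\nu\leq\Delta_{a,b_0}$ extracted from the $b'=b$ entry of $\epsilon_\nu$ is indispensable.
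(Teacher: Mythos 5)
Your proposal is correct and follows essentially the same route as the paper's proof: reduce to $\cT_{\epsilon,\gamma}^c\cap\cE_{\epsilon,\gamma}$, use $\Ohat^\star(t)\neq\cO^\star$ and non-peculiarity to extract a user $b_0$ whose truly optimal couple $(a_{b_0}^\star,b_0)$ is currently mis-classified as sub-optimal, apply the empirical lower bound of Lemma~\ref{imedstar empirical lower bounds} to that couple, and exhibit $\big(a_{b_0}^\star,\Bhat_{a_{b_0}^\star,b_0}(t)\big)$ as the witness for $\Lambda_{\epsilon_\nu}$ via the monotonicity of $q\mapsto\kl(p|q)$. The only differences are cosmetic (you carry the slack $4\epsilon_\nu\leq\Delta_{a,b_0}$ where the paper uses $2\epsilon_\nu$, and you explicitly handle the empty-witness edge case), so nothing further is needed.
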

\begin{proof} For all user $b \in \cB$ it is assumed that there exists a unique optimal arm $a_b^\star \in \cA$ such that $(a_b^\star,b) \in \cO^\star$. We have $\cO^\star = \bigcup\limits_{b \in \cB}\Set{(a_b^\star,b)}$. In particular, for all time step $t \geq 1$, if $\Ohat^\star(t) \neq \cO^\star$ then there exists $b \in \cB$ and $\ahat_b^\star \in \cA$ such that $(\ahat_b^\star,b) \in \Ohat^\star(t) \setminus \cO^\star$ (and $\ahat_b^\star \neq a_b^\star$).

~\\Let $ t  \in \cT_{\epsilon,\gamma}^c\setminus\cE_{\epsilon,\gamma}^c$.  Then $\Ohat^\star(t) \neq \cO^\star$ and there exists $b \in \cB,\, \ahat_b^\star \in \cA$ such that $(\ahat_b^\star,b) \in \Ohat^\star(t) \setminus \cO^\star$.
Thus we know that $ \ahat^\star_b \neq a_b^\star$. In particular, we have  $\mu_{a_b^\star,b} = \mu_b^\star > \mu_{\ahat^\star_b,b} + 2 \epsilon_\nu $. Since $t \in \cE_{\epsilon,\gamma}$ and $\epsilon < \epsilon_\nu$ , this implies 
\beq \label{eq:inclusion_1}
\mu_{a_b^\star,b} > \muhat_{\ahat_b^\star,b}(t) + \epsilon_\nu = \muhat_b^\star(t) + \epsilon_\nu \geq \muhat_{a_b^\star}(t) + \epsilon  
\eeq
and $(a_b^\star,b) \notin \Ohat^\star(t)$. 
From Lemma~\ref{imedstar  empirical lower bounds} we have the following empirical lower bound
\beq \label{eq:inclusion_2}
\log\left(N_{a_{t+1}, b_{t+1}}(t)\right)  \leq \sum\limits_{b'\in \Bhat_{a_b^\star,b}(t)} N_{a_b^\star,b'}(t) \, \klof{\muhat_{a_b^\star,b'}(t)}{\muhat_{b}^\star(t) - \omegabb}   + \log\left(N_{a_b^\star,b'}(t)\right) \,.
\eeq
In particular, for all $b' \in \Bhat_{a_b^\star,b}(t)$  we have $\muhat_{a_b^\star,b'}(t)  < \muhat^\star_b(t) - \omegabb$ and Eq.~\ref{eq:inclusion_1} implies
\beq \label{eq:inclusion_3}
\muhat_{a_b^\star,b'}(t)  < \muhat^\star_b(t) - \omegabb <  \mu_{a_b^\star,b} - \epsilon_\nu - \omegabb < \mu_{a_b^\star,b'} - \epsilon_\nu \,,  
\eeq
and the monotonic properties of $\kl(\cdot|\cdot)$ implies 
\beq \label{eq:inclusion_4}
\klof{\muhat_{a_b^\star,b'}(t)}{\muhat_b^\star(t) - \omegabb} \leq \klof{\muhat_{a_b^\star,b'}(t)}{\mu_{a_b^\star,b'} - \epsilon_\nu} \,.
\eeq
Therefore, by combining Eq. \ref{eq:inclusion_2}, \ref{eq:inclusion_3} and \ref{eq:inclusion_4},  we have for such $t$
\[
\forall b' \in \Bhat_{a_b^\star,b}(t),\ \muhat_{a_b^\star,b'}(t)  < \mu_{a_b^\star,b'} - \epsilon_\nu
\]
and
\[
\log\left(N_{a_{t+1},b_{t+1}}(t)\right)  \leq \sum_{b'\in \Bhat_{a^\star_b,b}(t)}N_{a_b^\star,b'}(t) \, \klof{\muhat_{a_b^\star,b'}(t)}{\mu_{a_b^\star,b'} - \epsilon_\nu} + \log\left(N_{a_b^\star,b'}(t)\right) \,,
\]
which concludes the proof.
\end{proof}
Using classical concentration arguments we prove in Appendix~\ref{app:proof_ce_and_lambda_finite} the following upper bounds.
\begin{lemma}[Bounded subsets of times]For $ 0 < \epsilon < \epsilon_\nu$ and $\gamma \in (0,1/2)$, 
 \[\Esp_\nu[\abs{\cE_{\epsilon,\gamma}^c}] \leq \dfrac{17}{\gamma\epsilon^4}  \abs{\cA}^2\abs{\cB}^2 \qquad		\Esp_\nu[\abs{\Lambda_{\epsilon_\nu}}] \leq 2\abs{\cA}^2\abs{\cB}(1 + E_\nu)^{\abs{\cB}} \,.
 \]
Refer to Appendix~\ref{notations} for the definitions of $\epsilon_\nu$ and $E_\nu$.

 \label{lem:ce_and_lambda_are_finite}
\end{lemma}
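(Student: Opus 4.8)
The plan is to bound each of the two expected cardinalities by writing $\Esp_\nu[\abs{\cdot}]=\sum_{t\geq1}\Pr_\nu(t\in\cdot)$, union-bounding over the objects (couples, arms, subsets of users) that can trigger the defining event, and then converting each time-sum into a sum over \emph{numbers of pulls}, where Bernoulli concentration is available. The mechanism that makes this conversion possible is that between two consecutive pulls of a couple $(a,b)$ its empirical mean $\muhatab(t)$ is constant; hence any deviation statement about $(a,b)$ depends on $t$ only through the count $\Nab(t)$, and can be evaluated at the frozen means $\muhatab^n=\muhatab(\tau_{a,b}^n)$ introduced in Appendix~\ref{notations}.

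For $\Esp_\nu[\abs{\cE_{\epsilon,\gamma}^c}]$ I would first fix, by a union bound, a couple $(a,b)$ responsible for $t\in\cE_{\epsilon,\gamma}^c$: one that is either currently optimal or satisfies $\Nab(t)\geq\gamma\,N_{a_{t+1},b_{t+1}}(t)$, and whose mean deviates, $\abs{\muhatab(t)-\muab}\geq\epsilon$. Grouping the relevant time steps according to the frozen value $n=\Nab(t)$, the deviation reduces to the $t$-free event $\abs{\muhatab^n-\muab}\geq\epsilon$. The relevance condition then caps the number of time steps carrying a given $n$: the pulled couple has count at most $n/\gamma$ at such a time, and since counts are nondecreasing and increase by one at each pull, each of the $\abs{\cA}\abs{\cB}$ couples can be pulled at most $\lceil n/\gamma\rceil$ times while its count stays below $n/\gamma$; thus at most $O(\abs{\cA}\abs{\cB}\,n/\gamma)$ relevant steps share the value $n$. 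Taking expectations gives a bound of the form $\sum_{n\geq1}(\abs{\cA}\abs{\cB}\,n/\gamma)\,\Pr_\nu(\abs{\muhatab^n-\muab}\geq\epsilon)$, and with $\Pr_\nu(\abs{\muhatab^n-\muab}\geq\epsilon)\leq2\e^{-2n\epsilon^2}$ (Hoeffding) together with $\sum_{n}n\,\e^{-2n\epsilon^2}=O(\epsilon^{-4})$ one recovers the claimed $\frac{1}{\gamma\epsilon^{4}}\abs{\cA}^2\abs{\cB}^2$ scaling, the extra factor $\abs{\cA}\abs{\cB}$ coming from the outer union bound over the responsible couple.

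For $\Esp_\nu[\abs{\Lambda_{\epsilon_\nu}}]$ the triggering event now couples several users at once: there exist an arm $a$ and a subset $\cB'\subset\cB$ such that every $\muhatab(t)$, $b\in\cB'$, underestimates $\muab$ by at least $\epsilon_\nu$, while the self-normalized sum $\sum_{b\in\cB'}\Nab(t)\,\klof{\muhatab(t)}{\muab-\epsilon_\nu}+\log(\Nab(t))$ stays below the exploration threshold $\log(N_{a_{t+1},b_{t+1}}(t))$. After union-bounding over $a$ and over $\cB'$, I would freeze each empirical mean at its pull count and invoke the self-normalized concentration lemmas of Appendix~\ref{subsec : imed_concentration}. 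Because the samples of the $\abs{\cB'}$ users are independent, the resulting tail factorizes into a product over $b\in\cB'$ of a geometric-type series, each term of which is controlled by the constant $E_\nu$ (the cube in its definition reflecting the polynomial-times-exponential peeling $\sum_n n^2\,\e^{-cn}=O((1-\e^{-c})^{-3})$); the product bound $(1+E_\nu)^{\abs{\cB'}}\leq(1+E_\nu)^{\abs{\cB}}$ is exactly the announced factor, with $\abs{\cA}^2\abs{\cB}$ absorbing the union bounds and the freezing of the chosen couple.

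I expect the $\Lambda_{\epsilon_\nu}$ bound to be the main obstacle. Unlike $\cE_{\epsilon,\gamma}^c$, whose responsible couple contributes a single one-dimensional deviation, $\Lambda_{\epsilon_\nu}$ entangles a multi-user weighted-$\kl$ sum with the threshold $\log(N_{a_{t+1},b_{t+1}}(t))$ tying the responsible counts to that of the pulled couple. Making the product structure over $b\in\cB'$ rigorous — so that the per-user series multiply into the clean $(1+E_\nu)^{\abs{\cB}}$ bound rather than leaving uncontrolled cross-terms — is the delicate point, and is precisely where the tailored constant $E_\nu$ and the monotonicity estimates for $\kl$ recorded in Appendix~\ref{notations} are used. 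The remaining computations are careful but routine peeling-and-concentration arguments, carried out in Appendix~\ref{app:proof_ce_and_lambda_finite}.
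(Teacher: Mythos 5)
Your overall route is the same as the paper's: union-bound over the responsible objects, convert the time sum into a sum over frozen pull counts $\muhatab^n$, apply Hoeffding for $\cE_{\epsilon,\gamma}^c$ and the Honda--Takemura-type large-deviation bound (Lemma~\ref{imedstar large deviation}) for $\Lambda_{\epsilon_\nu}$, and use independence across users to factorize the latter into $\prod_{b\in\cB'}E_\nu$ before summing over subsets to get $(1+E_\nu)^{\abs{\cB}}$. The $\Lambda_{\epsilon_\nu}$ part of your sketch, including the exponentiation of the threshold and the identification of the cube in $E_\nu$ with the $\sum_n n^2 e^{-cn}$ peeling, matches Appendix~\ref{app:proof_ce_and_lambda_finite} and Lemma~\ref{imedstar large deviation} essentially verbatim.

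There is one genuine gap, in the $\cE_{\epsilon,\gamma}^c$ bound. The event has two branches: the responsible couple either satisfies $N_{a,b}(t)\geq\gamma\,N_{a_{t+1},b_{t+1}}(t)$ or lies in $\Ohat^\star(t)$. Your multiplicity cap (``the pulled couple has count at most $n/\gamma$ at such a time, hence at most $O(\abs{\cA}\abs{\cB}\,n/\gamma)$ relevant steps share the frozen value $n$'') only follows from the first branch. A couple can in principle sit in $\Ohat^\star(t)$ with a small, frozen count $n$ and a deviating empirical mean for arbitrarily many rounds, in which case the number of time steps sharing that $n$ is not controlled and the expectation could diverge. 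The paper closes this by invoking the strategy-based empirical lower bound (Lemma~\ref{imedstar empirical lower bounds}): under \IMEDSstar, every $(a',b')\in\Ohat^\star(t)$ satisfies $N_{a',b'}(t)\geq N_{a_{t+1},b_{t+1}}(t)\geq\gamma\,N_{a_{t+1},b_{t+1}}(t)$, so the optimal branch is subsumed by the count-ratio branch and the whole bound reduces to the strategy-independent concentration Lemma~\ref{imedstar concentration}. You need this one-line reduction (or an equivalent strategy-dependent argument) before your peeling applies; without it the first inequality of the lemma is not established.
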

Thus combining them with \eqref{eq:decomp_T_epsilon} we obtain 
\[
\Esp_\nu[\abs{\cT_{\epsilon,\gamma}^c}] \leq \Esp_\nu[\abs{\cE_{\epsilon,\gamma}^c}] + \Esp_\nu[\abs{\Lambda_{\epsilon_\nu}}] \leq \dfrac{17}{\gamma\epsilon^4}  \abs{\cA}^2\abs{\cB}^2 +  2\abs{\cA}^2\abs{\cB}(1 + E_\nu)^{\abs{\cB}} \,.
\]
Hence, we just proved the following lemma.
\begin{lemma}[Reliable estimators] \label{imedstar reliability}For $ 0 < \epsilon < \epsilon_\nu$ and $\gamma \in (0,1/2)$, 
\[
\Esp_\nu[\abs{\cT_{\epsilon,\gamma}^{ c}}] \leq \dfrac{17}{\gamma\epsilon^4}  \abs{\cA}^2\abs{\cB}^2 +  2\abs{\cA}^2\abs{\cB}(1 + E_\nu)^{\abs{\cB}} \,.
\]
Refer to Appendix~\ref{notations} for the definitions of $\epsilon_\nu$ and $E_\nu$.
\end{lemma}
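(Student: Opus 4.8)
The plan is to read this lemma off directly as a consequence of the two results that precede it — the set inclusion of Lemma~\ref{lem:decomposition_T_epsilon} and the expectation bounds of Lemma~\ref{lem:ce_and_lambda_are_finite} — so that no fresh estimate is required. Since the upstream work has already been done, the statement here is essentially a bookkeeping step that packages those two bounds into a single reliability guarantee.

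First I would rewrite the inclusion $\cT_{\epsilon,\gamma}^c \setminus \cE_{\epsilon,\gamma}^c \subset \Lambda_{\epsilon_\nu}$ of Lemma~\ref{lem:decomposition_T_epsilon} in the equivalent form $\cT_{\epsilon,\gamma}^c \subset \cE_{\epsilon,\gamma}^c \cup \Lambda_{\epsilon_\nu}$. As all three are (random) subsets of $\Nat$, subadditivity of cardinality over a union gives pointwise $\abs{\cT_{\epsilon,\gamma}^c} \leq \abs{\cE_{\epsilon,\gamma}^c} + \abs{\Lambda_{\epsilon_\nu}}$. Taking expectations under $\nu$ and inserting the two bounds of Lemma~\ref{lem:ce_and_lambda_are_finite} then yields exactly the claimed inequality, the term $\tfrac{17}{\gamma\epsilon^4}\abs{\cA}^2\abs{\cB}^2$ coming from $\Esp_\nu[\abs{\cE_{\epsilon,\gamma}^c}]$ and the term $2\abs{\cA}^2\abs{\cB}(1+E_\nu)^{\abs{\cB}}$ from $\Esp_\nu[\abs{\Lambda_{\epsilon_\nu}}]$. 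The hypotheses $0<\epsilon<\epsilon_\nu$ and $\gamma\in(0,1/2)$ are simply inherited from Lemma~\ref{lem:ce_and_lambda_are_finite}, and the restriction $\gamma<1/2$ is exactly the one needed there, not for the union step.

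Consequently, there is no genuine obstacle internal to this lemma: the entire difficulty has been pushed into the two upstream results. If I had to establish those from scratch, the hard part would be controlling $\Esp_\nu[\abs{\Lambda_{\epsilon_\nu}}]$ — on $\Lambda_{\epsilon_\nu}$ a non-optimal, under-explored couple must simultaneously have its empirical mean underestimated by $\epsilon_\nu$ and satisfy a logarithmic index inequality, and converting ``this occurs for only finitely many $t$ in expectation'' into an explicit bound requires summing deviation inequalities for the empirical means $\muhatab^n$ at the stopping times $\tau_{a,b}^n$, which is precisely where the geometric-type constant $E_\nu$ and the factor $(1+E_\nu)^{\abs{\cB}}$ (a union over subsets $\cB'\subset\cB$) appear. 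The $\cE_{\epsilon,\gamma}^c$ bound is comparatively routine concentration, but relies on the proportional-pull coefficient $\gamma$ to guarantee enough samples, which is why $\gamma$ sits in the denominator; and the decomposition lemma itself is where the structural input enters, through the empirical lower bounds of Lemma~\ref{imedstar empirical lower bounds} and the non-peculiar assumption that each user admits a unique optimal arm.
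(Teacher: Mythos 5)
Your proposal is correct and matches the paper's own argument exactly: the paper also rewrites the inclusion of Lemma~\ref{lem:decomposition_T_epsilon} as $\cT_{\epsilon,\gamma}^c \subset \cE_{\epsilon,\gamma}^c \cup \Lambda_{\epsilon_\nu}$, bounds $\abs{\cT_{\epsilon,\gamma}^c} \leq \abs{\cE_{\epsilon,\gamma}^c} + \abs{\Lambda_{\epsilon_\nu}}$, and takes expectations using Lemma~\ref{lem:ce_and_lambda_are_finite}. Your surrounding remarks about where the real difficulty lives (the $\Lambda_{\epsilon_\nu}$ bound and the decomposition lemma) are accurate but not needed for this step.
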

\subsection{Pareto-optimality and upper bounds on the numbers of pulls of sub-optimal arms}
In this section, we combine the different results of the previous sections to prove the following proposition.
\begin{proposition}[Upper bounds] \label{prop:upper bounds} Let $ \nu \in \cD_\omega$. Let $ 0 < \epsilon < \epsilon_\nu$ and $\gamma \in (0,1/2)$. Let us consider
\[
 \cT_{\epsilon, \gamma} \coloneqq \Set{t \geq 1:\ \begin{array}{l}
		  \Ohat^\star(t) = \cO^\star  \\  \forall (a,b) \textnormal{ s.t. } N_{a,b}(t) \geq \gamma\, N_{a_{t+1},b_{t+1}}(t) \textnormal{ or } (a,b) \in \cO^\star, \ \abs{\muhatab(t) - \muab } < \epsilon 
		\end{array}   }\,. 
\] 
Then under \IMEDSstar strategy,
\[
\Esp_\nu[\abs{\cT_{\epsilon,\gamma}^{ c}}] \leq \dfrac{17}{\gamma\epsilon^4}  \abs{\cA}^2\abs{\cB}^2 +  2\abs{\cA}^2\abs{\cB}(1 + E_\nu)^{\abs{\cB}} 
\]
and for all horizon time $ T \geq 1$,
\beqan 
\forall a \in \cA,\ \min\limits_{b:\, (a,b) \notin \cO^\star }\dfrac{1}{\log\!\left(N_b(T)\right)} \sum\limits_{b'\in \Bab}N_{a,b'}(T)\, \kl(\mu_{a,b'}|\mu_b^\star - \omegabb) 
\!\!\!&\leq& \!\!\!(1 + \alpha_\nu(\epsilon))\left[ 1 + \gamma \dfrac{M_\nu}{m_\nu} \right]  \\
&+&\!\!\! \dfrac{M_\nu\abs{\cT_{\epsilon,\gamma}}}{\min_{b \in \cB}\log\!\left(N_b(T)\right)}   
\eeqan
where $m_\nu$ and $M_\nu$ are defined  as follows:
\[
m_\nu = \min\limits_{\substack{(a,b) \notin \cO^\star \\ b'\in \Bab}} \klabb,\qquad M_\nu = \max\limits_{(a,b) \notin \cO^\star}\sum\limits_{b'\in \Bab} \klabb  \,.
\]
Furthermore, we have 
\[ \forall (a,b) \notin \cO^\star,\quad N_{a,b}(T) \leq \dfrac{1 + \alpha_\nu(\epsilon)}{m_\nu} \log\!\left(N_b(T)\right) + \abs{\cT_{\epsilon,\gamma}^c} \,.
\]
Refer to Appendix~\ref{notations} for the definitions of $\epsilon_\nu$, $\alpha_\nu(\cdot)$ and $E_\nu$.
\end{proposition}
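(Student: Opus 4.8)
The plan is to prove the three displayed bounds in turn. The first, $\Esp_\nu[|\cT_{\epsilon,\gamma}^c|]\leq\ldots$, is nothing but the statement of Lemma~\ref{imedstar reliability} (itself obtained from Lemma~\ref{lem:decomposition_T_epsilon} and Lemma~\ref{lem:ce_and_lambda_are_finite}), so there is nothing new to do there; it is restated only to make the finite-time bounds self-contained. The real work concerns the pull counts, and the guiding principle throughout is a partition of the rounds $t<T$ into the \emph{reliable} times $\cT_{\epsilon,\gamma}$, on which $\Ohat^\star(t)=\cO^\star$ and the relevant empirical means are $\epsilon$-accurate, and the \emph{unreliable} times $\cT_{\epsilon,\gamma}^c$, whose number is controlled on average by the first bound. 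Every pull occurring at an unreliable time is simply charged to the additive $|\cT_{\epsilon,\gamma}^c|$ (resp.\ $M_\nu|\cT_{\epsilon,\gamma}^c|$) correction, so that the main effort concerns pulls at reliable times.

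For the upper bound on a single sub-optimal count $N_{a,b}(T)$, I would isolate the last reliable time $t$ at which $(a,b)$ is pulled. Since $(a,b)\notin\cO^\star=\Ohat^\star(t)$, this is an exploration round with $a_{t+1}=\aul_t=a$ and $b_{t+1}=b$, and by maximality all later pulls of $(a,b)$ lie in $\cT_{\epsilon,\gamma}^c$, whence $N_{a,b}(T)\leq N_{a,b}(t)+1+|\cT_{\epsilon,\gamma}^c|$. To bound $N_{a,b}(t)$ I would feed the empirical upper bound of Lemma~\ref{imedstar empirical upper bounds} (the term indexed by $b'=b$, or, in a forced-exploration round, the least-pulled user which is dominated by any informative one) into the crucial observation that the minimal index is small for \emph{every} user: each optimal couple $(a_{b'}^\star,b')\in\Ohat^\star(t)$ has index exactly $\log(N_{a_{b'}^\star,b'}(t))\leq\log(N_{b'}(t))$, so minimality gives $I_{\aul_t,\bul_t}(t)\leq\log(N_b(t))$ for the target user $b$ specifically. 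Since on $\cT_{\epsilon,\gamma}$ the empirical divergence $\klof{\muhat_{a,b}(t)}{\muhat_{\bul_t}^\star(t)-\omega_{\bul_t,b}}$ is at least $m_\nu/(1+\alpha_\nu(\epsilon))$ via the perturbation inequalities of Appendix~\ref{notations}, this yields $N_{a,b}(t)\leq\frac{1+\alpha_\nu(\epsilon)}{m_\nu}\log(N_b(t))\leq\frac{1+\alpha_\nu(\epsilon)}{m_\nu}\log(N_b(T))$, as required.

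For the Pareto-type bound, I would fix an arm $a$ sub-optimal for at least one user (otherwise the $\min$ is $-\infty$ by convention) and take the last reliable round $t$ at which $a$ is explored; this produces a witness user $b=\bul_t$ with $(a,b)\notin\cO^\star$. The first inequality of Lemma~\ref{imedstar empirical upper bounds} controls the empirical information sum over $\Bhat_{a,b}(t)$ by $\log(N_b(t))$, and I would convert it into the true sum over $\Bab$ by splitting $\Bab$ into users pulled at least $\gamma$-proportionally to the current pull, whose $\epsilon$-accuracy both places them in $\Bhat_{a,b}(t)$ and upgrades the empirical divergences to true ones at the price of a factor $(1+\alpha_\nu(\epsilon))$, and the few remaining under-pulled users, whose total weighted contribution is at most $\gamma\,N_{a,b_{t+1}}(t)\,M_\nu\leq\gamma\frac{(1+\alpha_\nu(\epsilon))M_\nu}{m_\nu}\log(N_b(t))$ by the same minimal-index argument applied to $b=\bul_t$. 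This is precisely the origin of the factor $(1+\alpha_\nu(\epsilon))\big[1+\gamma M_\nu/m_\nu\big]$. Passing from counts at time $t$ to counts at $T$ only adds unreliable-time pulls, giving the $M_\nu|\cT_{\epsilon,\gamma}^c|/\min_b\log(N_b(T))$ term after dividing by $\log(N_b(T))\geq\log(N_b(t))$ and minimizing over $b$.

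The step I expect to be most delicate is the reconciliation of the \emph{empirical} objects $\Bhat_{a,b}(t)$ and $\klof{\muhat}{\cdot}$ appearing in the algorithm's indices with the \emph{true} objects $\Bab$ and $\kl(\mu|\cdot)$ of the statement. Getting the constants right requires simultaneously ensuring that every user counted on the true side with a reliable estimate is indeed selected on the empirical side — this is where $\cD_\omega$ having empty frontiers $\partial\cB_{a,b}$ and the choice $\epsilon<\epsilon_\nu$ matter, to rule out boundary ambiguities — and absorbing the under-pulled users through the $\gamma$-threshold without letting their cumulative weight exceed $M_\nu$. A second subtlety is the mismatch between the tracked user $b_{t+1}$, the minimal-index user $\bul_t$, and the target user $b$ in the single-count bound; the remedy, as above, is to bound the common minimal index $I_{\aul_t,\bul_t}(t)$ against the optimal-couple index of the \emph{specific} user $b$, which is what finally produces $\log(N_b(\cdot))$ rather than $\log(N_{\bul_t}(\cdot))$ on the right-hand side.
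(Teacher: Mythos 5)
Your proposal is correct and follows essentially the same route as the paper's proof: the reliable/unreliable time decomposition with the unreliable pulls charged to $\abs{\cT_{\epsilon,\gamma}^c}$, the empirical upper bounds of Lemma~\ref{imedstar empirical upper bounds} combined with the $\gamma$-threshold inclusion of well-pulled users into $\Bhat_{a,\bul_t}(t)$, the $(1+\alpha_\nu(\epsilon))$ perturbation to pass from empirical to true divergences, and the $\gamma\, M_\nu\, N_{a_{t+1},b_{t+1}}(t)$ charge for under-pulled users bounded via the minimal index. Your explicit observation that the minimal index is bounded by $\log\!\left(N_b(t)\right)$ for \emph{every} user $b$ (through that user's empirically optimal couple) is precisely the step the paper uses implicitly when it writes $\log\!\left(N_b(T)\right)$ on the right-hand side, so your reconciliation of the target user $b$, the tracked user $b_{t+1}$ and the minimal-index user $\bul_t$ matches the paper's argument.
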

\begin{proof}From Lemma~\ref{imedstar reliability}, we have: 
		\[
		\Esp_\nu[\abs{\cT_{\epsilon,\gamma}^{ c}}] \leq \dfrac{17}{\gamma\epsilon^4}  \abs{\cA}^2\abs{\cB}^2 +  2\abs{\cA}^2\abs{\cB}(1 + E_\nu)^{\abs{\cB}} \,.
		\]
Let $a \in \cA$. Let us consider $ 1 \leq t \leq T $ such that $a_{t+1} = a$, $(a_{t+1},b_{t+1}) \notin \cO^\star$ and $ t \in \cT_{\epsilon,\gamma}$. Then, according to \IMEDSstar strategy (see Algorithm~\ref{alg:IMEDSstar}), we have $$ (a,\bul_t) = (\aul_t,\bul_t) \notin \Ohat^\star(t) \,.$$
From Lemma~\ref{imedstar  empirical upper bounds} this implies
\beq \label{eq1_proof_pareto_opt}
\sum\limits_{b'\in \Bhat_{a,\bul_t}(t)} N_{a,b'}(t)\, \klof{\muhat_{a,b'}(t)}{\muhat_{\bul_t}^\star(t) - \omega_{\bul_t,b'}} \leq \log\!\left(N_b(T)\right)\,.
\eeq
Since $t \in \cT_{\epsilon,\gamma}$ and $\epsilon < \epsilon_\nu$, we  have 
\beq \label{incl1_proof_pareto_opt}
\Set{b ' \in \cB_{a,\bul_t}:\ N_{a,b'}(t) \geq \gamma N_{a_{t+1},b_{t+1}}(t)} \subset \Bhat_{a,\bul_t}(t) \,. 
\eeq
Combining inequality~(\ref{eq1_proof_pareto_opt}) with inclusion~(\ref{incl1_proof_pareto_opt}), it comes
\beq \label{eq2_proof_pareto_opt}
\sum\limits_{b'\in \cB_{a,\bul_t}:\ N_{a,b'}(t) \geq\gamma N_{a_{t+1},b_{t+1}}(t)} N_{a,b'}(t)\, \klof{\muhat_{a,b'}(t)}{\muhat_{\bul_t}^\star(t) - \omega_{\bul_t,b'}} \leq \log\!\left(N_b(T)\right) \,.
\eeq
Since $t \in \cT_{\epsilon,\gamma}$, we have 
\beq \label{eq3_proof_pareto_opt}
\abs{\muhat_{\bul_t}^\star(t) - \mu_{\bul_t}^\star} < \epsilon \ad \forall b'\in \cB_{a,\bul_t} \textnormal{ s.t. }N_{a,b'}(t) \geq\gamma N_{a_{t+1},b_{t+1}}(t),\ \abs{\muhat_{a,b'}(t) - \mu_{a,b'}} < \epsilon \,.
\eeq
By construction of $\alpha_\nu(\cdot)$ (see Section~\ref{notations}), since $ \epsilon < \epsilon_\nu$, inequalities~(\ref{eq2_proof_pareto_opt}) and~(\ref{eq3_proof_pareto_opt}) give us
\beq \label{eq4_proof_pareto_opt}
\sum\limits_{b'\in \cB_{a,\bul_t}:\ N_{a,b'}(t) \geq\gamma N_{a_{t+1},b_{t+1}}(t)} N_{a,b'}(t)\, \klof{\mu_{a,b'}}{\mu_{\bul_t}^\star - \omega_{\bul_t,b'}} \leq \left(1 + \alpha_\nu(\epsilon))\right)\log\!\left(N_b(T)\right) \,.
\eeq
This implies
\[
\sum\limits_{b'\in \cB_{a,\bul_t}} N_{a,b'}(t)\, \klof{\mu_{a,b'}}{\mu_{\bul_t}^\star - \omega_{\bul_t,b'}} \leq \left(1 + \alpha_\nu(\epsilon))\right)\log\!\left(N_b(T)\right) + \gamma M_\nu N_{a_{t+1},b_{t+1}}(t) \,.
\]
%In particular, we have 
%\[
%\min\limits_{b:\, (a,b) \notin \cO^\star } \sum\limits_{b'\in \Bab}N_{a,b'}(t)\, \kl(\mu_{a,b'}|\mu_b^\star - \omegabb) \leq \left(1 + \alpha_\nu(\epsilon))\right)\log(t) + \gamma M_\nu N_{a_{t+1},b_{t+1}}(t) \,.
%\]
Furthermore, using inequality~(\ref{eq4_proof_pareto_opt}), we get
\beq \label{eq5_proof_pareto_opt}
N_{a_{t+1},b_{t+1}}(t) \leq \left\{ \begin{array}{ll}
    N_{a,\bul_t}(t) \leq \dfrac{\left(1+\alpha_\nu(\epsilon)\right)\log\!\left(N_b(T)\right)}{\klof{\mu_{a,\bul_t}}{\mu_{\bul_t}^\star}} & \textnormal{if } c_a = c_a^+ \,,\\
     \dfrac{\left(1+\alpha_\nu(\epsilon)\right)\log\!\left(N_b(T)\right)}{\klof{\mu_{a_{t+1},b_{t+1}}}{\mu_{\bul_t}^\star -\omega_{\bul_t,b_{t+1}}}} & \textnormal{if } c_a < c_a^+ \,. 
\end{array} \right.
\eeq

~\\ Thus, we have shown that for all arm $a \in \cA$, for all time step $1 \leq t \leq T$ such that $a_{t+1} = a$, $(a_{t+1},b_{t+1}) \notin \cO^\star$ and $ t \in \cT_{\epsilon,\gamma}$:
\[
\min\limits_{b:\, (a,b) \notin \cO^\star } \dfrac{1}{\log\!\left(N_b(T)\right)}\sum\limits_{b'\in \Bab}N_{a,b'}(t)\, \kl(\mu_{a,b'}|\mu_b^\star - \omegabb) \leq \left(1 + \alpha_\nu(\epsilon))\right)\left(1 + \gamma \dfrac{M_\nu}{m_\nu}\right)
\]
and 
\[
N_{a_{t+1},b_{t+1}}(t) \leq 
     \dfrac{\left(1+\alpha_\nu(\epsilon)\right)\log\!\left(N_b(T)\right)}{m_\nu} \,.
\]
This implies for all arm $a \in \cA$ and for all time step $ 1 \leq t \leq T$,
\beqan
\min\limits_{b:\, (a,b) \notin \cO^\star }\dfrac{1}{\log\!\left(N_b(T)\right)} \sum\limits_{b'\in \Bab}N_{a,b'}(T)\, \kl(\mu_{a,b'}|\mu_b^\star - \omegabb) 
\!\!\!&\leq& \!\!\!(1 + \alpha_\nu(\epsilon))\left[ 1 + \gamma \dfrac{M_\nu}{m_\nu} \right]  \\
&+&\!\!\! \dfrac{M_\nu\abs{\cT_{\epsilon,\gamma}}}{\min_{b \in \cB}\log\!\left(N_b(T)\right)}   
\eeqan
and 
\[
\forall b: (a,b) \notin \cO^\star,\quad N_{a,b}(T) \leq 
     \dfrac{\left(1+\alpha_\nu(\epsilon)\right)\log\!\left(N_b(T)\right)}{m_\nu} + \abs{\cT_{\epsilon,\gamma}^c}\,.
\]
\end{proof}
It can be easily proved that under \IMEDSstar $N_b(T) \to \infty$ for all $b \in \cB$ (see Lemma~\ref{nb pulls to infty}). From previous Proposition~\ref{prop:upper bounds}, we deduce the following corollary by doing $T \to \infty$, then $\epsilon,\,\gamma \to 0$.
\begin{corollary}[Pareto optimality]\label{cor: pareto optimality}Let $\nu \in \cD_\omega$. Let $ a\!\in\! \cA$ such that $\Set{b \in \cB: (a,b) \notin \cO^\star} \neq \emptyset$. Then, we have 
$$ \limsupT\min\limits_{b:\, (a,b) \notin \cO^\star }\dfrac{1}{\log\!\left(N_b(T)\right)} \sum\limits_{b'\in \Bab}N_{a,b'}(T) \kl(\mu_{a,b'}|\mu_b^\star - \omegabb)  \leq  1 \,.$$
\end{corollary}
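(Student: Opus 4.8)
The plan is to obtain the corollary as a limiting consequence of the finite-time bound in Proposition~\ref{prop:upper bounds}, being careful to send $T \to \infty$ \emph{before} sending the accuracy parameters to zero. Fix an arm $a$ with $\Set{b \in \cB: (a,b) \notin \cO^\star} \neq \emptyset$, and fix $0 < \epsilon < \epsilon_\nu$ and $\gamma \in (0,1/2)$. Proposition~\ref{prop:upper bounds} gives, for every horizon $T \geq 1$,
\[
\min\limits_{b:\, (a,b) \notin \cO^\star }\dfrac{1}{\log\!\left(N_b(T)\right)} \sum\limits_{b'\in \Bab}N_{a,b'}(T)\, \kl(\mu_{a,b'}|\mu_b^\star - \omegabb) \leq (1 + \alpha_\nu(\epsilon))\left[ 1 + \gamma \dfrac{M_\nu}{m_\nu} \right] + \dfrac{M_\nu\abs{\cT_{\epsilon,\gamma}^c}}{\min_{b \in \cB}\log\!\left(N_b(T)\right)}\,.
\]

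Next I would argue that the residual term on the right vanishes as $T \to \infty$. On the one hand, Lemma~\ref{imedstar reliability} bounds $\Esp_\nu[\abs{\cT_{\epsilon,\gamma}^c}]$ by a finite constant (depending on $\epsilon, \gamma$ but not on $T$); since $\abs{\cT_{\epsilon,\gamma}^c}$ is a nonnegative integer-valued random variable with finite expectation, it is almost surely finite. On the other hand, Lemma~\ref{nb pulls to infty} ensures $N_b(T) \to \infty$ almost surely for every user $b$, so $\min_{b \in \cB} \log(N_b(T)) \to \infty$. Hence, on the almost sure event where both hold, the ratio $M_\nu \abs{\cT_{\epsilon,\gamma}^c} / \min_{b} \log(N_b(T))$ tends to $0$, and taking $\limsup_{T\to\infty}$ yields
\[
\limsupT \min\limits_{b:\, (a,b) \notin \cO^\star }\dfrac{1}{\log\!\left(N_b(T)\right)} \sum\limits_{b'\in \Bab}N_{a,b'}(T)\, \kl(\mu_{a,b'}|\mu_b^\star - \omegabb) \leq (1 + \alpha_\nu(\epsilon))\left[ 1 + \gamma \dfrac{M_\nu}{m_\nu} \right]\,.
\]

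The left-hand side does not involve $\epsilon$ or $\gamma$, so the final step is to optimize the right-hand side over these parameters. Since $M_\nu/m_\nu$ is a fixed finite constant, sending $\gamma \to 0$ removes the second bracket term, and sending $\epsilon \to 0$ drives $\alpha_\nu(\epsilon) \to 0$ by the construction of $\alpha_\nu$ in Appendix~\ref{notations}; the right-hand side therefore converges to $1$, giving the claim.

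The only genuine subtlety is the order of limits. The expectation bound in Lemma~\ref{imedstar reliability} is of the form $\tfrac{17}{\gamma \epsilon^4}\abs{\cA}^2\abs{\cB}^2 + \ldots$, which diverges as $\epsilon, \gamma \to 0$, so one cannot let $\epsilon, \gamma \to 0$ uniformly in $T$. The correct route --- and the reason Proposition~\ref{prop:upper bounds} is phrased with a fixed finite residual --- is to first use $N_b(T) \to \infty$ to annihilate $\abs{\cT_{\epsilon,\gamma}^c}/\min_b \log N_b(T)$ at fixed $\epsilon, \gamma$, and only afterwards take $\epsilon, \gamma \to 0$. Everything substantive (the empirical upper bounds of Lemma~\ref{imedstar empirical upper bounds}, the reliability estimate of Lemma~\ref{imedstar reliability}, and the domination $\Nopt \geq N$ of Lemma~\ref{nopt dominates}) has already been folded into Proposition~\ref{prop:upper bounds}, so what remains is precisely this clean double-limit argument.
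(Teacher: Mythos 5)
Your proposal is correct and follows exactly the paper's intended route: the paper deduces the corollary from Proposition~\ref{prop:upper bounds} "by doing $T\to\infty$, then $\epsilon,\gamma\to 0$", using Lemma~\ref{nb pulls to infty} for $N_b(T)\to\infty$, which is precisely your double-limit argument (and you correctly read the residual term as involving $\abs{\cT_{\epsilon,\gamma}^c}$, fixing an evident typo in the proposition's display). Your explicit justification of the order of limits and of the almost-sure finiteness of $\abs{\cT_{\epsilon,\gamma}^c}$ fills in details the paper leaves implicit, but the argument is the same.
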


\subsection{\label{appendix uniformly spread}\texorpdfstring{\IMEDSstar}{TEXT} is consistent and induces sequences of users with log-frequencies \texorpdfstring{$1_\cB$}{TEXT}}
In this section we show that \IMEDSstar is a consistent strategy that induces sequences of users with log-frequencies all equal to $1$, independently from the considered bandit configuration in $\cD$.
\begin{lemma}[Consistency, log-frequencies $1_\cB$]\IMEDSstar is a consistent strategy and induces sequences of users with log-frequencies all equal to $1$.
\end{lemma}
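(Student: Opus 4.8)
The plan is to deduce both claims from the finite-time control already available, namely Proposition~\ref{prop:upper bounds} for the numbers of pulls of sub-optimal couples, together with the facts that $N_b(T)\to\infty$ almost surely for every $b\in\cB$ (Lemma~\ref{nb pulls to infty}) and that $\Esp_\nu[\,\abs{\cT_{\epsilon,\gamma}^c}\,]<\infty$. Consistency will follow essentially immediately, while the log-frequency statement will in addition require a balancing argument that exploits the special form of the index on optimal couples. Throughout I work with $\nu\in\cD_\omega$ and assume a unique optimal arm $a_b^\star$ per user for the main argument, commenting on the general case at the end.

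For consistency, fix $(a,b)\notin\cO^\star$ and $\alpha>0$. For $0<\epsilon<\epsilon_\nu$ and $\gamma\in(0,1/2)$, Proposition~\ref{prop:upper bounds} gives the almost sure bound $N_{a,b}(T)\leq \frac{1+\alpha_\nu(\epsilon)}{m_\nu}\log\!\left(N_b(T)\right)+\abs{\cT_{\epsilon,\gamma}^c}$. Dividing by $N_b(T)^\alpha$ and taking expectations, I would control the first term with the elementary inequality $\log(x)/x^\alpha\leq \e^{-1}/\alpha$ for $x\geq1$ and $N_b(T)\to\infty$, so that $\Esp_\nu[\log(N_b(T))/N_b(T)^\alpha]\to0$ by dominated convergence; and the second term by noting $\abs{\cT_{\epsilon,\gamma}^c}/N_b(T)^\alpha\leq \abs{\cT_{\epsilon,\gamma}^c}$ is dominated by an integrable variable while $N_b(T)^\alpha\to\infty$ almost surely, so its expectation vanishes as well. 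This yields $\limT\Esp_\nu[N_{a,b}(T)/N_b(T)^\alpha]=0$, i.e. Definition~\ref{def:consistent}.

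For the log-frequencies, set $n_b(t):=N_{a_b^\star,b}(t)$. The decomposition $N_b(T)=n_b(T)+\sum_{a\neq a_b^\star}N_{a,b}(T)$ and the sub-optimal bound give $N_b(T)=n_b(T)+O(\log T)$ almost surely, hence $\sum_b n_b(T)=T-O(\log T)$, and it suffices to show $\min_b n_b(T)=\Theta(T)$. Reading off Algorithm~\ref{alg:IMEDSstar}, the structural fact is that on an exploitation round with $\Ohat^\star(t)=\cO^\star$ the pulled couple minimises the index, which on optimal couples equals $\log\!\left(N_{a,b}(t)\right)$; exploitation therefore increments $n_{b^\star}$ with $b^\star\in\argmin_b n_b(t)$, i.e. it increments the current minimum. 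I would then split rounds into exploitation (``good'') and exploration (``bad''), show that the number of exploration rounds is $O(\log T)$ almost surely -- non-forced explorations pull genuinely sub-optimal couples once $t\in\cT_{\epsilon,\gamma}$, hence are at most the total $O(\log T)$ sub-optimal pulls, while forced explorations of each arm occur only $O(\log\log T)$ times by the doubling of $c_a,c_a^+$ -- and invoke the elementary balancing lemma: a process incrementing the running minimum on good steps and an arbitrary counter on bad steps satisfies $\max_b n_b-\min_b n_b\leq(\#\text{bad steps})+O(1)$. The spread is thus $O(\log T)$, and with $\sum_b n_b(T)=T-O(\log T)$ and $\max_b n_b(T)\geq (T-O(\log T))/\abs{\cB}$ one gets $\min_b n_b(T)\geq T/\abs{\cB}-O(\log T)=\Theta(T)$. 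Hence $N_b(T)=\Theta(T)$ and $\log(N_b(T))/\log(T)\to1$, i.e. the log-frequencies equal $1_\cB$ (Definition~\ref{def:unifspread}).

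The main obstacle is the log-frequency claim, since IMED-GS$^\star$ enforces a roughly uniform treatment of users only implicitly through the indices rather than by an explicit round-robin. The delicate points are (i) proving exploration rounds are only logarithmically many, which needs the tracked couples $(\aul_t,b_{t+1})$ to be sub-optimal on $\cT_{\epsilon,\gamma}$, established via the graph-structure inequality $\mu_{b'}^\star\geq \mu_{\bul_t}^\star-\omega_{\bul_t,b'}$; and (ii) making the balancing argument rigorous despite the finitely many unreliable rounds $t\notin\cT_{\epsilon,\gamma}$ and the transient regime where $\Ohat^\star(t)\neq\cO^\star$, both of which I absorb into the $O(1)$ and $(\#\text{bad steps})$ terms. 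Extending the statement to arbitrary $\nu\in\cD$ with possibly several optimal arms per user is a further, routine but bookkeeping-heavy step: one replaces $n_b$ by the total count of optimal couples of $b$, and only the $O(\log T)$ bound on sub-optimal pulls is needed, which the same analysis provides.
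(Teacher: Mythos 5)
Your proposal is correct and follows essentially the same route as the paper: consistency is read off from the finite-time bound of Proposition~\ref{prop:upper bounds} together with $N_b(T)\to\infty$ and dominated convergence, and the log-frequency claim rests on the same key observation that an exploitation round with $\Ohat^\star(t)=\cO^\star$ pulls the optimal couple with minimal count, so the counts of optimal couples stay balanced up to the number of ``bad'' rounds while sub-optimal pulls contribute only $O(\log T)$. The only difference is bookkeeping: you absorb all exploration rounds into the bad steps and get a spread of $O(\log T)$, whereas the paper bounds the spread by $\abs{\cT_{\epsilon,\gamma}^c}+1$; both yield $N_b(T)=\Theta(T)$ and hence log-frequencies $1_\cB$.
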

\begin{proof}
We first show that  \IMEDSstar induces sequences of users with log-frequencies all equal to $1$.

~\\Let $ \nu \in \cD_\omega$ and let us consider an horizon $T \geq 1$. Let $ 0 < \epsilon < \epsilon_\nu$ and $\gamma \in (0,1/2)$. Let us consider again the set of times
\[
 \cT_{\epsilon, \gamma} \!=\! \Set{\!T\!\geq\! t \!\geq\! 1\!: \begin{array}{l}
		  \Ohat^\star(t) = \cO^\star  \\  \forall (a,b) \textnormal{ s.t. } N_{a,b}(t) \geq \gamma\, N_{a_{t+1},b_{t+1}}(t) \textnormal{ or } (a,b) \in \cO^\star, \ \abs{\muhatab(t) - \muab } < \epsilon 
		\end{array}  }. 
\] 
Then, according to Proposition~\ref{prop:upper bounds}, under \IMEDSstar~\ strategy,
\beq \label{eq:freq_1}
\Esp_\nu[\abs{\cT_{\epsilon,\gamma}^{ c}}] \leq \dfrac{17}{\gamma\epsilon^4}  \abs{\cA}^2\abs{\cB}^2 +  2\abs{\cA}^2\abs{\cB}(1 + E_\nu)^{\abs{\cB}} < \infty \,.
\eeq
and for all horizon time $ T \geq 1$, for all $(a,b) \!\notin\!\cO^\star$,
\beq \label{eq:freq_2}
N_{a,b}(T)\leq \dfrac{1 + \alpha_\nu(\epsilon)}{m_\nu}\log\!\left(N_b(T)\right) + \abs{\cT_{\epsilon,\gamma}^c} \leq \dfrac{1 + \alpha_\nu(\epsilon)}{m_\nu}\log(T) + \abs{\cT_{\epsilon,\gamma}^c} \,,
\eeq
where $m_\nu = \min\limits_{\substack{(a,b) \notin \cO^\star \\ b'\in \Bab}} \klabb $, and $\epsilon_\nu$, $\alpha_\nu(\cdot)$, $E_\nu$ defined in Appendix~\ref{notations}.

~\\ Note that, under \IMEDSstar, for all $ t \geq 1$ such that $(a_{t+1},b_{t+1}) \in \Ohat^\star(t)$ we have 
~\\$ (a_{t+1},b_{t+1}) \in \argmin\limits_{(a,b) \in \Ohat^\star(t)}N_{a,b}(t) $.
This implies by definition of $\cT_{\epsilon,\gamma}$  that
\beq \label{eq:freq_3}
\forall (a,b),\, (a',b') \in \cO^\star,\quad \abs{N_{a,b}(T) - N_{a',b'}(T) } \leq \abs{\cT_{\epsilon,\gamma}^c} + 1 \,.
\eeq
Indeed the difference of pulls between two optimal couples is non-decreasing only at times $t\geq1$ such that the difference is greater than $1$ and $\Ohat^\star(t) \neq \cO^\star$. Combining Eq. \ref{eq:freq_2} and \ref{eq:freq_3} we get
\beqan
\minb N_b(T) 
&\geq& \min\limits_{(a,b) \in \cO^\star}N_{a,b}(T) -1\\
&\geq& \max\limits_{(a,b) \in \cO^\star}N_{a,b}(T) -  \abs{\cT_{\epsilon,\gamma}^c} - 1 \qquad \text{(Eq. \ref{eq:freq_2})} \\
&\geq& \dfrac{1}{\abs{\cB}}\sum\limits_{(a,b) \in \cO^\star}N_{a,b}(T) - \abs{\cT_{\epsilon,\gamma}^c} -1\\
&=& \dfrac{1}{\abs{\cB}} \left( T - \sum\limits_{(a,b) \notin \cO^\star}N_{a,b}(T)\right) - \abs{\cT_{\epsilon,\gamma}^c} - 1\\
&\geq& \dfrac{1}{\abs{\cB}} \left( T - (\abs{\cA}-1)\abs{\cB}\left[ \dfrac{1 + \alpha_\nu(\epsilon)}{m_\nu}\log(T) + \abs{\cT_{\epsilon,\gamma}^c} \right] \right) - \abs{\cT_{\epsilon,\gamma}^c} - 1 \qquad \text{(Eq. \ref{eq:freq_3})}\\
&\geq& \dfrac{T}{\abs{\cB}}  - \abs{\cA} \dfrac{1 + \alpha_\nu(\epsilon)}{m_\nu}\log(T) - \abs{\cA} \abs{\cT_{\epsilon,\gamma}^c} - \abs{\cA} \,. \\
\eeqan
Since $\Esp_\nu\!\left[\abs{\cT_{\epsilon,\gamma}^c} \right]< \infty$ (see Eq. \ref{eq:freq_1}), this implies that \IMEDSstar induces sequences of users with log-frequencies all equal to $1$.

~\\We show the consistency of \IMEDSstar in the following. Let $(a,b) \notin \cO^\star$ and $ \alpha \in (0,1)$.  According to Proposition~\ref{prop:upper bounds},
$$  N_{a,b}(T)\leq \dfrac{1 + \alpha_\nu(\epsilon)}{m_\nu}\log\!\left(N_b(T)\right) + \limsupT\abs{\cT_{\epsilon,\gamma}^c} \,,$$
and monotone convergence theorem ensures
\[\Esp_\nu[\limsupT \abs{\cT_{\epsilon,\gamma}^{c}}]
=\limsupT\Esp_\nu[\abs{\cT_{\epsilon,\gamma}^{ c}}] \leq \dfrac{17}{\gamma\epsilon^4}  \abs{\cA}^2\abs{\cB}^2 +  2\abs{\cA}^2\abs{\cB}(1 + E_\nu)^{\abs{\cB}} < \infty \,. 
\]
This implies 
\[
\dfrac{N_{a,b}(T)}{N_b(T)^\alpha} 
\leq \dfrac{1 + \alpha_\nu(\epsilon)}{m_\nu}\dfrac{\log\!\left(N_b(T)\right)}{N_b(T)^\alpha} + \dfrac{\limsupT\abs{\cT_{\epsilon,\gamma}^c}}{N_b(T)^\alpha} \,,
\] 
and, taking the expectation, dominated convergence theorem implies 
\[
\Esp_\nu\!\left[\dfrac{N_{a,b}(T)}{N_b(T)^\alpha}\right] \leq \Esp_\nu\!\left[\dfrac{1 + \alpha_\nu(\epsilon)}{m_\nu}\dfrac{\log\!\left(N_b(T)\right)}{N_b(T)^\alpha} + \dfrac{\limsupT\abs{\cT_{\epsilon,\gamma}^c}}{N_b(T)^\alpha}\right] \to 0 \,. \]
Indeed, it can be easily shown that under \IMEDSstar $N_b(T) \to \infty$ (see Lemma~\ref{nb pulls to infty}). This implies
$$ \limsupT \Esp_\nu\left[\dfrac{N_{a,b}(T)}{N_b(T)^\alpha}\right] = 0 \,. $$
\end{proof}
\subsection{The counters \texorpdfstring{$c_a$}{TEXT} and \texorpdfstring{$c_a^+$}{TEXT} coincide at most  \texorpdfstring{$O\left(\log(\log(T))\right)$}{TEXT} times}
\label{app: c c+}
Let us consider $0 < \epsilon < \epsilon_\nu $ and $\gamma \in (0,1/2)$. Let us introduce 
$$ \cT_c(T) \coloneqq \Set{ t \in \cT_{\epsilon,\gamma}:\ (a_{t+1},b_{t+1}) \notin \cO^\star \textnormal{ and } c_{a_{t+1}}(t) = c_{a_{t+1}}^+(t) }  \,, $$
where $\cT_{\epsilon,\gamma}$ is define as in Appendix~\ref{subsec: reliable estimators}.

~\\ In this section, we want to bound $\abs{\cT_c(T)}$.

\begin{lemma} \label{lem:Tc(T)} Let $ 0 < \epsilon < \epsilon_\nu $ and $\gamma \in (0,1/2)$. Let us consider an horizon $T \geq 1$. Then, it holds
 \beqan \abs{\cT_c(T)}  \leq 2\abs{\cA} + \abs{\cA} \log_2\left(\dfrac{(1 + \alpha_\nu(\epsilon))\abs{\cB}}{m_\nu}\log(T) + \abs{\cB}\abs{\cT_{\epsilon,\gamma}^c}\right)  \,.\eeqan
 where $m_\nu = \min\limits_{\substack{(a,b) \notin \cO^\star \\ b'\in \Bab}} \klabb $, and $\epsilon_\nu$, $\alpha_\nu(\cdot)$ defined in Appendix~\ref{notations}.
\end{lemma}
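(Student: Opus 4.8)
The plan is to read off, from the counter mechanics of \IMEDSstar, that the forced-exploration rounds of each arm occur at geometrically spaced exploration counts, and then to show that between two such rounds the algorithm performs (on the reliable set $\cT_{\epsilon,\gamma}$) only genuine sub-optimal pulls, whose number is controlled by Proposition~\ref{prop:upper bounds}. Fix an arm $a$ and let $E_a(t)$ be the number of rounds up to $t$ at which \IMEDSstar explores $a$ (the \textbf{else} branch of Algorithm~\ref{alg:IMEDSstar} with $a_{t+1}=a$). Since $c_a$ is incremented by $1$ at each such round while $c_a^+$ is doubled exactly at the forced-exploration rounds (those with $c_a=c_a^+$), the forced explorations of $a$ occur precisely at its explorations numbered $1,2,4,\dots,2^{j-1},\dots$, and between the $(j-1)$-th and the $j$-th forced exploration of $a$ the algorithm performs exactly $2^{j-2}-1$ \emph{tracking} explorations of $a$. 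Every round of $\cT_c(T)$ is a forced exploration of some arm $a=a_{t+1}$; if $t^\star_a$ denotes the last such round charged to $a$, say its $j_a$-th forced exploration, then the rounds of $\cT_c(T)$ charged to $a$ number at most $j_a$, so $\abs{\cT_c(T)}\le\sum_{a\in\cA}j_a$ and it suffices to bound each $j_a$.

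The main difficulty, and the heart of the argument, is to prove that on $\cT_{\epsilon,\gamma}$ every tracking exploration of $a$ pulls a \emph{sub-optimal} couple. At a tracking round the tracked user satisfies $b_{t+1}\in\Bhat_{\aul_t,\bul_t}(t)\cup\{\bul_t\}$ (Lemma~\ref{nopt dominates} ensures the maximiser lies in this set). If $b_{t+1}=\bul_t$, the pulled couple is $(\aul_t,\bul_t)\notin\Ohat^\star(t)=\cO^\star$, hence sub-optimal. If instead $b_{t+1}\in\Bhat_{\aul_t,\bul_t}(t)$, then $\muhat_{\aul_t,b_{t+1}}(t)<\muhat_{\bul_t}^\star(t)-\omega_{\bul_t,b_{t+1}}$; reliability on $\cT_{\epsilon,\gamma}$ gives $\muhat_{\bul_t}^\star(t)>\mu_{\bul_t}^\star-\epsilon$, and the triangle property of $\omega$ yields $\mu_{b_{t+1}}^\star\ge\mu_{\bul_t}^\star-\omega_{\bul_t,b_{t+1}}$ (the optimal arm $a^\star_{\bul_t}$ satisfies $\mu_{a^\star_{\bul_t},b_{t+1}}\ge\mu_{\bul_t}^\star-\omega_{\bul_t,b_{t+1}}$), so $\muhat_{\aul_t,b_{t+1}}(t)<\mu_{b_{t+1}}^\star-\epsilon$; were $(\aul_t,b_{t+1})$ optimal, it would be $\epsilon$-accurate and hence exceed $\mu_{b_{t+1}}^\star-\epsilon$, a contradiction. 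Thus each good-time tracking exploration of $a$ increments some $N_{a,b}(T)$ with $(a,b)\notin\cO^\star$, so their number is at most $\sum_{b:\,(a,b)\notin\cO^\star}N_{a,b}(T)$, while the bad-time ones number at most $\abs{\cT_{\epsilon,\gamma}^c}$. Applying this to the $2^{j_a-2}-1$ tracking explorations preceding $t^\star_a$ gives
\[
2^{\,j_a-2}-1\ \le\ \sum_{b:\,(a,b)\notin\cO^\star}N_{a,b}(T)+\abs{\cT_{\epsilon,\gamma}^c}\,,
\]
hence $j_a\le 2+\log_2\!\big(\sum_{b:\,(a,b)\notin\cO^\star}N_{a,b}(T)+\abs{\cT_{\epsilon,\gamma}^c}\big)$, the residual additive constants being absorbed into the leading $2\abs{\cA}$.

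It then remains to insert the count bounds. By Proposition~\ref{prop:upper bounds}, for every sub-optimal couple one has $N_{a,b}(T)\le\frac{1+\alpha_\nu(\epsilon)}{m_\nu}\log\!\left(N_b(T)\right)+\abs{\cT_{\epsilon,\gamma}^c}\le\frac{1+\alpha_\nu(\epsilon)}{m_\nu}\log(T)+\abs{\cT_{\epsilon,\gamma}^c}$; summing over the at most $\abs{\cB}$ users yields $\sum_{b:\,(a,b)\notin\cO^\star}N_{a,b}(T)+\abs{\cT_{\epsilon,\gamma}^c}\le\frac{(1+\alpha_\nu(\epsilon))\abs{\cB}}{m_\nu}\log(T)+\abs{\cB}\abs{\cT_{\epsilon,\gamma}^c}$. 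Substituting this into the bound on $j_a$ and summing over $a\in\cA$ produces exactly
\[
\abs{\cT_c(T)}\ \le\ 2\abs{\cA}+\abs{\cA}\log_2\!\left(\frac{(1+\alpha_\nu(\epsilon))\abs{\cB}}{m_\nu}\log(T)+\abs{\cB}\abs{\cT_{\epsilon,\gamma}^c}\right),
\]
which is the claim. I expect the counter bookkeeping of the first paragraph and the substitution of the third to be routine; the genuinely delicate point is the sub-optimality claim of the second paragraph, since it is what upgrades the trivial geometric counting of forced rounds into a bound in terms of the \emph{sub-optimal} pull counts, and it is exactly there that reliability on $\cT_{\epsilon,\gamma}$, the weight triangle inequality, and Lemma~\ref{nopt dominates} must be combined.
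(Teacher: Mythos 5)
Your proof is correct and follows essentially the same route as the paper's: the doubling of $c_a^+$ makes the forced explorations of each arm geometrically spaced, so their number is logarithmic in the exploration counter (the paper's Lemma~\ref{lem: T_c and log(c_a)}); the counter is then tied to the sub-optimal pull counts on $\cT_{\epsilon,\gamma}$ (the paper's Lemma~\ref{counter - Na}, whose one-line proof your second paragraph usefully expands by actually verifying that good-time tracking pulls are genuinely sub-optimal, via reliability and the structural constraint $\mu^\star_{b_{t+1}}\geq\mu^\star_{\bul_t}-\omega_{\bul_t,b_{t+1}}$); and Proposition~\ref{prop:upper bounds} is substituted at the end. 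The only blemish is constant-level bookkeeping: $2^{j_a-2}-1\leq X_a$ gives $j_a\leq 2+\log_2(X_a+1)$, so strictly you land on $3\abs{\cA}$ rather than $2\abs{\cA}$ as the additive term (or you should charge all $2^{j_a-1}-j_a$ tracking rounds rather than only the last gap), a slack of the same harmless order as the paper's own accounting.
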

\begin{proof}From Lemma~\ref{lem: T_c and log(c_a)}, we get:
$$ \abs{\cT_c(T)} \leq 2\abs{\cA} +  \suma \log_2(c_a(T)) \,.$$
Then applying Lemma~\ref{counter - Na}, it comes:
$$ \abs{\cT_c(T)} \leq 2\abs{\cA} +  \suma \log_2\left(\sum\limits_{b:\ (a,b) \notin \cO^\star}N_{a,b}(T) + \abs{\cT_{\epsilon,\gamma}^c}\right) \,.$$
We end the proof by combining the previous inequality with Proposition~\ref{prop:upper bounds} that ensures
$$ \forall (a,b) \notin \cO^\star,\quad N_{a,b}(T) \leq \dfrac{\left(1+\alpha_\nu(\epsilon)\right)}{m_\nu}\log(T) + \abs{\cT_{\epsilon,\gamma}^c} \,. $$
\end{proof}

\begin{lemma} \label{lem: T_c and log(c_a)}Let  $ 0 < \epsilon < \epsilon_\nu $ and $\gamma \in (0,1/2)$. Let us consider an horizon $T \geq 1$. Then, it holds
$$ \abs{\cT_c(T)} \leq 2\abs{\cA} +  \suma \log_2(c_a(T)) \,.$$
Refer to Appendix~\ref{notations} for the definition of $\epsilon_\nu$.
\end{lemma}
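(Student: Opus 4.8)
The plan is to reduce the counting of $\cT_c(T)$ to a purely deterministic bookkeeping of the counters $c_a$ and $c_a^+$ maintained by \IMEDSstar, handled arm by arm. First I would observe that every time step counted in $\cT_c(T)$ falls, by definition, into the exploration branch of Algorithm~\ref{alg:IMEDSstar} (the branch entered when $(\aul_t,\bul_t)\notin\Ohat^\star(t)$), at a round where $a_{t+1}=a$ and $c_a=c_a^+$, i.e. a \emph{forced-exploration round} for arm $a$. Writing $K_a$ for the total number of forced-exploration rounds using arm $a$ up to horizon $T$, and noting that $\cT_c(T)$ retains only those rounds that moreover lie in $\cT_{\epsilon,\gamma}$ and pull a sub-optimal couple, we immediately get $\abs{\cT_c(T)}\le \suma K_a$. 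It then suffices to bound each $K_a$ by $1+\log_2(c_a(T))$.

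The core of the argument is the deterministic dynamics of the pair $(c_a,c_a^+)$. Both start at $1$. Inspecting Algorithm~\ref{alg:IMEDSstar}, at each exploration round using arm $a$ the counter $c_a$ is incremented by exactly $1$, whereas $c_a^+$ is left unchanged except at the forced-exploration rounds, where it is doubled. Consequently, after the $K_a$ doublings we have $c_a^+(T)=2^{K_a}$, and the $j$-th doubling is triggered precisely when $c_a=c_a^+=2^{j-1}$ (the value of $c_a^+$ after $j-1$ doublings). In particular, at the last ($K_a$-th) forced-exploration round $c_a=2^{K_a-1}$ just before $c_a$ is incremented, so that $c_a=2^{K_a-1}+1$ at the end of that round.

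Since $c_a$ is non-decreasing, this yields $c_a(T)\ge 2^{K_a-1}+1 > 2^{K_a-1}$ whenever $K_a\ge 1$ (and $K_a=0$ trivially satisfies $K_a\le 1+\log_2 c_a(T)$ since $c_a(T)\ge 1$). Taking $\log_2$ gives $K_a-1<\log_2 c_a(T)$, hence $K_a\le 1+\log_2 c_a(T)$ for every arm. Summing over $a\in\cA$ finally gives
\[
\abs{\cT_c(T)}\le \suma K_a \le \abs{\cA}+\suma \log_2(c_a(T)) \le 2\abs{\cA}+\suma \log_2(c_a(T))\,,
\]
which is the claimed bound. The only delicate point is the bookkeeping in the second paragraph: one must respect the order of the updates in Algorithm~\ref{alg:IMEDSstar} (the test $c_a=c_a^+$ and the doubling of $c_a^+$ happen \emph{before} the increment of $c_a$), and handle the small-count regime where $c_a=c_a^+$ may hold on several consecutive exploration rounds; keeping track of the exact value of $c_a$ at each triggering — always a power of two — is what makes the $\log_2$ bound go through.
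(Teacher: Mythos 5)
Your proof is correct and takes essentially the same route as the paper's: both reduce $\abs{\cT_c(T)}$ to the number of forced-exploration rounds per arm (using that on $\cT_{\epsilon,\gamma}$ a sub-optimal pull forces the exploration branch) and then exploit the doubling of $c_a^+$ against the unit increments of $c_a$ to obtain the $\log_2$ bound. Your bookkeeping is in fact slightly sharper, giving $\abs{\cA}+\suma\log_2(c_a(T))$ directly, whereas the paper passes through the intermediate inequalities $c_a^+(T)=2^{K_a}$ and $c_a^+(T)\leq 2c_a(T)$ and lands on the weaker stated constant $2\abs{\cA}$.
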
 
\begin{proof}  Let $a \in \cA$. By construction of $(c_a(t))_{1\leq t\leq T}$ and $ (c_a^+(t))_{1\leq t\leq T} $, we have
\beq \label{eq:c=c_1}
c_a^+(T) = 2^{\sum\limits_{t = 1}^T \ind_{\Set{ c_a(t) =c_a^+(t) }} -1} \,.
\eeq
Furthermore, the following inequalities are satisfied
\beq \label{eq:c=c_2}
\abs{\cT_c(T)} \leq \suma \sum\limits_{t = 1}^T \ind_{\Set{ c_a(t) =c_a^+(t) }} \ad \forall a\in \cA,\  c_a^+(T) \leq 2c_a(T) .
\eeq
Then Eq. \ref{eq:c=c_1} and \ref{eq:c=c_2} imply 
$$ 2^{\abs{\cT_c(T)}-\abs{\cA}} \leq 2^{\abs{\cA}}\prod\limits_{a\in\cA}c_a(T) \,. $$
\end{proof}

\begin{lemma}\label{counter - Na}Let  $ 0 < \epsilon < \epsilon_\nu $ and $\gamma \in (0,1/2)$. Let us consider an horizon $T \geq 1$. Then, it holds
$$ \forall a \in \cA,\quad \abs{c_a(T) - \sum\limits_{b:\ (a,b) \notin \cO^\star}N_{a,b}(T)} \leq  \abs{\cT_{\epsilon,\gamma}^c}  \,.$$
Refer to Appendix~\ref{notations} for the definition of $\epsilon_\nu$.
\end{lemma}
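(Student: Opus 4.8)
The plan is to match, round by round, the increments of the counter $c_a$ against the pulls of sub-optimal couples carrying arm $a$, and to show that the two fail to coincide only at times in $\cT_{\epsilon,\gamma}^c$. Write $S_a(T) := \sum_{b:\,(a,b)\notin\cO^\star}N_{a,b}(T)$. By inspection of Algorithm~\ref{alg:IMEDSstar}, $c_a$ is incremented by one at a round $t$ if and only if that round is an \emph{exploration} round ($(\aul_t,\bul_t)\notin\Ohat^\star(t)$) whose chosen arm is $a_{t+1}=\aul_t=a$; hence, up to the initialisation $c_a=1$, $c_a(T)$ equals the number of such rounds. I would therefore track the potential $\Phi_a(t) = (c_a(t)-1) - \sum_{b:(a,b)\notin\cO^\star}N_{a,b}(t)$, with $\Phi_a(1)=0$, and bound the increment $\Phi_a(t+1)-\Phi_a(t)\in\{-1,0,1\}$, which is nonzero only when a round either increments $c_a$ without pulling a sub-optimal $(a,\cdot)$, or pulls a sub-optimal $(a,\cdot)$ without incrementing $c_a$.

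First I would dispose of the good rounds $t\in\cT_{\epsilon,\gamma}$, where $\Ohat^\star(t)=\cO^\star$. On such a round an \emph{exploitation} step pulls $(\aul_t,\bul_t)\in\cO^\star$, contributing to neither term, so $\Phi_a$ is unchanged. It then remains to show that an \emph{exploration} step with $\aul_t=a$ pulls a genuinely sub-optimal couple $(a,b_{t+1})\notin\cO^\star$, so that it increments $c_a$ and $S_a$ simultaneously and $\Phi_a$ is again unchanged. In the non-forced branch $b_{t+1}\in\Bhat_{a,\bul_t}(t)\cup\Set{\bul_t}$; since $t$ is good, $(a,\bul_t)\notin\cO^\star$, and for $b'\in\Bhat_{a,\bul_t}(t)$ the defining inequality $\muhat_{a,b'}(t)<\muhat_{\bul_t}^\star(t)-\omega_{\bul_t,b'}$, combined with $\epsilon$-accuracy of the relevant means and the choice $\epsilon<\epsilon_\nu$ (which keeps the estimates on the correct side of the frontier excluded from $\cD_\omega$), yields $\mu_{a,b'}<\mu_{b'}^\star$, i.e. $(a,b')\notin\cO^\star$. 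Thus the non-forced exploration rounds match exactly on good times.

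The delicate case, and the main obstacle, is the forced branch $c_a=c_a^+$, where $b_{t+1}\in\argmin_{b\in\cB}N_{a,b}(t)$ is chosen by count alone and could a priori land on a user for which $(a,\cdot)$ is optimal. I would rule this out on good times as follows. Because $b_{t+1}$ minimises $N_{a,\cdot}(t)$, every couple $(a,b')$ satisfies $N_{a,b'}(t)\ge N_{a,b_{t+1}}(t)=N_{a_{t+1},b_{t+1}}(t)\ge\gamma\,N_{a_{t+1},b_{t+1}}(t)$, so on a good time \emph{all} couples $(a,\cdot)$ are $\epsilon$-accurate. Since arm $a$ is explored, $(a,\bul_t)\notin\cO^\star$ and Proposition~\ref{prop:upper bounds} bounds $N_{a,\bul_t}(t)$ by $O(\log N_{\bul_t}(t))$, whereas an optimal couple $(a,b')\in\cO^\star$ is pulled $N_{b'}(t)-O(\log N_{b'}(t))$ times (the other arms being sub-optimal for $b'$); as optimal-couple counts dominate sub-optimal ones at reliable times, the minimiser $b_{t+1}$ must be a sub-optimal user, whence $(a,b_{t+1})\notin\cO^\star$. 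This is exactly the step I expect to be hardest, since it must be quantified at an \emph{arbitrary} reliable time rather than only asymptotically; I would extract it from the empirical upper bounds of Lemma~\ref{imedstar empirical upper bounds} together with the balancing of optimal-couple counts already used in the consistency argument.

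Granting these facts, $\Phi_a(t+1)=\Phi_a(t)$ for every $t\in\cT_{\epsilon,\gamma}$, while $|\Phi_a(t+1)-\Phi_a(t)|\le 1$ for $t\in\cT_{\epsilon,\gamma}^c$. Telescoping from $\Phi_a(1)=0$ then gives $|\Phi_a(T+1)|\le|\cT_{\epsilon,\gamma}^c|$, that is $|c_a(T)-\sum_{b:(a,b)\notin\cO^\star}N_{a,b}(T)|\le|\cT_{\epsilon,\gamma}^c|$, the harmless unit coming from the initialisation $c_a=1$ being absorbed because the first round, before any couple is reliably estimated, already lies in $\cT_{\epsilon,\gamma}^c$.
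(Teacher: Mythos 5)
Your bookkeeping is exactly the paper's: its (three\nobreakdash-sentence) proof likewise observes that $c_a$ is incremented precisely on exploration rounds with $a_{t+1}=a$, that on a round $t\in\cT_{\epsilon,\gamma}$ such an increment coincides with an increment of some $N_{a,b}(t)$ with $(a,b)\notin\cO^\star$ (and conversely), and that the two running totals can therefore drift apart only on rounds of $\cT_{\epsilon,\gamma}^c$. Your telescoping of the potential $\Phi_a$ and your treatment of the tracking branch --- using that any couple of $\cO^\star$ is $\epsilon$-accurate at a good time, together with $\epsilon<\epsilon_\nu$ and $\nu\in\cD_\omega$ to keep truly optimal couples out of $\Bhat_{a,\bul_t}(t)$ --- are correct, and in fact supply details the paper omits.

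The step you flag as hardest is indeed the one that is not closed, and your proposed repair does not work as stated. You argue that at a reliable time the minimiser of $N_{a,\cdot}(t)$ must be a user for which $a$ is sub-optimal because optimal-couple counts dominate sub-optimal ones; but that domination (counts of order $N_{b'}(t)$ versus $O(\log N_{b'}(t))$) is asymptotic, and membership in $\cT_{\epsilon,\gamma}$ carries no lower bound on $t$. At an early reliable time the counts $N_{a,b}(t)$ may all coincide, the $\argmin$ may land on an optimal couple, and then $c_a$ is incremented while no sub-optimal $N_{a,b}$ is --- a discrepancy not charged to $\cT_{\epsilon,\gamma}^c$. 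The only structural fact available at an arbitrary time is the second part of Lemma~\ref{imedstar empirical lower bounds}, which gives $N_{a',b'}(t)\geq N_{a_{t+1},b_{t+1}}(t)=\min_{b\in\cB}N_{a,b}(t)$ for $(a',b')\in\Ohat^\star(t)$: a non-strict inequality that does not exclude ties. You should know that the paper's own proof does not address this case either; it simply asserts that on good rounds an increment of $c_a$ forces an increment of a sub-optimal $N_{a,b}$, which is immediate for the tracking branch but not for the forced one. So you have correctly isolated the weak point without repairing it --- but neither does the paper. (The lemma is only consumed through Lemma~\ref{lem:Tc(T)}, where the at most $O(\log_2 c_a(T))=O(\log\log T)$ forced rounds could be absorbed as an extra additive term, so the downstream results are not endangered.)
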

\begin{proof} Let $ a \in \cA$. At each time step $ t \geq 1$ we increment $c_a(t)$ only if $(a_{t+1},b_{t+1}) \notin \Ohat^\star(t) $ and $a_{t+1} = a$. Then, if $t \in \cT_{\epsilon,\gamma} $ , we have $\Ohat^\star(t) = \cO^\star$ and we increment $c_a(t) $ only if we increment one of the $ N_{a,b}(t)$ for $b \in \cB$ such that $(a,b) \notin \cO^\star$.
\end{proof}

\subsection{All couples \texorpdfstring{$(a,b)\!\in\!\cA\!\times\!\cB$}{TEXT} are asymptotically pulled an infinite number of times}
Let $ 0 < \epsilon < \epsilon_\nu$ (defined in Appendix~\ref{notations}) and $\gamma \in (0,1/2)$. Let us consider
\[
 \cT_{\epsilon, \gamma} = \Set{t \geq 1:\ \begin{array}{l}
		  \Ohat^\star(t) = \cO^\star  \\  \forall (a,b) \textnormal{ s.t. } N_{a,b}(t) \geq \gamma\, N_{a_{t+1},b_{t+1}}(t) \textnormal{ or } (a,b) \in \cO^\star, \ \abs{\muhatab(t) - \muab } < \epsilon 
		\end{array}   }\,. 
\] 
Then, according to Proposition~\ref{prop:upper bounds}, under \IMEDSstar strategy, $\Esp_\nu[\abs{\cT_{\epsilon,\gamma}^{ c}}]  < \infty$. In particular, almost surely $\abs{\cT_{\epsilon,\gamma}^c} < \infty$.

\begin{lemma}[The indexes tend to infinity]\label{The indexes tend to infinity} For all strategy we have $ \limt  N_{a_{t+1},b_{t+1}}(t) = \infty $
and, under \IMEDSstar, $$ \forall (a,b) \in \cA\times\cB,\ \limt I_{a,b}(t) = \infty \,.$$
\end{lemma}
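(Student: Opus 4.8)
The plan is to treat the two assertions separately: the first is an elementary counting fact valid for \emph{any} strategy, and the second then follows immediately from the empirical lower bound established in Lemma~\ref{imedstar empirical lower bounds}.

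For the first assertion, I would fix an arbitrary strategy and set $n \coloneqq \abs{\cA}\abs{\cB}$, the number of couples. The key observation is purely deterministic: for a fixed couple $(a,b)$ and a fixed integer $j \geq 0$, there is \emph{at most one} time step $t$ at which $(a_{t+1},b_{t+1}) = (a,b)$ and $N_{a,b}(t) = j$, namely the step realizing the $(j+1)$-th pull of $(a,b)$, should it occur at all. Consequently, for any threshold $K \geq 0$ the set of time steps $t$ with $N_{a_{t+1},b_{t+1}}(t) \leq K$ has cardinality at most $n(K+1)$, since each of the $n$ couples contributes at most one step for each value $j \in \Set{0,\dots,K}$. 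This set being finite, there exists a time beyond which $N_{a_{t+1},b_{t+1}}(t) > K$; as $K$ is arbitrary, this yields $\limt N_{a_{t+1},b_{t+1}}(t) = +\infty$.

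For the second assertion, I would combine the first part with Lemma~\ref{imedstar empirical lower bounds}. Indeed, its proof establishes that under \IMEDSstar, at every time step $t \geq 1$ and for \emph{every} couple $(a,b) \in \cA\times\cB$ one has the uniform inequality $I_{a,b}(t) \geq \log\!\left(N_{a_{t+1},b_{t+1}}(t)\right)$, whether $(a,b)$ is currently optimal or not (the optimal case following from $N_{a_{t+1},b_{t+1}}(t)\leq N_{a,b}(t)$ and monotonicity of $\log$). Since $N_{a_{t+1},b_{t+1}}(t)\to\infty$ by the first part, the right-hand side tends to $+\infty$, and therefore $\limt I_{a,b}(t) = +\infty$ for all $(a,b)$.

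The argument is short and presents no serious analytic obstacle; the only points requiring care are recognizing that the convergence $N_{a_{t+1},b_{t+1}}(t)\to\infty$ is a strategy-independent pigeonhole fact (each couple visits each prior-count value exactly once), and that the bridge to the indexes is already furnished, simultaneously for all couples, by the uniform bound $I_{a,b}(t) \geq \log\!\left(N_{a_{t+1},b_{t+1}}(t)\right)$ proven inside Lemma~\ref{imedstar empirical lower bounds}.
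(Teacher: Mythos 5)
Your proof is correct, and its second half coincides exactly with the paper's: both rest on the uniform bound $I_{a,b}(t) \geq \log\!\left(N_{a_{t+1},b_{t+1}}(t)\right)$ valid for every couple (optimal or not), which is established inside the proof of Lemma~\ref{imedstar empirical lower bounds}, combined with the divergence of $N_{a_{t+1},b_{t+1}}(t)$. Where you differ is in the first assertion: the paper splits the couples into those pulled finitely often and those pulled infinitely often, notes that the indicator $\ind_{\Set{(a_{t+1},b_{t+1})=(a,b)}}$ vanishes for the former so that the indicator mass concentrates on the latter, and lower-bounds $N_{a_{t+1},b_{t+1}}(t)$ by $\min_{(a,b):\,N_{a,b}(\infty)=\infty} N_{a,b}(t)$, which diverges. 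Your version is a purely deterministic counting bound — at most $\abs{\cA}\abs{\cB}(K+1)$ time steps can satisfy $N_{a_{t+1},b_{t+1}}(t)\leq K$, since each couple realizes each prior-count value at most once — which is the same pigeonhole fact but stated more directly, with an explicit cardinality bound and no limit manipulation of indicator sums. Both arguments are valid; yours is arguably cleaner and makes the strategy-independence of the first claim more transparent, while the paper's phrasing sits closer to the $N_{a,b}(\infty)$ bookkeeping it reuses elsewhere in the appendix.
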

\begin{proof}
~\\For all couple $(a,b) \in \cA\times\cB$ such that $\Nab(\infty) < \infty$, we have $ \ind_{\Set{(a_{t+1},b_{t+1}) = (a,b) }} \to 0 $. Then  $$ \sum\limits_{(a,b)\in\cA\times\cB:\ \Nab(\infty) = \infty} \ind_{\Set{(a_{t+1},b_{t+1}) = (a,b) }} \to 1  \,. $$
This implies
\beqan 
&&\sum\limits_{(a,b)\in\cA\times\cB:\ \Nab(\infty) = \infty} \ind_{\Set{(a_{t+1},b_{t+1}) = (a,b) }} N_{a,b}(t) \\ &\geq& \min\limits_{(a,b)\in\cA\times\cB:\ \Nab(\infty) = \infty}N_{a,b}(t) \sum\limits_{(a,b)\in\cA\times\cB:\ \Nab(\infty) = \infty} \ind_{\Set{(a_{t+1},b_{t+1}) = (a,b) }}  \longrightarrow \infty \,.
\eeqan 
Thus, since 
\beqan N_{a_{t+1},b_{t+1}}(t) 
&=&  \sum\limits_{(a,b)\in\cA\times\cB:\ \Nab(\infty) < \infty} \ind_{\Set{(a_{t+1},b_{t+1}) = (a,b) }} N_{a,b}(t) \\
&+& \sum\limits_{(a,b)\in\cA\times\cB:\ \Nab(\infty) = \infty} \ind_{\Set{(a_{t+1},b_{t+1}) = (a,b) }} N_{a,b}(t)  \,,\eeqan
we have 
$$ N_{a_{t+1},b_{t+1}}(t)  \longrightarrow \infty \,.$$
~\\ Furthermore, under \IMEDSstar strategy we have
$$  \forall (a,b) \in \cA\times\cB,\quad I_{a,b}(t) \geq \log(N_{a_{t+1},b_{t+1}}(t)) \,,$$
which ends the proof.

\end{proof}

\begin{lemma}[The numbers of pulls tend to infinity] \label{nb pulls to infty} Under \IMEDSstar the numbers of pulls almost surely satisfy 
$$ \forall (a,b) \in \cA\times\cB, \Nab(T) \to \infty \,.$$
In particular, almost surely for all $(a,b) \in \cA\times\cB$, $\limT \muhatab(T) = \muab $.
\end{lemma}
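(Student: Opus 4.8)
The plan is to argue almost surely and by contradiction. Suppose that on an event of positive probability the set $S \coloneqq \{(a,b)\in\cA\times\cB : \lim_{t\to\infty} N_{a,b}(t) <\infty\}$ of finitely-pulled couples is non-empty, and work on this event. Two facts are available for free from Lemma~\ref{The indexes tend to infinity}: the chosen couple satisfies $N_{a_{t+1},b_{t+1}}(t)\to\infty$, and every index diverges, $I_{a,b}(t)\to\infty$ for all $(a,b)$. Once a contradiction forces $S=\emptyset$, the final claim $\muhat_{a,b}(T)\to\muab$ is immediate from the strong law of large numbers applied to each couple, which is then pulled infinitely often.

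First I would record two structural consequences. \textbf{(Claim 1)} For any $(a,b)\in S$, eventually $(a,b)\notin\Ohat^\star(t)$: otherwise $(a,b)\in\Ohat^\star(t)$ for infinitely many $t$, along which $I_{a,b}(t)=\log(N_{a,b}(t))$ stays bounded (since $N_{a,b}(t)$ is eventually constant), contradicting $I_{a,b}(t)\to\infty$. \textbf{(Claim 2)} The arm of any couple in $S$ is selected for exploration only finitely often. Indeed, if arm $a$ were explored infinitely often, its counter $c_a$ would tend to infinity, so the event $c_a=c_a^+$ (a power of two) would occur infinitely often, each time triggering the forced-exploration branch of Algorithm~\ref{alg:IMEDSstar} which pulls a couple realising $\min_{b'} N_{a,b'}(t)$; as $\cB$ is finite this forces $\min_{b'}N_{a,b'}(t)\to\infty$, i.e.\ every couple $(a,b')$ is pulled infinitely often, contradicting the existence of a couple with arm $a$ in $S$.

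By Claims~1 and~2, fix $(a,b)\in S$; past some time its arm $a$ is pulled only when empirically optimal (exploitation), and its index reduces to the transportation sum $I_{a,b}(t)=\sum_{b'\in\Bhat_{a,b}(t)}\klof{\muhat_{a,b'}(t)}{\muhat_b^\star(t)-\omegabb}\,N_{a,b'}(t)+\log(N_{a,b'}(t))$. Because this diverges while $\Bhat_{a,b}(t)\subseteq\cB$ is finite and, for $\nu\in\cD_\omega$, the divergences $\klof{\muhat_{a,b'}(t)}{\muhat_b^\star(t)-\omegabb}$ remain bounded (the relevant means are kept away from the boundary using $\epsilon_\nu>0$), some same-arm couple $(a,b')$ must be pulled infinitely often; since arm $a$ is explored only finitely often, these pulls are eventually exploitations, so $(a,b')\in\Ohat^\star(t)$ infinitely often and $N_{a,b'}(t)\to\infty$.

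The crux, and the step I expect to be the main obstacle, is closing the loop from here: I must rule out the apparent scenario in which arm $a$ is exploited infinitely often for an empirically optimal user $b'$ while the couple $(a,b)$ stays starved. The idea is that exploitation always pulls a least-pulled empirically-optimal couple (the global minimum index, when realised inside $\Ohat^\star(t)$, equals the smallest $\log(N_{\cdot})$), so the optimal counts grow in a balanced way and cannot indefinitely absorb all exploration budget; combined with the tracking rule~\eqref{tracked user}, which repeatedly selects the user of $\Bhat_{\aul_t,\bul_t}(t)\cup\{\bul_t\}$ most under-pulled relative to its target $\Nopt$, this should force the minimum-index sub-optimal couple carrying arm $a$ to be selected for exploration infinitely often, contradicting Claim~2 and hence yielding $S=\emptyset$. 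The delicate points are the book-keeping showing that balanced optimal pulls together with tracking re-expose arm $a$ to exploration, and the verification that no empirical divergence in the index blows up along the way; both rely on $\nu\in\cD_\omega$ and on the almost-sure convergence of empirical means of infinitely-pulled couples.
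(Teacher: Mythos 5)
There is a genuine gap, and you have flagged it yourself: the ``crux'' paragraph is left open, and the route you sketch for closing it (balanced growth of the optimal counts together with the tracking rule~\eqref{tracked user}) is not the argument the paper uses and would be hard to make rigorous. Your Claims~1 and~2 are sound, and Claim~2 isolates exactly the right mechanism (once $c_a=c_a^+$ fires infinitely often, the forced pulls of $\argmin_{b\in\cB}N_{a,b}(t)$ make every couple of arm $a$ infinitely pulled). What is missing is the step showing that the arm of a starved couple \emph{must} be explored infinitely often, and this is where the paper invokes its concentration machinery rather than any property of the tracking step: by Lemma~\ref{imedstar reliability} (via Proposition~\ref{prop:upper bounds}) the set $\cT_{\epsilon,\gamma}^c$ is almost surely finite, so for all but finitely many $t$ one has $\Ohat^\star(t)=\cO^\star$, $\muhat_b^\star(t)<1-\epsilon_\nu$, and $\Bhat_{a,b}(t)\subseteq\Set{b':(a,b')\notin\cO^\star}$. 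On those times the index of a sub-optimal couple $(a,b)$ is bounded above by a constant times $\sum_{b':(a,b')\notin\cO^\star}N_{a,b'}(t)$ plus logarithmic terms, so $I_{a,b}(t)\to\infty$ (Lemma~\ref{The indexes tend to infinity}) forces the total number of pulls of the \emph{truly} sub-optimal couples of arm $a$ to diverge; by Lemma~\ref{counter - Na} this count differs from the exploration counter $c_a(T)$ by at most $\abs{\cT_{\epsilon,\gamma}^c}$, hence $c_a(T)\to\infty$, the doubling branch fires at arbitrarily large times, and your own Claim~2 mechanism yields $\min_{b}N_{a,b}(T)\to\infty$ --- contradicting the assumption that some couple with arm $a$ is starved. (Optimal couples are handled separately and directly: on $\cT_{\epsilon,\gamma}$ their index equals $\log N_{a,b}(t)$, so index divergence gives pull divergence, as in your Claim~1.)

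A secondary but real flaw: your assertion that the divergences $\klof{\muhat_{a,b'}(t)}{\muhat_b^\star(t)-\omegabb}$ ``remain bounded using $\epsilon_\nu>0$'' is unjustified as written. The quantity $\epsilon_\nu$ constrains the \emph{true} means, not the empirical ones, and $\muhat_b^\star(t)$ can sit arbitrarily close to $1$ at bad times, making the KL term blow up. Controlling this (and identifying $\Ohat^\star(t)$ with $\cO^\star$) is precisely what restricting attention to $t\in\cT_{\epsilon,\gamma}$ buys, and it is the ingredient your proposal never imports.
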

\begin{proof}Lemma~\ref{The indexes tend to infinity} ensures $\limT I_{a,b}(T) = \infty$, for all $(a,b) \in \cA\times\cB$.

~\\Let $(a,b) \in \cO^\star$. Since $ \abs{ \cT_{\epsilon,\gamma}^c} < \infty $ and $ \forall T \in \cT_{\epsilon,\gamma},\, \Ohat^\star(T) = \cO^\star $, we have for all $(a,b) \in \cO^\star$
$$ \limT\log(\Nab(T)) = \limT I_{a,b}(T) = \infty.$$
Then, for all $(a,b) \in \cO^\star$, $\limT \Nab(T) = \infty$ and $\limT \muhatab(T) = \muab$.
 
~\\ Let $(a,b) \notin \cO^\star$ and let $T \in \cT_{\epsilon,\gamma}$. Then, the following inequalities occur
\beq \label{eq:Ninfty_1} 
I_{a,b}(T) \leq \sum\limits_{b' :\, (a,b') \notin \cO^\star }N_{a,b'}(t) \kl(\muhat_{a,b'}(t)|\muhat_b^\star(t) - \omegabb) + \log(\max(1,N_{a,b'}(t))) 
\eeq
and 
\beq \label{eq:Ninfty_2}
\forall b \in \cB,\quad \muhat_b^\star(t) < 1 - \epsilon_\nu \,.
\eeq
Since $\abs{\cT_{\epsilon,\gamma}^c} < \infty$, Eq. \ref{eq:Ninfty_1} and \ref{eq:Ninfty_2} imply
$$ \limT \sum\limits_{b':\, (a,b') \notin \cO^\star} N_{a,b'}(T) \to \infty   \,.$$
Then, since $\abs{\cT_{\epsilon,\gamma}^c} < \infty$, from Lemma~\ref{counter - Na} we get $\limT c_a(T) = \infty  $.
This implies 
$$ \argmax\Set{t\in\llbracket1,T\rrbracket:\ c_a(t) =c_a^+(t)} \to \infty \ad \limT \minb N_{a,b}(T)  = \infty \,.$$
\end{proof}

\subsection{\label{app: concentration lemma statements} Concentration lemmas}
	We state two concentration lemmas that do not depend on the followed strategy. Lemma~\ref{imedstar concentration} comes from Lemma B.1 in \citet{combes2014unimodal} and Lemma~\ref{imedstar large deviation} comes from Lemma 14 in \citet{honda2015imed}. Proofs are provided in Appendix~\ref{app: concentration_lemmas}.
	\label{subsec : imed_concentration}
	\begin{lemma}[Concentration inequalities]\label{imedstar concentration} Let  $\nu \!\in\! \cD_\omega $. For all $  0 \!<\! \epsilon, \gamma \!\leq\! 1/2 $ and  for all couples $(a,b),\, (a',b') \in \cA\!\times\!\cB$, 
		\[
		\Esp_\nu\left[\sum\limits_{t\geq 1}\ind_{\Set{(a_{t+1},b_{t+1})=(a,b),\ N_{a',b'}(t) \geq \gamma  N_{a,b}(t),\  \abs{\muhat_{a',b'}(t) - \mu_{a',b'}} \geq \epsilon }} \right] \leq \dfrac{17}{\gamma\epsilon^4} \,.
		\]
	\end{lemma}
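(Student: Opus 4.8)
The plan is to turn the sum over time $t$ into a sum over the number of pulls of the couple $(a,b)$, and then to control the deviation of the empirical mean of $(a',b')$ by a union bound over its sample size combined with Hoeffding's inequality; this is precisely the adaptation of Lemma~B.1 of \citet{combes2014unimodal} to the present two-index situation.

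First I would use that the summand is nonzero only at times $t$ for which $(a_{t+1},b_{t+1})=(a,b)$, and that at each such time $N_{a,b}$ increases by one. Indexing these times by $n\coloneqq N_{a,b}(t)\in\{0,1,2,\dots\}$, the sum equals $\sum_{n\geq 0}\ind_{G_n}$, where $G_n$ denotes the event that there is a time $t$ with $N_{a,b}(t)=n$ and $(a_{t+1},b_{t+1})=(a,b)$, and that at this $t$ both $N_{a',b'}(t)\geq\gamma n$ and $\abs{\muhat_{a',b'}(t)-\mu_{a',b'}}\geq\epsilon$ hold. No double counting occurs, since distinct values of $n$ correspond to distinct pulling times.

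At the instant described by $G_n$ the empirical mean $\muhat_{a',b'}(t)$ coincides with $\muhat_{a',b'}^{m}$ for $m=N_{a',b'}(t)$, a random sample size constrained to satisfy $m\geq\lceil\gamma n\rceil$, so that $G_n\subset\bigcup_{m\geq\lceil\gamma n\rceil}\{\abs{\muhat_{a',b'}^{m}-\mu_{a',b'}}\geq\epsilon\}$. The rewards $X_{a',b'}^{1},X_{a',b'}^{2},\dots$ gathered at the predictable stopping times $\tau_{a',b'}^{m}$ are i.i.d.\ $\textnormal{Bern}(\mu_{a',b'})$, hence for each fixed $m\geq 1$ Hoeffding's inequality gives $\Pr_\nu(\abs{\muhat_{a',b'}^{m}-\mu_{a',b'}}\geq\epsilon)\leq 2\e^{-2m\epsilon^2}$, while the single degenerate term ($n=0$ with an empty sample of $(a',b')$) contributes at most $1$. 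A union bound over $m$ then yields
\[
\Pr_\nu(G_n)\ \leq\ \sum_{m\geq\lceil\gamma n\rceil}2\,\e^{-2m\epsilon^2}\ \leq\ \frac{2\,\e^{-2\gamma n\epsilon^2}}{1-\e^{-2\epsilon^2}}\,.
\]

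Summing this over $n\geq 0$ and adding the degenerate contribution, the expectation to be bounded is at most
\[
1+\frac{2}{\bigl(1-\e^{-2\epsilon^2}\bigr)\bigl(1-\e^{-2\gamma\epsilon^2}\bigr)}\,.
\]
I would finish with the elementary inequality $1-\e^{-x}\geq x-x^2/2$, valid for $x\geq 0$: for $0<\epsilon,\gamma\leq 1/2$ one has $2\epsilon^2\leq 1/2$ and $2\gamma\epsilon^2\leq 1/4$, whence $1-\e^{-2\epsilon^2}\geq\tfrac32\epsilon^2$ and $1-\e^{-2\gamma\epsilon^2}\geq\tfrac74\gamma\epsilon^2$; the product of denominators is then at least $\tfrac{21}{8}\gamma\epsilon^4$, and since $1/(\gamma\epsilon^4)\geq 32$ the whole bound is at most $17/(\gamma\epsilon^4)$. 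The argument uses no hard inequality; the one delicate point is the bookkeeping of the reduction — one must check that indexing the pulling times of $(a,b)$ by their count $n$ is a bijection, and recognise that it is exactly the proportionality constraint $N_{a',b'}(t)\geq\gamma n$ that forces the inner geometric sum to start at $\gamma n$, thereby producing a second factor of order $\epsilon^{-2}$ and hence the characteristic $\gamma^{-1}\epsilon^{-4}$ scaling rather than the usual $\epsilon^{-2}$.
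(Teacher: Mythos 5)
Your proposal is correct and follows essentially the same route as the paper: re-index the pulling times of $(a,b)$ by the pull count $n$, use the constraint $N_{a',b'}(t)\geq\gamma N_{a,b}(t)$ to force the sample size $m$ of $(a',b')$ to satisfy $m\geq\gamma n$, apply Hoeffding to the sample-indexed empirical means $\muhat_{a',b'}^m$ (i.i.d.\ thanks to the predictable stopping times), and evaluate the resulting double sum over $\{(n,m): m\geq\gamma n\}$. The only difference is the order of summation in that last step — the paper sums over $m$ first and counts the at most $m/\gamma+1$ compatible values of $n$, while you sum the geometric tail over $m$ for each fixed $n$ — which is a cosmetic Fubini swap; your final numerical bounding to reach $17/(\gamma\epsilon^4)$ also checks out.
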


	\begin{lemma}[Large deviation probabilities]\label{imedstar large deviation} Let  $\nu \!\in\! \cD_\omega $. For all couple $(a,b) \!\in\! \cA\!\times\!\cB$, for all $0 \!<\! \mu \!<\! \muab$\,,
		\[ 
		  \Esp_\nu\!\left[\!\sum\limits_{n \geq 1}\ind_{\Set{ \muhat_{a,b}^{n}  < \mu}}n\exp(n \kl(\muhat_{a,b}^{n}|\mu))\!\right]\! \leq\! 6\e\!\left(\!1\!-\! \frac{\log(1-\mu)}{\log(1-\muab)}\!\right)^{\text{-}1}\!\left(\!1\!-\!\e^{-\left(\!1- \frac{\log(1-\mu)}{\log(1-\muab)}\!\right)\!\kl\!(\muab|\mu)}\!\right)^{\text{-}3},
		\]
		where $\muhatab^n$ estimates $\muab$ after $n$ pulls of couple $(a,b)$ (see Appendix~\ref{notations}).
	\end{lemma}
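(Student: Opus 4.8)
The plan is to reduce the claim to a sum of independent large-deviation contributions, control each contribution by a Chernoff / method-of-types estimate, and then sum the resulting geometric-type series; this is exactly the route of Lemma~14 in \citet{honda2015imed}, which I would specialise to the Bernoulli case.

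First I would reduce to i.i.d.\ samples. Since rewards are conditionally independent given the chosen couple, the variables $X_{a,b}^n = X_{\tau_{a,b}^n}$ (see Appendix~\ref{notations}) are i.i.d.\ $\textnormal{Bern}(\muab)$, and $\muhat_{a,b}^n = \frac1n\sum_{k=1}^n X_{a,b}^k$ is their empirical mean, so the summand depends on the past only through these i.i.d.\ draws. All terms being nonnegative, Tonelli lets me exchange $\Esp_\nu$ and $\sum_{n\ge1}$, so it suffices to bound, for each $n$, the quantity $T_n := \Esp_\nu\!\big[\ind_{\{\muhat_{a,b}^n<\mu\}}\,n\,\e^{n\kl(\muhat_{a,b}^n|\mu)}\big]$ and then sum over $n$.

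Next I would control $T_n$ by expanding over the value $q=k/n$ of $\muhat_{a,b}^n$. The exact identity $\Pr_\nu(\muhat_{a,b}^n=q)\,\e^{n\kl(q|\mu)} = \binom{n}{k}q^{k}(1-q)^{n-k}\,\e^{n\phi(q)}$, with the affine function $\phi(q) = q\log\frac{\muab}{\mu}+(1-q)\log\frac{1-\muab}{1-\mu}$ (equivalently the Chernoff bound $\Pr_\nu(\muhat_{a,b}^n=q)\le\e^{-n\kl(q|\muab)}$ together with $\kl(q|\mu)-\kl(q|\muab)=\phi(q)$), reduces $T_n$ to a sum over $k$ of a Stirling-controlled prefactor times $\e^{n\phi(q)}$. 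Because $\phi$ is increasing and $q<\mu$, the exponent is maximised as $q\uparrow\mu$, where $\phi(\mu)=-\kl(\mu|\muab)$; bounding $\binom{n}{k}q^{k}(1-q)^{n-k}$ by Stirling and summing the geometric ratio in $k$ (whose step is the constant $\e^{\phi'}$, since $\phi$ is affine) yields a per-$n$ bound of the form $T_n\le C\,n^{2}\,r^{n}$ for a suitable exponential rate $r<1$ and an explicit constant $C$, the extremal term $\muhat_{a,b}^n=0$ being treated separately since it decays strictly faster.

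The final step is the summation $\sum_{n\ge1}n^{2}r^{n}\le 2(1-r)^{-3}$, which produces the cubic power in the statement; the constant $6\e$ collects the Stirling and geometric-series constants, while the factor $g^{-1}$, with $g := 1-\frac{\log(1-\mu)}{\log(1-\muab)}\in(0,1)$, together with the identification $r=\e^{-g\kl(\muab|\mu)}$, comes from re-expressing the boundary rate through $\kl(\muab|\mu)$ via the chain of elementary inequalities used in \citet{honda2015imed} (using in particular $\log\frac{1-\muab}{1-\mu}=g\log(1-\muab)$). The main obstacle is precisely this last bookkeeping: obtaining the stated closed form --- the exact constant $6\e$, the factor $g^{-1}$, the cubic exponent, and the appearance of $\kl(\muab|\mu)$ in place of the boundary rate $\kl(\mu|\muab)$ that actually governs the tail --- requires carrying the explicit (and deliberately loose) Stirling and Chernoff constants through both the $k$- and $n$-summations exactly as in the cited lemma, rather than merely establishing convergence, which the argument above already guarantees.
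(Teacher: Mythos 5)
Your route is genuinely different from the paper's. The paper first restates Proposition~11 of Honda and Takemura (2015) as a two-regime tail bound on $\Pr_\nu\big(\kl(\muhat_{a,b}^n|\mu)\geq u,\ \muhat_{a,b}^n\leq\mu\big)$, then computes each term $\Esp_\nu\big[\ind_{\{\muhat_{a,b}^n\leq\mu\}}\,n\,\e^{n\kl(\muhat_{a,b}^n|\mu)}\big]$ via the layer-cake identity $\int_0^\infty\Pr(\cdot>x)\,\mathrm{d}x$ with the substitution $x=n\e^{nu}$, splits the integral at the threshold $u=\frac{\log(1-\mu)}{\log(1-\muab)}\kl(\muab|\mu)$, and finally sums the resulting series $\sum_n n\e^{-nr}$ and $\sum_n n^2\e^{-nr}$ with $r=\big(1-\frac{\log(1-\mu)}{\log(1-\muab)}\big)\kl(\muab|\mu)$. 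Every ingredient of the stated right-hand side (the constant $6\e$, the factor $g^{-1}$, the cubic power, and the rate $g\,\kl(\muab|\mu)$) is produced by exactly that computation. Your method-of-types expansion over the atoms $k/n$ is a legitimate alternative in the Bernoulli case, and your reduction to i.i.d.\ samples, the use of Tonelli, and the identity $\Pr_\nu(\muhat_{a,b}^n=q)\,\e^{n\kl(q|\mu)}=\binom{n}{nq}q^{nq}(1-q)^{n(1-q)}\e^{n\phi(q)}$ are all correct.

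The gap is that your computation does not arrive at the stated bound, and you do not close the distance. Following your own steps, bounding $\binom{n}{k}q^{k}(1-q)^{n-k}\leq 1$ and summing the geometric series in $k$ (ratio $\e^{c}$ with $c=\log\frac{\muab(1-\mu)}{\mu(1-\muab)}$) gives $T_n\leq \frac{\e^{c}}{\e^{c}-1}\,n\,\e^{-n\kl(\mu|\muab)}$ and hence $\sum_{n\geq1} T_n\leq\frac{\e^{c}}{\e^{c}-1}\big(1-\e^{-\kl(\mu|\muab)}\big)^{-2}$: a bound with a different prefactor, a quadratic rather than cubic power, and the rate $\kl(\mu|\muab)$ rather than $g\,\kl(\muab|\mu)$. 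You acknowledge this mismatch but defer it to ``carrying the constants through exactly as in the cited lemma'' --- yet the cited lemma's proof is the layer-cake argument above, not yours, so there are no constants of yours to carry through it. To finish along your route you would have to prove the uniform comparison $\frac{\e^{c}}{\e^{c}-1}\big(1-\e^{-\kl(\mu|\muab)}\big)^{-2}\leq 6\e\,g^{-1}\big(1-\e^{-g\kl(\muab|\mu)}\big)^{-3}$ over all $0<\mu<\muab<1$ (plausible --- both sides blow up as $\mu\to\muab$, yours more slowly --- but nowhere established), or else switch to the paper's computation. As written, the proposal establishes finiteness of the expectation with an explicit but different constant, not the inequality as stated.
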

	
\subsection{Proof of Lemma~\ref{lem:ce_and_lambda_are_finite}}
\label{app:proof_ce_and_lambda_finite}

% \textbf{Lemma} Conserving the same notations as in Section~\ref{sec : imed_analysis}:
% 	\[
% 	\Esp_\nu[\abs{\Lambda_\epsilon}] \leq \dfrac{48\abs{\cA}}{\epsilon_\nu^4} \,.
% 	\]

~\\Using Lemma~\ref{imedstar empirical lower bounds}, for all time step $ t\geq 1$, we have
		\[
		\forall  (a',b') \in \Ohat^\star(t), \quad N_{a',b'}(t) \geq N_{a_{t+1},b_{t+1}}(t) \geq \gamma\,  N_{a_{t+1},b_{t+1}}(t) \,.
		\]
		Then, based on the concentration inequalities from Lemma~\ref{imedstar concentration},  we obtain
		\beqan 
		\Esp_\nu[\abs{\cE_{\epsilon,\gamma}^c}] 
		&\leq& \sum\limits_{(a,b), (a',b') \in\cA\times\cB} \Esp_\nu\left[\sum\limits_{t\geq1}\ind_{\Set{(a_{t+1},b_{t+1})=(a,b),\ N_{a',b'}(t) \geq \gamma  N_{a,b}(t),\  \abs{\muhat_{a',b'}(t) - \mu_{a',b'}} \geq \epsilon }}\right] \\
		&\leq & \sum\limits_{(a,b), (a',b') \in\cA\times\cB} \dfrac{17}{\gamma\epsilon^4} \\
		&\leq& \dfrac{17}{\gamma\epsilon^4}  \abs{\cA}^2\abs{\cB}^2 \,.
		\eeqan
Furthermore, for $t \geq 1$ , $a  \in \cA$ and $\cB'\subset\cB$, we have
		{
			\beqan
			&&  \log\left(N_{a_{t+1},b_{t+1}}(t)\right) \leq 
				\sum\limits_{b \in \cB'}\Nab(t) \, \klof{\muhatab(t)}{\lambda_{a,b} } + \log\left(\Nab(t)\right)  \\
			&\Leftrightarrow& N_{a_{t+1},b_{t+1}}(t) \leq 
				\prod\limits_{b \in \cB'}\Nab(t)\e^{\Nab(t) \, \klof{\muhatab(t)}{\lambda_{a,b} }}    \,,
			\eeqan
		}
		~\\where $\lambda_{a,b}  = \muab - \epsilon_\nu $ for all couple $(a,b) \in \cA\times\cB$. Thus, considering estimators of means based on the numbers of pulls $(\muhatab^n)_{(a,b)\in\cA\times\cB,n\geq1}$ (see Appendix~\ref{notations}),  we have
		{\footnotesize
			\beqan  \abs{\Lambda_{\epsilon_\nu}}
			&\leq&\sum\limits_{t \geq 1}\sum\limits_{ \substack{a \in \cA\\ \cB'\subset\cB}} \ind_{\Set{ \forall b \in\cB',\, \muhatab(t) < \lambdaab \textnormal{ and } N_{a_{t+1},b_{t+1}}(t) \leq 
				\prod\limits_{b \in \cB'}\Nab(t)\e^{\Nab(t) \, \kl(\muhatab(t)|\lambda_{a,b} )}  }} \\
			&=& \sum\limits_{\substack{ t \geq 1\\ (a',b')\in \cA\times\cB}}\sum\limits_{\substack{a \in \cA\\ \cB'\subset\cB}}\sum\limits_{\substack{ n_{b} \geq 0\\ b \in  \cB'}} \ind_{\Set{(a_{t+1},b_{t+1}) = (a',b'), \Nab(t) = n_b }} \ind_{\Set{ \forall b \in \cB',\, \muhatab^{n_b} < \lambdaab,\ N_{a',b'}(t) \leq \prod\limits_{b \in \cB'}n_b\e^{n_b \, \kl(\muhatab^{n_b}|\lambdaab )}   }} \\
			&\leq& \sum\limits_{\substack{ t \geq 1\\ (a',b')\in \cA\times\cB}}\sum\limits_{\substack{a \in \cA\\ \cB'\subset\cB}}\sum\limits_{\substack{ n_{b} \geq 1 \\ b \in  \cB'}} \ind_{\Set{(a_{t+1},b_{t+1}) = (a',b')}} \ind_{\Set{ \forall b \in \cB',\, \muhatab^{n_b} < \lambdaab}} \ind_{\Set{1 \leq N_{a',b'}(t) \leq \prod\limits_{b \in \cB'}n_b\e^{n_b \, \klof{\muhatab^{n_b}}{\lambdaab }}   }} \\
			&+& \sum\limits_{\substack{ t \geq 1\\ (a',b')\in \cA\times\cB}} \ind_{\Set{(a_{t+1},b_{t+1}) = (a',b')}} \ind_{\Set{N_{a',b'}(t) = 0  }} \\
			&\leq& \sum\limits_{\substack{ (a',b')\in \cA\times\cB}}\sum\limits_{\substack{a \in \cA\\ \cB'\subset\cB}}\sum\limits_{\substack{ n_{b} \geq 1 \\ b \in  \cB'}}  \ind_{\Set{ \forall b \in \cB',\, \muhatab^{n_b} < \lambdaab}} \sum\limits_{t \geq 1}\ind_{\Set{(a_{t+1},b_{t+1}) = (a',b')}}\ind_{\Set{1 \leq N_{a',b'}(t) \leq \prod\limits_{b \in \cB'}n_b\e^{n_b \, \kl(\muhatab^{n_b}|\lambdaab )}   }} \\
			&+& \abs{\cA}\abs{\cB} \\
			&\leq& \sum\limits_{\substack{ (a',b')\in \cA\times\cB}}\sum\limits_{\substack{a \in \cA\\ \cB'\subset\cB}}\sum\limits_{\substack{ n_{b} \geq 1 \\ b \in  \cB'}}  \ind_{\Set{ \forall b \in \cB',\, \muhatab^{n_b} < \lambdaab}} \prod\limits_{b \in \cB'}n_b\e^{n_b \, \kl(\muhatab^{n_b},\lambdaab )} +  \abs{\cA}\abs{\cB} \\
			&=& \abs{\cA}\abs{\cB}\sum\limits_{\substack{a \in \cA\\ \cB'\subset\cB}}\sum\limits_{\substack{ n_{b} \geq 1 \\ b \in  \cB'}} \prod\limits_{b \in \cB'} \ind_{\Set{\muhatab^{n_b} < \lambdaab}} n_b\e^{n_b \, \klof{\muhatab^{n_b}}{\lambdaab }} +  \abs{\cA}\abs{\cB} \\
			&=& \abs{\cA}\abs{\cB}\left[1 + \sum\limits_{\substack{a \in \cA\\ \cB'\subset\cB}}\prod\limits_{b \in \cB'}\sum\limits_{\substack{ n \geq 1 }}  \ind_{\Set{\muhatab^{n} < \lambdaab}} n\e^{n \, \kl(\muhatab^{n}|\lambdaab)}  \right]  \\
			\eeqan
		}
		and
		\beq \label{eq:concentration_1}
		\Esp_\nu[\abs{\Lambda_{\epsilon_\nu}}] \leq  \abs{\cA}\abs{\cB}\left(1 + \sum\limits_{\substack{a \in \cA\\ \cB'\subset\cB}}\prod\limits_{b \in \cB'}\Esp_\nu\left[\sum\limits_{\substack{ n \geq 1 }}  \ind_{\Set{\muhatab^{n} < \lambdaab}} n\e^{n \, \kl(\muhatab^{n},\lambdaab)}\right]\right) \,.
		\eeq
		Then, by applying Lemma~\ref{imedstar large deviation} based on large deviation inequalities, we have
		\beq \label{eq:concentration_2}
		\forall (a,b) \in \cA\times\cB,\quad \Esp_\nu\left[\sum\limits_{\substack{ n \geq 1 }}  \ind_{\Set{\muhatab^{n} < \lambdaab}} n\e^{n \, \kl(\muhatab^{n},\lambdaab)}\right] \leq E_\nu \,,
		\eeq
		where $E_\nu = 6\e\,\maxab \left(1- \frac{\log(1-\lambdaab)}{\log(1-\muab)}\right)^{-1}\left(1-\e^{-(1- \frac{\log(1-\lambdaab)}{\log(1-\muab)})\kl(\muab|\lambdaab)}\right)^{-3}$.
		~\\By combining Eq. \ref{eq:concentration_1} and \ref{eq:concentration_2}, we conclude  that
		\[
		\Esp_\nu[\abs{\Lambda_{\epsilon_\nu}}] \leq \abs{\cA}\abs{\cB} \left(1 + \abs{\cA} (1 + E_\nu)^{\abs{\cB}}\right) \leq 2\abs{\cA}^2\abs{\cB}(1 + E_\nu)^{\abs{\cB}} \,.
		\]

\section{\texorpdfstring{\IMEDSstar}{TEXT}: Proof of Theorem~\ref{th:asymptotic_optimality_IMEDSstar} (main result)}
\label{app:proof_main_result}
In this section we prove the asymptotic optimality of \IMEDSstar strategy. The proof is based on the finite time analysis detailed in Appendix~\ref{app: IMEDSstar finite time analysis}.

\subsection{ Almost surely \texorpdfstring{$\nopt(T)$}{TEXT} tends to \texorpdfstring{$n^\nu$}{TEXT}  }
For $a \in \cA$ such that $\cB_a = \Set{b \in \cB:\ (a,b) \notin \cO^\star} \neq \emptyset$, let us define the linear programming
\beqan
C_{\omega,a}^\star(\nu):=& \min\limits_{n \in \Real_+^{\cB_a} }&  \sum\limits_{b \in \cB_a} n_{b} \deltaab   
\\
& s.t. & \forall b \in \cB_a : \quad \sum\limits_{b' \in \cB_a}\kl^+(\mu_{a,b'}|\mu_b^\star - \omegabb) n_{a,b'} \geq 1 \,.
\eeqan
Then $(n^\nu_{a,b})_{b \in \cB_a}$ is the unique optimal solution of the previous minimization problem. Furthermore, we can state the following lemma.
\begin{lemma} \label{nopt nnu}$ \limT (\nopt_{a,b}(T))_{b \in \cB_a} = (n^\nu_{a,b})_{b \in \cB_a}$.
\end{lemma}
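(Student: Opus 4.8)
The plan is to exploit the fact that both the empirical program~\eqref{eq:current_optimization_pb} and the limiting program defining $C_{\omega,a}^\star(\nu)$ decouple across arms: the objective is separable and every constraint attached to a couple $(a,b)$ involves only the variables $(n_{a,b'})_{b'}$ of the same arm $a$. Hence it suffices to fix an arm $a$ with $\cB_a\neq\emptyset$ and to prove $\limT(\nopt_{a,b}(T))_{b\in\cB_a}=(n^\nu_{a,b})_{b\in\cB_a}$ for the corresponding per-arm linear program. Throughout I would work on the almost-sure event on which $N_{a,b}(T)\to\infty$ for every couple (Lemma~\ref{nb pulls to infty}), so that $\muhatab(T)\to\muab$ and $\muhat_b^\star(T)\to\mu_b^\star$ for all $(a,b)$.

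First I would show that the combinatorial data of the empirical program stabilise. Since each user of a non-peculiar configuration has a unique optimal arm, and since $\nu\in\cD_\omega$ guarantees $\partial\cB_{a,b}=\emptyset$, i.e. $\mu_{a,b'}\neq\mu_b^\star-\omegabb$ for all couples, the convergence of the empirical means forces, almost surely for all $T$ large enough, $\Ohat^\star(T)=\cO^\star$, the empirically sub-optimal users of arm $a$ to coincide with $\cB_a$, and $\Bhat_{a,b}(T)=\Bab$ for every $b\in\cB_a$. Consequently, for large $T$ the empirical per-arm program has exactly the same index sets as the limiting one, with objective coefficients $\muhat_b^\star(T)-\muhatab(T)\to\deltaab$ and constraint coefficients $\klof{\muhat_{a,b'}(T)}{\muhat_b^\star(T)-\omegabb}\to\klof{\mu_{a,b'}}{\mu_b^\star-\omegabb}$, the latter limits being strictly positive for $b'\in\Bab$.

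Next I would establish convergence of the optimizers through a stability-and-uniqueness argument. Write $V(T)$ for the empirical optimal value and $V^\star=C_{\omega,a}^\star(\nu)$. Scaling the fixed limiting optimizer $n^\nu_{a,\cdot}$ by a factor $1+o(1)$ restores feasibility in the empirical program (whose active constraint coefficients converge to the strictly positive limiting ones), which gives $\limsupT V(T)\le V^\star$; since for large $T$ the costs are bounded below by $\min_{b\in\cB_a}\deltaab/2>0$, any empirical optimizer then satisfies $\nopt_{a,b}(T)\le 2V(T)/\min_{b\in\cB_a}\deltaab$ and is therefore uniformly bounded. With $(\nopt_{a,b}(T))_b$ bounded, the empirical constraint map differs from the limiting one by $o(1)$, so scaling an empirical optimizer by $1+o(1)$ makes it feasible for the limiting program and yields $\liminfT V(T)\ge V^\star$; hence $V(T)\to V^\star$. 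Finally, any convergent subsequence of $(\nopt_{a,b}(T))_b$ has a limit that is feasible for the limiting program and attains the value $V^\star$, hence equals the unique optimizer $n^\nu_{a,\cdot}$ granted by non-peculiarity; a bounded sequence all of whose convergent subsequences share the limit $n^\nu_{a,\cdot}$ must itself converge to $n^\nu_{a,\cdot}$, which is the claim.

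The delicate step is the value estimate $V(T)\to V^\star$: optimizers of linear programs are in general discontinuous in the data, so one cannot simply pass to the limit inside the $\argmin$. The feasibility-restoring rescalings circumvent this, but they rely crucially on the active constraint coefficients converging to strictly positive limits (so a $1+o(1)$ multiplicative perturbation absorbs the slack) and on the boundedness of $(\nopt_{a,b}(T))_b$ to control the perturbation of the constraints. Checking that these rescalings are legitimate — in particular that the empirical constraint sets remain non-degenerate, which follows from the stabilisation of $\Bhat_{a,b}(T)$ to the non-empty $\Bab$ — together with invoking the uniqueness of $n^\nu_{a,\cdot}$ to pin down the subsequential limits, is where the main care is required.
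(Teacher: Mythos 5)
Your proof is correct, but it reaches the conclusion by a more self-contained route than the paper. The paper's proof is essentially a citation: Lemma~\ref{lemma: hyp continuity lp} checks that, along the times in $\cT_{\epsilon,\gamma}$ (whose complement is almost surely finite), the empirical parameters $\hhat_a(T)=(\Khat_a(T),\Deltahat_a(T))$ satisfy the two hypotheses of the parametric-LP continuity result of \citet{magureanu2014oslb} (restated as Lemma~\ref{lemma : continuity lp}), namely no identically-zero row or column of $K$ and $\min_b\Delta_b$ bounded away from $0$; the conclusion then follows from continuity of the value and upper hemicontinuity of the optimizer correspondence, combined with the uniqueness of $n^\nu_{a,\cdot}$ guaranteed by non-peculiarity. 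You instead prove the needed stability directly: stabilisation of the index sets $\Ohat^\star(T)$ and $\Bhat_{a,b}(T)$ (which uses exactly the non-degeneracy built into $\cD_\omega$ and non-peculiarity), a two-sided value estimate $V(T)\to V^\star$ obtained by $(1+o(1))$ feasibility-restoring rescalings, boundedness of the empirical optimizers from the strictly positive gaps, and a subsequence-plus-uniqueness argument — which is precisely an unpacked proof of the upper hemicontinuity statement the paper imports. What the paper's route buys is brevity; what yours buys is transparency about why LP optimizers, though discontinuous in general, do converge here (positivity of the active KL coefficients and of the gaps, plus uniqueness of the limit optimizer). One small point worth making explicit in your write-up: the per-arm decoupling is legitimate because for any sub-optimal $(a,b)$ and any $b'\in\Bab$ the graph constraint forces $(a,b')$ to be sub-optimal as well (so $\Bab\subset\cB_a$ and every variable entering a constraint carries a strictly positive cost), and $\Bab\ni b$ is automatically non-empty since $\omega_{b,b}=0$; both facts are needed for your boundedness step and for the non-degeneracy of the empirical constraint sets.
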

\begin{proof} This a direct application of Lemma~\ref{lemma: hyp continuity lp} and Lemma~\ref{lemma : continuity lp} stated below.

\end{proof}

\begin{lemma} \label{lemma: hyp continuity lp}Let $0 < \epsilon < \epsilon_\nu$ (see Appendix~\ref{notations}) and $\gamma \in (0,1)$. Let $ a \in \cA$ such that $\cB_a = \Set{b \in \cB:\ (a,b) \notin \cO^\star}$. Let us consider for $ T \geq 1$, $ \Khat_a(T) = ( \kl^+(\muhat_{a,b'}(T)|\muhat_b^\star(T) -\omegabb))_{b,b' \in \cB_a}$, the vector $ \Deltahat_a(T) = (\muhat_b^\star(T) - \muhat_{a,b}(T) )_{b \in \cB_a} $ and the parameter  $ \hhat_a(T) = (\Khat_a(T),\Deltahat_a(T) ) .$
We also consider 
$$ \cH_a \coloneqq \Set{ \hhat_a(T) \quad, T \in \cT_{\epsilon,\gamma}} \,,$$
where $\cT_{\epsilon,\gamma}$, defined in Appendix~\ref{subsec: reliable estimators}, satisfies $\abs{\cT_{\epsilon,\gamma}^c} < \infty$.
Then, we have 
$$ \forall h = (K,\Delta) \in \cH_a, \ K \neq 0 \ad \min\limits_{h = (K,\Delta) \in \cH_a}\min\limits_{b \in \cB_a}\Delta_b > 0 .$$
\end{lemma}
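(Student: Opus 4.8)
The plan is to work throughout on the almost-sure event on which the empirical mean of every couple converges to its true value, i.e.\ $\muhat_{a',b'}(T) \to \mu_{a',b'}$ for all $(a',b')$ (guaranteed by Lemma~\ref{nb pulls to infty}), and on which $\abs{\cT_{\epsilon,\gamma}^c} < \infty$ (guaranteed by Lemma~\ref{imedstar reliability}). The single structural fact I would extract from $T \in \cT_{\epsilon,\gamma}$ is the following: since $\Ohat^\star(T) = \cO^\star$ and every $b \in \cB_a$ satisfies $(a,b) \notin \cO^\star$, the couple $(a,b)$ is \emph{not} current-optimal, so $\muhat_{a,b}(T) < \muhat_b^\star(T)$; moreover the maximum defining $\muhat_b^\star(T)$ is attained only at the (unique) optimal arm $a_b^\star$, whose accuracy on $\cT_{\epsilon,\gamma}$ gives $\mu_b^\star - \epsilon < \muhat_b^\star(T) < \mu_b^\star + \epsilon$. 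As $\epsilon < \epsilon_\nu$ forces $0 < \muhat_b^\star(T) < 1$, all quantities entering $K$ and $\Delta$ stay in the interior of $[0,1]$.

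For the first claim, $K = \Khat_a(T) \neq 0$, I would simply exhibit a nonzero diagonal entry. Because $\omega_{b,b} = 0$, the $(b,b)$ entry of $\Khat_a(T)$ equals $\kl^+\!\left(\muhat_{a,b}(T)\,\middle|\,\muhat_b^\star(T)\right)$, and the chain $0 \le \muhat_{a,b}(T) < \muhat_b^\star(T) < 1$ established above makes it strictly positive. Since $\cB_a \neq \emptyset$, this already yields $K \neq 0$ for every $h = (K,\Delta) \in \cH_a$.

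For the second claim I would first record the pointwise strict positivity of the $b$-component $\Delta_b(T) := \muhat_b^\star(T) - \muhat_{a,b}(T)$ of $\Deltahat_a(T)$, namely $\Delta_b(T) > 0$ for every $T \in \cT_{\epsilon,\gamma}$ and $b \in \cB_a$, exactly as above; finiteness of $\cB_a$ then gives $\min_{b \in \cB_a}\Delta_b(T) > 0$ for each fixed $T$. The real point is to make this uniform over the infinite set $\cT_{\epsilon,\gamma}$, and here I would invoke the asymptotics: since $\muhat_b^\star(T) \to \mu_b^\star$ and $\muhat_{a,b}(T) \to \mu_{a,b}$, one has $\min_{b \in \cB_a}\Delta_b(T) \to \min_{b \in \cB_a}\Delta_{a,b} > 0$, the limit being positive because each $(a,b)$ with $b \in \cB_a$ is sub-optimal. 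Choosing $T_0$ beyond which this minimum exceeds $\tfrac12 \min_{b}\Delta_{a,b}$, the times $T \in \cT_{\epsilon,\gamma}$ with $T < T_0$ are finitely many and each contributes a strictly positive value; taking the smaller of $\tfrac12\min_b \Delta_{a,b}$ and that finite minimum gives $\min_{h \in \cH_a}\min_{b \in \cB_a}\Delta_b > 0$.

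The main obstacle is precisely this passage from pointwise to uniform positivity of $\Delta_b$: a purely static bound fails, because a badly-estimated sub-optimal couple $(a,b)$ (one not pulled proportionally to the current chosen couple, hence not covered by the accuracy clause of $\cT_{\epsilon,\gamma}$) can have $\muhat_{a,b}(T)$ arbitrarily close to $\muhat_b^\star(T)$ at a single time. The resolution I would rely on is that such near-ties cannot persist along a subsequence $T_k \to \infty$, since the almost-sure convergence of the empirical means separates $\muhat_{a,b}(T)$ from $\muhat_b^\star(T)$ by the fixed gap $\Delta_{a,b} > 0$; combining this with the cofiniteness of $\cT_{\epsilon,\gamma}$ closes the argument and, by Lemma~\ref{lemma : continuity lp}, sets up the continuity of the linear program needed for Lemma~\ref{nopt nnu}.
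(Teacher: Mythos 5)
Your proposal is correct and follows essentially the same route as the paper: the nonvanishing of $K$ via the strictly positive diagonal entry $\kl^+(\muhat_{a,b}(T)|\muhat_b^\star(T))$ (using $\omega_{b,b}=0$ and $\Ohat^\star(T)=\cO^\star$ on $\cT_{\epsilon,\gamma}$), and the uniform positivity of $\min_b\Delta_b$ by combining pointwise positivity at each $T\in\cT_{\epsilon,\gamma}$ with the almost-sure convergence $\min_b\Deltahat_{a,b}(T)\to\min_b\Delta_{a,b}>0$ and the finiteness of the remaining times. Your explicit split at a threshold $T_0$ merely spells out the limit argument the paper states more tersely; there is no substantive difference.
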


\begin{proof} Let $h = (K,\Delta) \in \cH_a$. There exists $ T \in \cT_{\epsilon,\gamma}$ such that $ h = (K,\Delta) = \hhat_a(T) = (\Khat_a(T),\Deltahat_a(T) ) $. Since $ T \in \cT_{\epsilon,\gamma} $, we have $ \Ohat^\star(T) = \cO^\star  $. In particular for all $ b \in \cB_a$,\  $ K_{b,b} = \Khat_{a,b,b}(T) = \kl^+(\muhatab(T)|\muhat_b^\star(T)) > 0  $. 
Furthermore, we have $$ \min\limits_{b \in \cB_a}\Delta_b = \min\limits_{b \in \cB_a}\Deltahat_{a,b}(T) = \min\limits_{b \in \Bhat_a(T)} \muhat_b^\star(T) - \muhat_{a,b}(T) > 0 \,.$$ 
Lastly since  $\forall b \in \cB_a, \muhat_b^\star(\infty)  = \mu_b^\star,\, \muhatab(\infty) = \muab $, we have 
$$\min\limits_{b \in \cB_a}\Deltahat_{a,b}(T) \to \min\limits_{b \in \cB_a}\mu_b^\star - \muab > 0 $$
and
$$ \min\limits_{h = (K,\Delta) \in \cH_a}\min\limits_{b \in \cB_a}\Delta_b = \min\limits_{T \notin \cE\cup\cT_0}\min\limits_{b \in \cB_a}\Deltahat_{a,b}(T) > 0 \,.$$
\end{proof}

\subsection{ Almost surely and on expectation, for all sub-optimal couple \texorpdfstring{$ \dfrac{\Nab(T)}{\log(T)} $}{TEXT} tends to \texorpdfstring{$n_{a,b}^\nu $}{TEXT} }
Combining the upper bounds from the finite analysis and the asymptotic behaviour of $\nopt(\cdot)$, we prove the asymptotic optimality of \IMEDSstar.  
\begin{lemma}[Asymptotic upper bounds]\label{asymptotic upper bound}For all sub-optimal couple $(a,b) \notin \cO^\star$,
$$   \limsupT \dfrac{N_{a,b}(T) }{\log(T)} \leq n^\nu_{a,b}  \,.$$
\end{lemma}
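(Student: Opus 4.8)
The plan is to decompose, for a fixed sub-optimal couple $(a,b)\notin\cO^\star$ and horizon $T$, the total $N_{a,b}(T)$ according to the three mechanisms by which \IMEDSstar can select $(a,b)$: as a current optimal couple (\emph{exploitation}), as a \emph{forced-exploration} pull ($c_a=c_a^+$), and as a \emph{tracking} pull ($c_a<c_a^+$). I would fix $0<\epsilon<\epsilon_\nu$ and $\gamma\in(0,1/2)$ and work on the almost-sure event $\abs{\cT_{\epsilon,\gamma}^c}<\infty$, which holds by Proposition~\ref{prop:upper bounds}. The goal is to show that only the tracking pulls contribute at the $\log(T)$ scale, with leading coefficient $n^\nu_{a,b}$, while the other two mechanisms are $o(\log T)$ almost surely.

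First, the exploitation pulls: if $(a_{t+1},b_{t+1})=(a,b)$ because $(\aul_t,\bul_t)\in\Ohat^\star(t)$, then $(a,b)\in\Ohat^\star(t)\setminus\cO^\star$, so $\Ohat^\star(t)\neq\cO^\star$ and hence $t\in\cT_{\epsilon,\gamma}^c$. Thus the number of exploitation pulls of $(a,b)$ is at most $\abs{\cT_{\epsilon,\gamma}^c}=O(1)$. Next, the forced-exploration pulls of arm $a$ occur only when $c_a=c_a^+$; by the doubling mechanism their number up to $T$ is $O\!\left(\log_2 c_a(T)\right)$, and since $c_a(T)=O(\log T)$ (combine Lemma~\ref{counter - Na} with the bound $N_{a,b}(T)=O(\log T)$ of Proposition~\ref{prop:upper bounds}, or invoke Lemma~\ref{lem:Tc(T)} together with $\abs{\cT_{\epsilon,\gamma}^c}<\infty$), this count is $O(\log\log T)=o(\log T)$.

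The main term is the tracking pulls. Let $t^\star=t^\star(T)\le T$ be the last time $\le T$ at which $(a,b)$ is pulled in tracking mode; if there is no such time, the tracking contribution is bounded and $N_{a,b}(T)=o(\log T)$, which already gives the claim since $n^\nu_{a,b}\ge 0$. At $t^\star$ we have $(\aul_{t^\star},\bul_{t^\star})\notin\Ohat^\star(t^\star)$, $a_{t^\star+1}=a$, $b_{t^\star+1}=b$ and $c_a<c_a^+$, and one checks $(a_{t^\star+1},b_{t^\star+1})\notin\Ohat^\star(t^\star)$ (since $b_{t^\star+1}\in\Bhat_{\aul_{t^\star},\bul_{t^\star}}(t^\star)\cup\{\bul_{t^\star}\}$). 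Hence the tracking branch of Lemma~\ref{imedstar empirical upper bounds}, which rests on $\Nopt$ dominating $N$ (Lemma~\ref{nopt dominates}), gives $N_{a,b}(t^\star)\le \nopt_{a,b}(t^\star)\,\log(t^\star)$. Since after $t^\star$ the couple $(a,b)$ is pulled only in exploitation or forced-exploration mode, combining with the two previous bounds yields, almost surely,
\[
N_{a,b}(T)\ \le\ \nopt_{a,b}(t^\star)\,\log(t^\star)\ +\ 1\ +\ O(\log\log T)\ +\ \abs{\cT_{\epsilon,\gamma}^c}\,.
\]

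Finally I would divide by $\log(T)$ and let $T\to\infty$. The terms $1/\log(T)$, $O(\log\log T)/\log(T)$ and $\abs{\cT_{\epsilon,\gamma}^c}/\log(T)$ vanish almost surely. For the leading term, $\log(t^\star)/\log(T)\le 1$ because $t^\star\le T$, while $\nopt_{a,b}(t^\star)\to n^\nu_{a,b}$ by Lemma~\ref{nopt nnu} (using that $t^\star(T)\to\infty$ whenever there are infinitely many tracking pulls). This gives $\limsupT N_{a,b}(T)/\log(T)\le n^\nu_{a,b}$. The main obstacle is exactly this bookkeeping: isolating the tracking pulls as the only $\Theta(\log T)$ contribution and controlling the time-varying coefficient $\nopt_{a,b}(\cdot)$ through the last tracking time $t^\star$, so that the non-monotone convergence $\nopt_{a,b}(\cdot)\to n^\nu_{a,b}$ can be cashed in; the remaining estimates are already packaged in the quoted lemmas.
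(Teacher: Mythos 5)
Your proof is correct and takes essentially the same route as the paper: the paper's stopping time $\tau$ is precisely your last tracking time (restricted to the good set $\cT_{\epsilon,\gamma}$), it bounds all later pulls of $(a,b)$ by $\abs{\cT_{\epsilon,\gamma}^c}+\abs{\cT_c(T)}=O(\log\log T)$ exactly as in your three-way decomposition, and it concludes from $N_{a,b}(\tau)\le \nopt_{a,b}(\tau)\log(\tau)$ together with Lemma~\ref{nopt nnu}. The one cosmetic difference is that by taking the last tracking time inside $\cT_{\epsilon,\gamma}$ the paper gets $(a,b)\notin\Ohat^\star(\tau)$ for free, whereas your parenthetical derivation of this from $b_{t^\star+1}\in\Bhat_{\aul_{t^\star},\bul_{t^\star}}(t^\star)\cup\{\bul_{t^\star}\}$ is not quite immediate; this is harmless since the tracking-branch bound rests on Lemma~\ref{nopt dominates}, which only needs $(\aul_{t^\star},\bul_{t^\star})\notin\Ohat^\star(t^\star)$.
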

\begin{proof}Let $0 < \epsilon < \epsilon_\nu $ (see Appendix~\ref{notations}) and $\gamma \in (0,1/2)$. Let $(a,b) \notin \cO^\star$ and let us consider an horizon $T \geq 1$. Let us introduce the random variable
$$ \tau \coloneqq \min\Set{t \in \llbracket 1, T \rrbracket \st t \in \cT_{\epsilon,\gamma}\setminus\cT_c(T) \ad (a_{t+1},b_{t+1}) =(a,b) } \,,$$
where $\cT_{\epsilon,\gamma}$ and $\cT(T)$ are respectively introduced in Appendix~\ref{subsec: reliable estimators} and~\ref{app: c c+} . Then, by definition of $\tau$ and since $\abs{\cT_{\epsilon,\gamma}^c} < \infty$, from Lemma~\ref{lem:Tc(T)} we have 
\beq \label{eq:UB_1}
N_{a,b}(T) \leq N_{a,b}(\tau) + \abs{\cT_{\epsilon,\gamma}^c} + \abs{\cT_c(T)} = N_{a,b}(\tau) + O\left(\log(\log(T))\right) .
\eeq
Furthermore, since $ \tau \notin \cT_c(T) $ we have $c_a(\tau) \neq c_a^+(\tau)$. In addition, since $ \tau \in \cT_{\epsilon,\gamma}$ and $(a,b) = (a_{\tau + 1},b_{\tau + 1}) \notin \cO^\star$,  Lemma~\ref{imedstar empirical upper bounds} implies the following empirical upper bound
\beq \label{eq:UB_2}
N_{a,b}(\tau) \leq \log(\tau)\nopt_{a,b}(\tau) \,.
\eeq
In particular, since $\log(\tau)\leq \log(T)$, Eq. \ref{eq:UB_1} and \ref{eq:UB_2} imply
$$ a.s. \quad  \dfrac{N_{a,b}(T)}{\log(T) } \leq \dfrac{N_{a,b}(\tau)}{\log(\tau)} + \dfrac{O\left(\log(\log(T))\right)}{\log(T)}\leq \nopt(\tau) + \dfrac{O\left(\log(\log(T))\right)}{\log(T)} $$
and, since $a.s.$ $\limT\tau =\infty$, from Lemma~\ref{nopt nnu} we get
$$ a.s \quad \limsupT \dfrac{N_{a,b}(T)}{\log(T) } \leq \limsupT  \nopt_{a,b}(\tau)  + \limsupT \dfrac{O\left(\log(\log(T))\right)}{\log(T)} = n_{a,b}^\nu \,.$$

\end{proof}

\begin{lemma}[Asymptotic optimality]For all sub-optimal couple $(a,b) \notin \cO^\star$, we have $$  a.s. \  \limT \dfrac{N_{a,b}(T) }{\log(T)} = n^\nu_{a,b} \ad  \limT \Esp_\nu\!\left[\dfrac{N_{a,b}(T) }{\log(T)}\right] = n^\nu_{a,b}  \,.$$
\end{lemma}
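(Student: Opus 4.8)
The plan is to bundle together the almost-sure upper bound already proved in Lemma~\ref{asymptotic upper bound} with a matching almost-sure \emph{lower} bound coming from the generic pull lower bound, squeeze via uniqueness of the linear program, and finally upgrade to convergence in expectation by a domination argument. Since \IMEDSstar is consistent and induces log-frequencies $1_\cB$ (both established earlier), almost surely every user satisfies $N_b(T)\to\infty$ and $\log(N_b(T))/\log(T)\to 1$; hence Proposition~\ref{prop:LB_pull} applies and yields, almost surely, for every sub-optimal couple $(a,b)\notin\cO^\star$,
\[
\liminfT \frac{1}{\log(T)}\sum_{b'\in\Bab}\klof{\mu_{a,b'}}{\mu_b^\star-\omegabb}\,N_{a,b'}(T)\ \ge\ 1\,.
\]
In parallel, Lemma~\ref{asymptotic upper bound} gives $\limsupT N_{a,b}(T)/\log(T)\le n^\nu_{a,b}$, while Proposition~\ref{prop:upper bounds} (together with $\abs{\cT_{\epsilon,\gamma}^c}<\infty$ a.s.) shows that almost surely the whole vector $\big(N_{a,b}(T)/\log(T)\big)_{(a,b)\notin\cO^\star}$ stays bounded.

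Next I would run a limit-point argument on a probability-one event on which all of the above hold. Let $\bar n=(\bar n_{a,b})$ be any accumulation point of the bounded sequence of vectors $\big(N_{a,b}(T)/\log(T)\big)$. The displayed lower bound forces $\bar n$ to satisfy every constraint $\sum_{b'\in\Bab}\klof{\mu_{a,b'}}{\mu_b^\star-\omegabb}\bar n_{a,b'}\ge 1$, so $\bar n$ is feasible for the program defining $C_\omega^\star(1_\cB,\nu)$, and its objective is therefore at least $C_\omega^\star(1_\cB,\nu)=\sum_{(a,b)\notin\cO^\star}\Delta_{a,b}\,n^\nu_{a,b}$. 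Conversely the upper bound gives $\bar n_{a,b}\le n^\nu_{a,b}$ coordinate-wise, and since each gap $\Delta_{a,b}$ is strictly positive for sub-optimal couples, the objective of $\bar n$ is at most $\sum\Delta_{a,b}\,n^\nu_{a,b}$. The two inequalities must be equalities, so $\bar n$ is an optimal solution; by the uniqueness built into the non-peculiar assumption (the solution $n^\nu$ of Lemma~\ref{nopt nnu}) we conclude $\bar n=n^\nu$. As every accumulation point equals $n^\nu$ and the sequence is bounded, $N_{a,b}(T)/\log(T)\to n^\nu_{a,b}$ almost surely.

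Finally I would obtain convergence in expectation by dominated convergence. Proposition~\ref{prop:upper bounds} provides, for $T\ge 3$, the bound $N_{a,b}(T)/\log(T)\le (1+\alpha_\nu(\epsilon))/m_\nu+\abs{\cT_{\epsilon,\gamma}^c}$, whose right-hand side is integrable because $\Esp_\nu[\abs{\cT_{\epsilon,\gamma}^c}]<\infty$. Combining this domination with the almost-sure limit just established, the dominated convergence theorem gives $\Esp_\nu[N_{a,b}(T)/\log(T)]\to n^\nu_{a,b}$. Summing these against the gaps $\Delta_{a,b}$ then recovers $\limsupT R(\nu,T)/\log(T)\le C_\omega^\star(1_\cB,\nu)$, matching the lower bound of Corollary~\ref{cor:LB_regret} and hence proving the controlled-scenario part of Theorem~\ref{th:asymptotic_optimality_IMEDSstar}.

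I expect the main obstacle to be the squeeze step: one must carefully translate the pull lower bound of Proposition~\ref{prop:LB_pull}, stated with $\log(N_b(T))$, into constraints on $N_{a,b}(T)/\log(T)$ using $\log(N_b(T))\sim\log(T)$, and then rigorously invoke feasibility, the coordinate-wise upper bound, strict positivity of the gaps, and uniqueness of the optimum to pin down \emph{every} accumulation point as $n^\nu$.
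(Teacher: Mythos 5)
Your proposal is correct and follows essentially the same route as the paper: the a.s.\ lower bound from consistency and log-frequencies $1_\cB$ (via Proposition~\ref{prop:LB_pull}), the coordinate-wise asymptotic upper bound of Lemma~\ref{asymptotic upper bound}, the uniqueness of the LP optimum for non-peculiar configurations, and dominated convergence via Proposition~\ref{prop:upper bounds}. Your explicit accumulation-point squeeze merely spells out the step the paper compresses into ``Pareto-optimality combined with asymptotic upper bounds ensures convergence.''
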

\begin{proof}Since \IMEDSstar  is a consistent strategy that induces sequences of users with log-frequencies equal to $1$, we have
$$ \forall (a,b)\neq \cO^\star, \quad  \liminfT \dfrac{1}{\log(T)}  \sum\limits_{b' \in\cB_a} N_{a,b'}(T) \kl^+(\mu_{a,b'}|\mu_b^\star - \omegabb) \geq 1 \,.$$
%Note that in Proposition~\ref{prop:LB_pull},  it is not assumed that each user has only one optimal arm. However, the given proof conforms to the case where each user is supposed to have a unique optimal arm. 
Then, Pareto-optimality of \IMEDSstar combined with asymptotic upper bounds given in  Lemma~\ref{asymptotic upper bound} ensures that for all $(a,b) \notin \cO^\star$, $N_{a,b}(T)/\log(T) \to n^\nu_{a,b} $. Since, the $ N_{a,b}(T)/\log(T) $ are dominated by an integrable variable (see Proposition~\ref{prop:upper bounds} in Appendix~\ref{app: IMEDSstar finite time analysis}), we also have these convergences on average.
\end{proof}

\section{Concentration lemmas: Proofs}
	\label{app: concentration_lemmas}
	{\bf Lemma} Let  $\nu \!\in\! \cD_\omega $. For all $  0 \!<\! \epsilon, \gamma \!\leq\! 1/2 $ and  for all couples $(a,b),\, (a',b') \in \cA\!\times\!\cB$, 
		\[
		\Esp_\nu\left[\sum\limits_{t\geq 1}\ind_{\Set{(a_{t+1 },b_{t+1})=(a,b),\ N_{a',b'}(t) \geq \gamma  N_{a,b}(t),\  \abs{\muhat_{a',b'}(t) - \mu_{a',b'}} \geq \epsilon }} \right] \leq \dfrac{17}{\gamma\epsilon^4} \,.
		\]
	\begin{proof} Considering the stopping times $ \tau_{a,b}^n = \inf{ \Set{t \geq 1, N_{a,b}(t) = n} }$  we will rewrite the sum  
		~\\$ \sum\limits_{t\geq 1}\ind_{\Set{(a_{t+1},b_{t+1})=(a,b),\ N_{a',b'}(t) \geq \gamma  N_{a,b}(t),\  \abs{\muhat_{a',b'}(t) - \mu_{a',b'}} \geq \epsilon }} $ and use an Hoeffding's type argument.
		\beqan
		&& \sum\limits_{t \geq 1}\ind_{\Set{(a_{t+1},b_{t+1})=(a,b),\  N_{a',b'}(t) \geq \gamma N_{a,b}(t),\  \abs{\muhat_{a',b'}(t) - \mu_{a',b'}} \geq \epsilon }} \\
		&\leq& \sum\limits_{t \geq 1}\sum\limits_{n \geq 1,\, m \geq 0}\ind_{\Set{ \tau_{a,b}^n = t +1, N_{a',b'}(t) = m }} \ind_{\Set{ m \geq  \gamma(n-1) ,\  \abs{\muhat_{a',b'}^m - \mu_{a',b'}} \geq \epsilon }} \\
		&=& \sum\limits_{ m \geq 0} \sum\limits_{n \geq 1}\ind_{\Set{ m \geq \gamma(n-1),\  \abs{\muhat_{a',b'}^m - \mu_{a',b'}} \geq \epsilon }}\sum\limits_{t \geq 1}\ind_{\Set{ \tau_{a,b}^n = t +1, N_{a',b'}(t) = m }}  \\
		&\leq& \sum\limits_{ m \geq 0} \sum\limits_{n \geq 1}\ind_{\Set{ m \geq \gamma(n-1),\  \abs{\muhat_{a',b'}^m - \mu_{a',b'}} \geq \epsilon }}\sum\limits_{t \geq 1}\ind_{\Set{ \tau_{a,b}^n = t +1}}  \\
		&\leq& \sum\limits_{ m \geq 0} \sum\limits_{n \geq 1}\ind_{\Set{ m \geq  \gamma(n-1),\  \abs{\muhat_{a',b'}^m - \mu_{a',b'}} \geq \epsilon }} \\
		\eeqan 
		Taking the expectation , it comes:
		\beqan
		&&\Esp_\nu\left[\sum\limits_{t\geq 1}\ind_{\Set{(a_{t+1},b_{t+1})=(a,b),\ N_{a',b'}(t) \geq \gamma  N_{a,b}(t),\  \abs{\muhat_{a',b'}(t) - \mu_{a',b'}} \geq \epsilon }} \right]  \\
		&\leq& \sum\limits_{ m \geq 0} \sum\limits_{n \geq 1}\ind_{\Set{ m \geq \gamma(n-1) }} \Pr_\nu\left(  \abs{\muhat_{a'}^m - \mu_{a'}} \geq \epsilon  \right) \\
		&\leq& \sum\limits_{ m \geq 0} \sum\limits_{n \geq 1}\ind_{\Set{ m \geq \gamma(n-1) }} 2\e^{-2m\epsilon^2} \quad \textnormal{(Hoeffding inequality)} \\
		&=& 2\sum\limits_{ m \geq 0} \left(\dfrac{m}{\gamma}+1\right)\e^{-2m\epsilon^2} \\
		&=& 2\sum\limits_{ m \geq 1} \dfrac{m}{\gamma}\e^{-2m\epsilon^2} +  2\sum\limits_{ m \geq 0} \e^{-2m\epsilon^2} \\ \\
		&=& \dfrac{1}{\gamma} \dfrac{2\e^{-2\epsilon^2}}{(1 - \e^{-2\epsilon^2})^2} + \dfrac{1}{1 - \e^{-2\epsilon^2}} \\
		&=&  \dfrac{1}{\gamma} \dfrac{2\e^{2\epsilon^2}}{(\e^{2\epsilon^2} - 1)^2} + \dfrac{\e^{2\epsilon^2}}{ \e^{2\epsilon^2} - 1} \\ 
		&\leq& \dfrac{1}{\gamma}\dfrac{8\e^{1/2}}{ \epsilon^4} + \dfrac{\e^{2\epsilon^2}}{ 2\epsilon^2} \leq \dfrac{17}{\gamma\epsilon^4}\,.
		\eeqan
		
	\end{proof}
	
	~\\ {\bf Lemma} Let  $\nu \!\in\! \cD_\omega $. For all couple $(a,b) \in \cA\!\times\!\cB$, for all $0 < \mu < \muab$, 
		\[ 
		  \Esp_\nu\left[\sum\limits_{n \geq 1}\ind_{\Set{ \muhat_{a,b}^{n}  < \mu}}n\exp(n \kl(\muhat_{a,b}^{n}|\mu))\right] \leq \dfrac{6\e}{(1- \frac{\log(1-\mu)}{\log(1-\muab)})\left(1-\e^{-(1- \frac{\log(1-\mu)}{\log(1-\muab)})\kl(\muab|\mu)}\right)^3} \,,
		\]
		where $\muhatab^n$ estimates $\muab$ after $n$ pulls of couple $(a,b)$ (see Appendix~\ref{notations}).
	\begin{proof} The proof is based on a Chernoff type inequality and a calculation by measurement change.
	The proof comes from \citet{honda2015imed}. We explicit here the particular case of Bernoulli distributions for completeness.
	
	~\\Let us rephrase Proposition 11 from \cite{honda2015imed}. Since we consider Bernoulli distributions, we get a  more explicit formulation.  
\begin{proposition}\label{prop : large devdev}
Let  $\nu \in \cD_\omega$. Let $ (a,b) \in \cA\times\cB$ and $ 0 < \mu < \muab$. Then, for all $n \geq 0 $ and $ u \in \Real$, we have 
$$ \Pr_\nu( \kl(\muhatab^n|\mu) \geq u,\ \muhatab^n \leq \mu) \leq \left\{ \begin{array}{l}
    \e^{-n\kl(\muab|\mu)} \qquad \textnormal{ \ if } u \leq  \frac{\log(1-\mu)}{\log(1-\muab)} \, \kl(\muab|\mu)  \\
    2\e(1+\frac{\log(1-\muab)}{\log(1-\mu)}n) \e^{-n\frac{\log(1-\muab)}{\log(1-\mu)}u} \quad \textnormal{otherwise.}
\end{array} \right. 
$$
\end{proposition}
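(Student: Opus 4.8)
The plan is to specialise to Bernoulli distributions the change-of-measure (Chernoff) argument underlying Proposition~11 of \citet{honda2015imed}, and to compute every constant explicitly. Write $\muhat_{a,b}^n$ for the empirical mean of $n$ i.i.d.\ $\textnormal{Bern}(\muab)$ rewards and set $r \coloneqq \frac{\log(1-\muab)}{\log(1-\mu)}$, which satisfies $r > 1$ because $0 < \mu < \muab < 1$. First I would turn the event, which looks two-sided, into a genuine lower tail. On $\{x \le \mu\}$ the map $x \mapsto \kl(x|\mu)$ is continuous and strictly decreasing, from $\kl(0|\mu) = -\log(1-\mu)$ down to $0$; hence for $0 \le u \le -\log(1-\mu)$ there is a unique $x_u \in [0,\mu]$ with $\kl(x_u|\mu) = u$, the event is empty (and the bound trivial) when $u > -\log(1-\mu)$, and one may take $x_u = \mu$ when $u \le 0$. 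In every case
\[
\Pr_\nu\!\big(\kl(\muhat_{a,b}^n|\mu) \ge u,\ \muhat_{a,b}^n \le \mu\big) = \Pr_\nu\!\big(\muhat_{a,b}^n \le x_u\big),
\]
so it suffices to bound a one-sided Bernoulli deviation at the threshold $x_u$.

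Next I would apply an exponential change of measure to this lower tail, writing it as the binomial sum $\sum_{k \le n x_u} \binom{n}{k}\muab^k (1-\muab)^{n-k}$ and tilting by the natural parameter. The two regimes in the statement reflect whether the tilt optimal for the event stays in the interior of the Bernoulli parameter range or saturates at its boundary, namely the degenerate all-failures outcome $\muhat_{a,b}^n = 0$. While $x_u$ stays bounded away from $0$ --- equivalently while $u \le \frac{\log(1-\mu)}{\log(1-\muab)}\kl(\muab|\mu)$ --- the tilt is interior and a standard Cram\'er estimate yields the first branch, whose exponent the statement records as $\kl(\muab|\mu)$. Once $u$ crosses this threshold, $x_u$ is forced against $0$, the tilt saturates, and the bound is instead controlled by the extreme weight $(1-\muab)^n = \e^{\,n\log(1-\muab)}$; since at $x_u = 0$ one has $-\log(1-\muab) = r\,\kl(0|\mu) = r\,u$, this is exactly the decay $\e^{-nru}$ of the second branch.

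The affine prefactor $2\e(1 + rn)$ of the second branch does not come from the leading exponential but from summing the near-extreme contributions $k = 0,1,2,\dots$: their successive ratios $\frac{n-k+1}{k}\cdot\frac{\muab}{1-\muab}$ can be controlled and the tail folded into a convergent geometric series, the linear factor $rn$ arising from the number of relevant types and the $2\e$ from the geometric remainder. I expect this boundary regime to be the main obstacle, for two reasons. First, one must reproduce the affine factor $2\e(1+rn)$ rather than the naive $n^{-1/2}$ Stirling prefactor, which forces a careful term-by-term (rather than purely exponential) estimate. Second, making the case threshold come out exactly at $\frac{\log(1-\mu)}{\log(1-\muab)}\kl(\muab|\mu)$ rests on the convexity and monotonicity comparison between $\kl(\cdot|\muab)$ and $\kl(\cdot|\mu)$ on $[0,\mu]$, which must be verified to ensure both branches are genuine upper bounds and join consistently at the threshold, where $r\,u = \kl(\muab|\mu)$. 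Once these explicit Bernoulli computations are matched against the generic constants of \citet{honda2015imed}, the two-case bound follows.
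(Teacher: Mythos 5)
The paper never actually proves this proposition: in Appendix~\ref{app: concentration_lemmas} it is introduced with ``Let us rephrase Proposition 11 from \citet{honda2015imed}'' and used as a black box, so your from-scratch Chernoff argument is, in principle, a genuine addition rather than a retracing. Your reduction of the event to the one-sided tail $\Pr_\nu(\muhatab^n \le x_u)$ with $\kl(x_u|\mu)=u$ is correct, and exponential tilting is the right tool. But the step ``a standard Cram\'er estimate yields the first branch, whose exponent the statement records as $\kl(\muab|\mu)$'' is exactly where the argument breaks, and it cannot be repaired: the Chernoff bound for a Binomial lower tail puts the \emph{threshold} in the first slot, $\Pr_\nu(\muhatab^n\le x)\le \e^{-n\,\kl(x|\muab)}$ for $x\le\muab$, so the best $u$-uniform exponent available on the first branch is $\kl(\mu|\muab)$ (the value at $x_u=\mu$), not $\kl(\muab|\mu)$. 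Bernoulli KL is asymmetric and neither ordering dominates, and the statement as printed is in fact false: take $\muab=1/2$, $\mu=1/100$, $u=0$, $n=1$. Then $u\le\frac{\log(1-\mu)}{\log(1-\muab)}\,\kl(\muab|\mu)\approx 0.023$, so the first branch applies, yet the left-hand side equals $\Pr_\nu(X_1=0)=1/2$ while $\e^{-\kl(1/2|1/100)}\approx 0.20$. The proposition is evidently a transcription of Honda--Takemura's result with the two KL arguments swapped; with exponent $\kl(\mu|\muab)$ and threshold $u\le\frac{\log(1-\mu)}{\log(1-\muab)}\,\kl(\mu|\muab)$ it becomes true, and that corrected form (with the same swap propagated into Lemma~\ref{imedstar large deviation} and $E_\nu$) is all the downstream finiteness arguments require. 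A blind proof attempt should have detected this rather than taken the stated exponent on faith.

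Two further points about your plan. First, the case split does \emph{not} correspond to interior versus saturated tilt as you assert: saturation occurs at $u=\kl(0|\mu)=-\log(1-\mu)$, whereas the stated threshold $\frac{1}{r}\kl(\muab|\mu)$ (with $r=\frac{\log(1-\muab)}{\log(1-\mu)}>1$) can exceed it --- in the example above $0.023 > -\log(0.99)\approx 0.010$, so the entire nonempty range of $u$ lies in branch one and branch two is vacuous there. Second, for Bernoulli no prefactor $2\e(1+rn)$ is needed at all: the function $\psi(x)=\kl(x|\muab)-r\,\kl(x|\mu)$ is concave on $[0,\mu]$ (its derivative is $(1-r)\log\frac{x}{1-x}$ plus a constant, hence strictly decreasing), with $\psi(0)=0$ and $\psi(\mu)=\kl(\mu|\muab)>0$, so $\psi\ge 0$ on $[0,\mu]$; combined with the monotonicity of $\kl(\cdot|\muab)$ this gives the clean bound
\[
\Pr_\nu\big( \kl(\muhatab^n|\mu) \geq u,\ \muhatab^n \leq \mu\big) \;\le\; \e^{-n\max\{\,r u,\;\kl(\mu|\muab)\,\}}
\]
for all $u$, from which both (corrected) branches follow immediately. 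The polynomial prefactor and the term-by-term binomial bookkeeping you anticipate are artifacts of Honda--Takemura's argument for general semi-bounded distributions and are superfluous in the Bernoulli specialization.
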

We know rewrite  equality (27) from \cite{honda2015imed} with our notations.
~\\ Let $ n \geq 1 $. We have from Proposition~\ref{prop : large devdev} that :
\beqan
&&\Esp_\nu\left[\ind_{\Set{\muhatab^n \leq \mu}} n \e^{n \kl(\muhatab^n|\mu)}\right] \\
&=& \int_0^\infty \Pr_\nu\left( \ind_{\Set{\muhatab^n \leq \mu}} n \e^{n \kl(\muhatab^n|\mu)} > x \right) \textnormal{ d }x \\
&=& \int_0^\infty \Pr_\nu\left(  n \e^{n \kl(\muhatab^n|\mu)} > x,\ \muhatab^n \leq \mu \right) \textnormal{ d}x \\
&=& \int_{-\infty}^\infty n^2 \e^{n u} \Pr_\nu\left(   \kl(\muhatab^n|\mu) > u,\ \muhatab^n \leq \mu \right) \textnormal{ d}u \qquad ( x = n \e^{n u} ,\ \textnormal{ d}x = n^2 \e^{n u} \textnormal{ d}u) \\
&=& \int_{-\infty}^{\frac{\kl(\muab|\mu)\log(1-\mu)}{\log(1-\muab)}} n^2 \e^{n u} \Pr_\nu\left(\kl(\muhatab^n|\mu) > u,\  \muhatab^n \leq \mu \right) \textnormal{ d}u \\
&+& \int_{\frac{\kl(\muab|\mu)\log(1-\mu)}{\log(1-\muab)}}^\infty n^2 \e^{n u} \Pr_\nu\left(   \kl(\muhatab^n|\mu) > u,\ \muhatab^n \leq \mu \right) \textnormal{ d}u \\
&\leq& \int_{-\infty}^{\frac{\kl(\muab|\mu)\log(1-\mu)}{\log(1-\muab)}} n^2 \e^{n u} \e^{-n \kl(\muab|\mu)} \textnormal{ d}u\\
&+& \int_{\frac{\kl(\muab|\mu)\log(1-\mu)}{\log(1-\muab)}}^\infty n^2 \e^{n u} 2\e(1+\frac{\log(1-\muab)}{\log(1-\mu)}n) \e^{-n\frac{\log(1-\muab)}{\log(1-\mu)}u} \textnormal{ d}u \ \ (\textnormal{Proposition \ref{prop : large devdev}}) \\
&=& n \e^{-n \kl(\muab|\mu)}\int_{-\infty}^{\frac{\kl(\muab|\mu)\log(1-\mu)}{\log(1-\muab)}} n \e^{n u}  \textnormal{ d}u \\
&+& 2n\e (1+\frac{\log(1-\muab)}{\log(1-\mu)}n) \int_{\frac{\kl(\muab|\mu)\log(1-\mu)}{\log(1-\muab)}}^\infty n \e^{- ( \frac{\log(1-\muab)}{\log(1-\mu)} - 1)n u}   \textnormal{ d}u \\
&=& n\e^{-n(1- \frac{\log(1-\mu)}{\log(1-\muab)}) \kl(\muab|\mu) } + 2n\e (1+\frac{\log(1-\muab)}{\log(1-\mu)}n) \dfrac{\e^{-n(1- \frac{\log(1-\mu)}{\log(1-\muab)}) \kl(\muab|\mu) }}{\frac{\log(1-\muab)}{\log(1-\mu)} - 1}\\
&=& \left(1+\dfrac{2e}{\frac{\log(1-\muab)}{\log(1-\mu)} - 1}\right) n \e^{-n(1- \frac{\log(1-\mu)}{\log(1-\muab)}) \kl(\muab|\mu) } \\
&+& \dfrac{2e}{1 - \frac{\log(1-\mu)}{\log(1-\muab)}}n^2 \e^{-n(1- \frac{\log(1-\mu)}{\log(1-\muab)}) \kl(\muab|\mu) }
\eeqan

~\\ To ends the proof, we use the following inequalities for $ r > 0 $:
\beqan
&& \sum\limits_{n\geq 1}n\e^{-n r} \leq \dfrac{1}{(1-\e^{-r})^2} \leq \dfrac{1}{(1-\e^{-r})^3}  \\
&& \sum\limits_{n\geq 1}n^2\e^{-n r} \leq  \dfrac{2}{(1-\e^{-r})^3} \,. \\
\eeqan
		
\end{proof}

\section{\label{app: extended proof}\texorpdfstring{\IMEDS, \IMEDStwo, \IMEDSstartwo}{TEXT}: Finite-time analysis}
In this subsection we rewrite and adapt the results established in Sections~\ref{app: IMEDSstar finite time analysis},\,\ref{app:proof_main_result} for \IMEDSstar strategy to the other considered strategies. Mainly, we rewrite the empirical lower bounds and upper bounds detailed in Lemmas~\ref{imedstar empirical lower bounds}, and~\ref{imedstar empirical upper bounds}. These inequalities form the basis of the analysis of \IMEDSstar strategy. For the sake of brevity and clarity, proofs are not given. 

\subsection{\IMEDS finite-time analysis}
Under \IMEDS strategy we do not solve empirical versions of optimisation problem~\ref{eq:lb_regret_structure} and pull the couples with minimum (pseudo) indexes. This leads to the following empirical bounds.
\begin{lemma}[\IMEDS: Empirical lower bounds]\label{imeds empirical lower bounds}Under \IMEDS, at each step time $t \!\geq\! 1$, for all couple $(a,b) \!\notin\! \Ohat^\star(t)$,
		\[ 
         \log\left(N_{a_{t+1}, b_{t+1}}(t)\right)  \leq \sum\limits_{b'\in \Bhat_{a,b}(t)} N_{a,b'}(t) \, \klof{\muhat_{a,b'}(t)}{\muhat_{b}^\star(t) - \omegabb}   + \log\left(N_{a,b'}(t)\right) \,.
        \]
    Furthermore, for all couple $(a,b) \!\in\! \Ohat^\star(t)$,
        \[
          N_{a_{t+1},b_{t+1}}(t) \leq N_{a,b}(t) \,.
        \]
\end{lemma}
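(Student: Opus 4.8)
The plan is to mirror the argument of Lemma~\ref{imedstar empirical lower bounds} established for \IMEDSstar, which is in fact simpler here because \IMEDS pulls directly a couple of minimum index rather than going through the more elaborate exploration/exploitation selection. So I would first invoke the defining rule of \IMEDS (Algorithm~\ref{alg:IMEDS}): the pulled couple satisfies $(a_{t+1},b_{t+1}) \in \argmin\limits_{(a,b) \in \cA\times\cB} I_{a,b}(t)$, so that
\[
I_{a,b}(t) \geq I_{a_{t+1},b_{t+1}}(t) \qquad \text{for all } (a,b) \in \cA\times\cB .
\]

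The core step is then the uniform lower bound $I_{a_{t+1},b_{t+1}}(t) \geq \log(N_{a_{t+1},b_{t+1}}(t))$, which I would obtain by a short case analysis on the status of the pulled couple. If $(a_{t+1},b_{t+1}) \in \Ohat^\star(t)$, the definition of the index gives $I_{a_{t+1},b_{t+1}}(t) = \log(N_{a_{t+1},b_{t+1}}(t))$ and the bound is an equality. If instead $(a_{t+1},b_{t+1}) \notin \Ohat^\star(t)$ and $N_{a_{t+1},b_{t+1}}(t) > 0$, I would observe that the user $b_{t+1}$ belongs to its own informative set $\Bhat_{a_{t+1},b_{t+1}}(t)$: indeed $\omega_{b_{t+1},b_{t+1}} = 0$ by Assumption~\ref{ass:metric}, while being outside $\Ohat^\star(t)$ forces the strict inequality $\muhat_{a_{t+1},b_{t+1}}(t) < \muhat_{b_{t+1}}^\star(t) = \muhat_{b_{t+1}}^\star(t) - \omega_{b_{t+1},b_{t+1}}$ (here $\muhat_{b_{t+1}}^\star(t) = \max_{a}\muhat_{a,b_{t+1}}(t)$ dominates the empirical mean of a currently non-optimal couple). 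Hence the summand indexed by $b' = b_{t+1}$ contributes a term $\log(N_{a_{t+1},b_{t+1}}(t))$ to $I_{a_{t+1},b_{t+1}}(t)$, and since all remaining $\kl$ and logarithmic terms are nonnegative, the claimed bound follows; the degenerate case $N_{a_{t+1},b_{t+1}}(t) = 0$ is trivial under the convention $\log 0 = -\infty$.

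Finally I would substitute the explicit definition of $I_{a,b}(t)$ into $I_{a,b}(t) \geq \log(N_{a_{t+1},b_{t+1}}(t))$, distinguishing the two regimes. For $(a,b) \notin \Ohat^\star(t)$ this yields directly the first displayed inequality of the lemma. For $(a,b) \in \Ohat^\star(t)$ the index equals $\log(N_{a,b}(t))$, so the inequality reads $\log(N_{a_{t+1},b_{t+1}}(t)) \leq \log(N_{a,b}(t))$, and exponentiating gives the second claim $N_{a_{t+1},b_{t+1}}(t) \leq N_{a,b}(t)$. I expect the only genuinely delicate point, and the one I would treat most carefully, to be the verification that $b_{t+1} \in \Bhat_{a_{t+1},b_{t+1}}(t)$ in the sub-optimal case together with the $N_{a_{t+1},b_{t+1}}(t)=0$ boundary, since this is precisely what guarantees that the logarithmic term is present in the sum; everything else is a direct substitution.
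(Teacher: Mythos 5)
Your proposal is correct and follows essentially the same route as the paper: the paper omits an explicit proof of Lemma~\ref{imeds empirical lower bounds} but points to the proof of Lemma~\ref{imedstar empirical lower bounds}, which is precisely your argument (argmin rule, then the uniform bound $I_{a_{t+1},b_{t+1}}(t)\geq\log(N_{a_{t+1},b_{t+1}}(t))$ via the observation that $b_{t+1}\in\Bhat_{a_{t+1},b_{t+1}}(t)$ unless $N_{a_{t+1},b_{t+1}}(t)=0$, then substitution of the index definition), with the third forced-exploration case of \IMEDSstar dropped since \IMEDS pulls the minimizing couple directly. Your careful handling of the membership $b_{t+1}\in\Bhat_{a_{t+1},b_{t+1}}(t)$ and the $N_{a_{t+1},b_{t+1}}(t)=0$ boundary matches the paper's remark on exactly this point.
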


\begin{lemma}[\IMEDS: Empirical upper bounds]\label{imeds empirical upper bounds} Under \IMEDS, at each step time $t \!\geq\! 1$ such that $(a_{t+1},b_{t+1}) \!\notin\! \Ohat^\star(t)$ we have
		\[
		\sum\limits_{b'\in \Bhat_{a_{t+1},b_{t+1}}(t)} N_{a_{t+1},b'}(t)\, \klof{\muhat_{a_{t+1},b'}(t)}{\muhat_{b_{t+1}}^\star(t) - \omega_{b_{t+1},b'}} \leq \log\!\left(N_{b_{t+1}}(t)\right) \,.
		\]
		In particular
		\[
		N_{a_{t+1},b_{t+1}}(t)\,\klof{\muhat_{a_{t+1},b_{t+1}}(t)}{\muhat_{b_{t+1}}^\star(t)} \leq \log\!\left(N_{b_{t+1}}(t)\right) \,.
		\]
\end{lemma}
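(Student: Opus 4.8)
The plan is to exploit directly the defining property of \IMEDS, namely that $(a_{t+1},b_{t+1})$ minimizes the index $I_{a,b}(t)$ over all couples, together with the fact that the played user $b_{t+1}$ always admits a \emph{current-optimal} couple. First I would observe that there is an arm $a^\star$ with $\muhat_{a^\star,b_{t+1}}(t)=\muhat_{b_{t+1}}^\star(t)$, so that $(a^\star,b_{t+1})\in\Ohat^\star(t)$ and, by the first branch of~\eqref{eq:def_imed_index_couple}, $I_{a^\star,b_{t+1}}(t)=\log(N_{a^\star,b_{t+1}}(t))\leq\log(N_{b_{t+1}}(t))$, the last inequality holding because the number of pulls of a single arm for user $b_{t+1}$ never exceeds the total number of pulls of that user. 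Minimality of the index then yields
\[
I_{a_{t+1},b_{t+1}}(t)\leq I_{a^\star,b_{t+1}}(t)\leq \log\!\left(N_{b_{t+1}}(t)\right).
\]

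Next, since by hypothesis $(a_{t+1},b_{t+1})\notin\Ohat^\star(t)$, I would expand $I_{a_{t+1},b_{t+1}}(t)$ using the second branch of~\eqref{eq:def_imed_index_couple}, which reads as a sum over $b'\in\Bhat_{a_{t+1},b_{t+1}}(t)$ of nonnegative divergence terms plus the exploration terms $\log(N_{a_{t+1},b'}(t))$. As each $b'$ in $\Bhat_{a_{t+1},b_{t+1}}(t)$ satisfies $N_{a_{t+1},b'}(t)\geq 1$, every such logarithm is nonnegative; dropping them leaves the divergence sum bounded above by the index, hence by $\log(N_{b_{t+1}}(t))$. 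This is exactly the first displayed inequality of the lemma.

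For the second (``in particular'') inequality I would isolate the contribution of $b'=b_{t+1}$ in that same sum. Using $\omega_{b_{t+1},b_{t+1}}=0$ from Assumption~\ref{ass:metric} and the fact that $(a_{t+1},b_{t+1})\notin\Ohat^\star(t)$ forces $\muhat_{a_{t+1},b_{t+1}}(t)<\muhat_{b_{t+1}}^\star(t)$, the user $b_{t+1}$ indeed belongs to $\Bhat_{a_{t+1},b_{t+1}}(t)$ whenever $N_{a_{t+1},b_{t+1}}(t)>0$, and its associated divergence term equals $N_{a_{t+1},b_{t+1}}(t)\,\klof{\muhat_{a_{t+1},b_{t+1}}(t)}{\muhat_{b_{t+1}}^\star(t)}$. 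Since all terms of the divergence sum are nonnegative, this single term is at most the whole sum, hence at most $\log(N_{b_{t+1}}(t))$; the degenerate case $N_{a_{t+1},b_{t+1}}(t)=0$ holds trivially under the convention $\muhat_{a,b}(t)=0$. I do not expect a genuine obstacle here: the only subtle points are checking that the played user always supplies the comparison couple giving the bound $\log(N_{b_{t+1}}(t))$, and correctly handling the membership of $b_{t+1}$ in its own informative set $\Bhat_{a_{t+1},b_{t+1}}(t)$ via $\omega_{b_{t+1},b_{t+1}}=0$.
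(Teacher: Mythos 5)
Your proof is correct and follows essentially the same route as the paper: the paper omits the proof of this lemma as a direct adaptation of Lemma~\ref{imedstar empirical upper bounds}, whose proof proceeds exactly as you do --- bounding the minimal index by the index $\log\!\left(N_{a^\star,b_{t+1}}(t)\right)\leq\log\!\left(N_{b_{t+1}}(t)\right)$ of a current-optimal couple for user $b_{t+1}$, expanding the second branch of~\eqref{eq:def_imed_index_couple} and dropping the nonnegative $\log$ terms, then isolating the $b'=b_{t+1}$ term via $\omega_{b_{t+1},b_{t+1}}=0$. Your explicit checks that $b_{t+1}\in\Bhat_{a_{t+1},b_{t+1}}(t)$ when $N_{a_{t+1},b_{t+1}}(t)>0$ and that the degenerate case is harmless are sound and do not change the argument.
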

Based on this lemmas, one can prove \IMEDS Pareto-optimality in a similar way as for \IMEDSstar strategy.
\begin{proposition}[\IMEDS: Upper bounds ] \label{prop:imeds upper bounds} Let $ \nu \!\in\! \cD_\omega$. Let $ 0 \!<\! \epsilon \!<\! \epsilon_\nu$ and $\gamma \!\in\! (0,1/2)$. Let us introduce 
\[
 \cT_{\epsilon, \gamma} \coloneqq \Set{t \geq 1:\ \begin{array}{l}
		  \Ohat^\star(t) = \cO^\star  \\  \forall (a,b) \textnormal{ s.t. } N_{a,b}(t) \geq \gamma\, N_{a_{t+1},b_{t+1}}(t) \textnormal{ or } (a,b) \in \cO^\star, \ \abs{\muhatab(t) - \muab } < \epsilon 
		\end{array}   }\,. 
\] 
Then under \IMEDS strategy,
\[
\Esp_\nu[\abs{\cT_{\epsilon,\gamma}^{ c}}] \leq \dfrac{17}{\gamma\epsilon^4}  \abs{\cA}^2\abs{\cB}^2 +  2\abs{\cA}^2\abs{\cB}(1 + E_\nu)^{\abs{\cB}} 
\]
and for all horizon time $ T \!\geq\! 1$, for all arm $a \in \cA$,
\beqan 
 \min\limits_{b:\, (a,b) \notin \cO^\star } \dfrac{1}{\log\!\left(N_b(T)\right)}\sum\limits_{b'\in \Bab}N_{a,b'}(T)\, \kl(\mu_{a,b'}|\mu_b^\star - \omegabb) 
&\leq& (1 + \alpha_\nu(\epsilon))\left[ 1 + \gamma \dfrac{M_\nu}{m_\nu} \right]  \\
&+&\dfrac{M_\nu\abs{\cT_{\epsilon,\gamma}}}{\min_{b\in\cB}\log\!\left(N_b(T)\right)}   
\eeqan
where $m_\nu$ and $M_\nu$ are defined  as follows:
\[
m_\nu = \min\limits_{\substack{(a,b) \notin \cO^\star \\ b'\in \Bab}} \klabb,\qquad M_\nu = \max\limits_{(a,b) \notin \cO^\star}\sum\limits_{b'\in \Bab} \klabb  \,.
\]
Furthermore, we have 
\[ \forall (a,b) \notin \cO^\star,\quad \dfrac{N_{a,b}(T)}{\log\!\left(N_b(T)\right)} \leq \dfrac{1 + \alpha_\nu(\epsilon)}{\klab} + \dfrac{\abs{\cT_{\epsilon,\gamma}^c}}{\log\!\left(N_b(T)\right)}  \,.
\]
Refer to Appendix~\ref{notations} for the definitions of $\epsilon_\nu$, $\alpha_\nu(\cdot)$ and $E_\nu$.
\end{proposition}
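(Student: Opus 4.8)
The plan is to mirror the finite-time analysis already carried out for \IMEDSstar in Proposition~\ref{prop:upper bounds}, substituting the \IMEDS-specific empirical bounds (Lemmas~\ref{imeds empirical lower bounds} and~\ref{imeds empirical upper bounds}) for their \IMEDSstar counterparts, and exploiting the fact that under \IMEDS the pulled couple is exactly the minimum-index couple, $(a_{t+1},b_{t+1})=(\aul_t,\bul_t)$, so that none of the tracking or forced-exploration bookkeeping is needed. For the bound on $\Esp_\nu[\abs{\cT_{\epsilon,\gamma}^c}]$, I would note that the reliability analysis culminating in Lemma~\ref{imedstar reliability} — namely the decomposition $\cT_{\epsilon,\gamma}^c\setminus\cE_{\epsilon,\gamma}^c\subset\Lambda_{\epsilon_\nu}$ of Lemma~\ref{lem:decomposition_T_epsilon} combined with the strategy-independent concentration estimates of Lemma~\ref{lem:ce_and_lambda_are_finite} — uses only the empirical lower bounds of Lemma~\ref{imedstar empirical lower bounds}. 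Since Lemma~\ref{imeds empirical lower bounds} supplies exactly the same inequalities for \IMEDS, this part transfers verbatim and gives the identical bound.

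For the Pareto-type inequality, I would fix an arm $a$ and a time $t\in\cT_{\epsilon,\gamma}$ at which $a_{t+1}=a$ and $(a_{t+1},b_{t+1})\notin\cO^\star$. Because \IMEDS pulls the minimum-index couple, $(a,b_{t+1})=(\aul_t,\bul_t)\notin\Ohat^\star(t)$, so Lemma~\ref{imeds empirical upper bounds} applies directly to the pulled user $b_{t+1}$, giving $\sum_{b'\in\Bhat_{a,b_{t+1}}(t)}N_{a,b'}(t)\,\klof{\muhat_{a,b'}(t)}{\muhat_{b_{t+1}}^\star(t)-\omega_{b_{t+1},b'}}\leq\log(N_{b_{t+1}}(t))$. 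I would then convert the empirical divergences to true ones using $t\in\cT_{\epsilon,\gamma}$: restricting the sum to users with $N_{a,b'}(t)\geq\gamma N_{a_{t+1},b_{t+1}}(t)$ picks up the multiplicative factor $(1+\alpha_\nu(\epsilon))$, while the discarded users contribute at most $\gamma M_\nu N_{a_{t+1},b_{t+1}}(t)$; since $\Ohat^\star(t)=\cO^\star$ the user $b_{t+1}$ is a valid index for the minimum over sub-optimal $b$, so dividing by $\log(N_b(T))$ yields the stated bound at reliable times, with the correction term involving $\cT_{\epsilon,\gamma}$ absorbing the contribution of the non-reliable steps accumulated up to $T$.

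For the per-couple bound I would instead invoke the ``in particular'' part of Lemma~\ref{imeds empirical upper bounds}, namely $N_{a_{t+1},b_{t+1}}(t)\,\klof{\muhat_{a_{t+1},b_{t+1}}(t)}{\muhat_{b_{t+1}}^\star(t)}\leq\log(N_{b_{t+1}}(t))$, which controls the pulled couple through its own divergence to its own optimal mean (here the diagonal convention $\omega_{b,b}=0$ is what turns $\muhat_{b}^\star(t)-\omega_{b,b}$ into $\muhat_{b}^\star(t)$). At a reliable time this converts, via the $\alpha_\nu(\epsilon)$-calibration, into $N_{a,b}(t)\leq\frac{1+\alpha_\nu(\epsilon)}{\kl(\mu_{a,b}|\mu_b^\star)}\log(N_b(t))$, which is precisely the $\klab$-flavoured (agnostic) bound expected for the non-optimizing strategy, in contrast with the $m_\nu$-based bound for \IMEDSstar. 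This is the one place where the two analyses genuinely diverge.

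The main obstacle I anticipate is the passage from an inequality valid at a reliable time $t\leq T$ to one valid at the horizon $T$ itself, since the empirical upper bounds only fire at the instant a sub-optimal couple is pulled under reliable conditions. I would handle this by isolating the last such time $\tau\leq T$ for the couple $(a,b)$ and arguing that after $\tau$ the relevant counts can increase only during the finitely many steps of $\cT_{\epsilon,\gamma}^c$ (plus at most one extra pull), which is exactly what produces the additive $\abs{\cT_{\epsilon,\gamma}^c}$ and $M_\nu\abs{\cT_{\epsilon,\gamma}^c}/\min_b\log(N_b(T))$ corrections. Everything else is a routine transcription of the \IMEDSstar computation, now simplified by the absence of the tracking step.
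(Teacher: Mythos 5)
Your proposal is correct and follows exactly the route the paper intends: the appendix explicitly omits the proof of this proposition, stating only that it is obtained by rerunning the \IMEDSstar{} finite-time analysis with Lemmas~\ref{imeds empirical lower bounds} and~\ref{imeds empirical upper bounds} in place of their \IMEDSstar{} counterparts, and you reconstruct that transcription faithfully, including the one genuine point of divergence (the per-couple bound via $\kl(\mu_{a,b}|\mu_b^\star)$ from the ``in particular'' clause rather than via $m_\nu$ and the tracking/forced-exploration cases). No gaps.
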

From the previous proposition, we deduce the following corollary by doing $T \!\to\! \infty$, then $\epsilon, \gamma \!\to\! 0$.
\begin{corollary}[\IMEDS: Pareto optimality]\label{cor: imeds pareto optimality}Let $\nu \in \cD_\omega$. Under \IMEDS strategy we have
$$ \forall a \in \cA,\quad \limsupT\min\limits_{b:\, (a,b) \notin \cO^\star } \dfrac{1}{\log\!\left(N_b(T)\right)} \sum\limits_{b'\in \Bab}N_{a,b'}(T) \kl(\mu_{a,b'}|\mu_b^\star - \omegabb)  \leq 1 \,.$$
\end{corollary}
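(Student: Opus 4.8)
The plan is to obtain the claim as a direct limiting consequence of the finite-time upper bound of Proposition~\ref{prop:imeds upper bounds}, taking first $T\to\infty$ and then $\epsilon,\gamma\to 0$. Fix $0<\epsilon<\epsilon_\nu$ and $\gamma\in(0,1/2)$, and fix an arm $a\in\cA$ with $\Set{b\in\cB:\ (a,b)\notin\cO^\star}\neq\emptyset$. Proposition~\ref{prop:imeds upper bounds} gives, pathwise and for every horizon $T\geq1$,
\[
\min\limits_{b:\,(a,b)\notin\cO^\star}\dfrac{1}{\log\!\left(N_b(T)\right)}\sum\limits_{b'\in\Bab}N_{a,b'}(T)\,\kl(\mu_{a,b'}|\mu_b^\star-\omegabb)\leq(1+\alpha_\nu(\epsilon))\Big[1+\gamma\tfrac{M_\nu}{m_\nu}\Big]+\dfrac{M_\nu\,\abs{\cT_{\epsilon,\gamma}^c}}{\min_{b\in\cB}\log\!\left(N_b(T)\right)}.
\]
The crucial observation is that the left-hand side does not depend on $\epsilon$ or $\gamma$, so once the residual term is controlled we are free to optimise the right-hand side over these two parameters.

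Next I would dispose of the residual term as $T\to\infty$. This rests on two facts. First, under \IMEDS one has $N_b(T)\to\infty$ almost surely for every user $b\in\cB$; this is the analogue of Lemma~\ref{nb pulls to infty} and follows from the empirical lower bounds of Lemma~\ref{imeds empirical lower bounds} exactly as in the \IMEDSstar case (the indexes, hence the numbers of pulls, tend to infinity). Consequently $\min_{b\in\cB}\log\!\left(N_b(T)\right)\to\infty$. Second, the bound $\Esp_\nu[\abs{\cT_{\epsilon,\gamma}^c}]<\infty$ from Proposition~\ref{prop:imeds upper bounds} forces $\abs{\cT_{\epsilon,\gamma}^c}<\infty$ almost surely. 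Therefore the residual term $M_\nu\,\abs{\cT_{\epsilon,\gamma}^c}/\min_{b\in\cB}\log\!\left(N_b(T)\right)$ vanishes almost surely, and taking $\limsup_T$ yields
\[
\limsupT\min\limits_{b:\,(a,b)\notin\cO^\star}\dfrac{1}{\log\!\left(N_b(T)\right)}\sum\limits_{b'\in\Bab}N_{a,b'}(T)\,\kl(\mu_{a,b'}|\mu_b^\star-\omegabb)\leq(1+\alpha_\nu(\epsilon))\Big[1+\gamma\tfrac{M_\nu}{m_\nu}\Big].
\]

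Finally I would let $\gamma\to0$ and then $\epsilon\to0$. Since $M_\nu/m_\nu$ is a fixed finite constant, $\gamma M_\nu/m_\nu\to0$; and by the very definition of $\alpha_\nu(\cdot)$ in Appendix~\ref{notations} we have $\alpha_\nu(\epsilon)\to0$ as $\epsilon\to0$. Because the left-hand side is independent of $\epsilon,\gamma$, it is dominated by the infimum over admissible $(\epsilon,\gamma)$ of the right-hand side, which equals $1$. This gives the stated Pareto-optimality for the chosen $a$; arms $a$ with $\Set{b:\ (a,b)\notin\cO^\star}=\emptyset$ are handled by the convention $\min_\emptyset=-\infty$.

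The arguments are essentially bookkeeping once Proposition~\ref{prop:imeds upper bounds} is in hand, so the only genuine obstacle lies \emph{upstream}, in establishing that proposition for \IMEDS (namely the empirical upper bounds of Lemma~\ref{imeds empirical upper bounds} together with the reliability of the estimators, i.e.\ the almost-sure finiteness of $\cT_{\epsilon,\gamma}^c$) and in verifying $N_b(T)\to\infty$ almost surely; the limiting step itself is routine, the one point requiring care being the order of limits, which is legitimate precisely because the target quantity carries no dependence on $\epsilon$ or $\gamma$.
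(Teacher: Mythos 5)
Your proposal is correct and follows essentially the same route as the paper, which deduces the corollary from Proposition~\ref{prop:imeds upper bounds} precisely by letting $T\to\infty$ and then $\epsilon,\gamma\to 0$, using the almost-sure finiteness of $\abs{\cT_{\epsilon,\gamma}^c}$ and the fact that $N_b(T)\to\infty$ for every user. The only difference is that you spell out the bookkeeping (order of limits, the $\min_\emptyset=-\infty$ convention) that the paper leaves implicit.
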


\subsection{\textit{Uncontrolled scenario}: Finite-time analysis}
When \textit{uncontrolled scenario} is considered, the learner does not choose the users to deal with and the exploration phases may be performed with some delay. This can be formalized within the empirical bounds induced by \IMEDStwo and \IMEDSstartwo strategies.  
\subsubsection{Empirical bounds on the numbers of pulls}
For time step $ t\geq 1$, let us introduce the last return time of couple $(a_{t+1},b_{t+1}) \!\in\!\cB$ as 
\[ \tau_t \coloneqq \min \Set{ t - t':\ b_{t'} = b_{t+1},\ t \geq t' \geq 1 } \,.
\]
By definition of $\tau_t$ we have 
\[ N_{a_{t+1},b_{t+1}}(t) = N_{a_{t+1},b_{t+1}}(t -\tau_t)  \,. \]
Now, empirical bounds on the numbers of pull can be formulated for the \textit{uncontrolled scenario}. These inequalities are the same as those  for the \textit{controlled scenario} up to (random) time-delays.
\begin{lemma}[\textit{Uncontrolled scenario}: Empirical lower bounds]\label{scenario 2: empirical lower bounds}Under \IMEDStwo and \IMEDSstartwo, at each step time $t \!\geq\! 1$ there exists a random time delay $\sigma_t$ such that $0 \!\leq \! \sigma_t \!\leq\! \tau_t$ and for all couple $(a,b) \!\notin\! \Ohat^\star(t\!-\!\sigma_t)$
		\[ 
         \log\!\left(N_{a_{t+1}, b_{t+1}}(t\!-\!\sigma_t)\right)  \!\leq\!\!\!\!\! \sum\limits_{b'\in \Bhat_{a,b}(t\!-\!\sigma_t)}\!\!\!\!\! N_{a,b'}(t\!-\!\sigma_t) \, \klof{\muhat_{a,b'}(t\!-\!\sigma_t)}{\muhat_{b}^\star(t\!-\!\sigma_t) \!-\! \omegabb}   \!+\! \log\!\left(N_{a,b'}(t\!-\!\sigma_t)\right).
        \]
    Furthermore, for all couple $(a,b) \!\in\! \Ohat^\star(t\!-\!\sigma_t)$,
        \[ N_{a_{t+1},b_{t+1}}(t\!-\!\sigma_t) \leq N_{a,b}(t\!-\!\sigma_t) 
           \,.
        \]
	\end{lemma}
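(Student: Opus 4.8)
The plan is to reduce the statement, branch by branch, to the controlled-scenario bound already proved in Lemma~\ref{imedstar empirical lower bounds}, by locating the past instant at which the couple $(a_{t+1},b_{t+1})$ effectively pulled at time $t+1$ was \emph{decided}. First I would trace the origin of the arm $a_{t+1}$ through the priority rule of Algorithm~\ref{alg:IMEDSstar2}: it is produced either by the current-exploration branch (in which case I set $\sigma_t=0$), or by one of the two delayed branches, in which case the arm was stored in $\textnormal{FE}(b_{t+1})$ or $\textnormal{E}(b_{t+1})$ at some earlier step $s\le t$ at which the tracked user satisfied $\bol_s=b_{t+1}$; there I set $\sigma_t=t-s$. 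In every case the couple physically pulled at $t+1$ coincides with the tracked couple $(\aul_s,\bol_s)$ selected at step $s=t-\sigma_t$ by the very same rule \IMEDSstar uses. For \IMEDStwo there is no tracking and no buffering, so one always has $\sigma_t=0$ and only the current-user minimisation below is needed.

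Next I would bound the delay. The buffers $\textnormal{FE}(b_{t+1})$ and $\textnormal{E}(b_{t+1})$ are emptied by the priority rule each time the learner is presented with user $b_{t+1}$; hence any decision still sitting in them at time $t$ must have been deposited strictly after the previous occurrence of $b_{t+1}$, that is after the instant $t-\tau_t$. This gives $0\le\sigma_t\le\tau_t$. Moreover, since user $b_{t+1}$ is not dealt with on $(t-\tau_t,t]$, the counter $N_{a_{t+1},b_{t+1}}$ is constant on $[t-\sigma_t,t]$, so the quantities evaluated at the retarded time $t-\sigma_t$ are exactly the legitimate ones.

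The core step is then to replay, at the single instant $s=t-\sigma_t$, the three-case argument of Lemma~\ref{imedstar empirical lower bounds}. Because the normalisation factor $\hat\beta_b(s)$ is common to all arms of a fixed user, minimising $\widetilde I_{\cdot,b}(s)$ over arms is the same as minimising $I_{\cdot,b}(s)$; combined with $\bol_s\in\Bhat_{\aul_s,\bul_s}(s)\cup\{\bul_s\}$ in the tracking branch, or $\bol_s\in\argmin_{b} N_{\aul_s,b}(s)$ in the forced branch, the same case distinction yields $I_{\aul_s,\bul_s}(s)\ge\log\!\left(N_{\aul_s,\bol_s}(s)\right)=\log\!\left(N_{a_{t+1},b_{t+1}}(s)\right)$. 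Unfolding the definition~\eqref{eq:def_imed_index_couple} of the index at time $s$ then converts this single inequality into the two displayed bounds, exactly as in the controlled proof.

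The main obstacle, and the point where the uncontrolled case genuinely departs from the controlled one, is the passage from the \emph{normalised} minimisation actually carried out by the algorithm to the \emph{unnormalised} index inequality demanded for every couple $(a,b)$: global minimality only provides $\widetilde I_{a,b}(s)\ge\widetilde I_{\aul_s,\bul_s}(s)$, and turning this back into $I_{a,b}(s)\ge\log\!\left(N_{a_{t+1},b_{t+1}}(s)\right)$ requires controlling the ratio $\hat\beta_b(s)/\hat\beta_{\bul_s}(s)$ and exploiting that all log-frequencies are positive. A secondary but routine difficulty is the bookkeeping of the degenerate situations ($N_{a_{t+1},b_{t+1}}(s)=0$, and steps where $(\aul_s,\bul_s)\in\Ohat^\star(s)$ so that nothing is stored), which I would dispatch exactly as the corresponding edge cases in the controlled analysis.
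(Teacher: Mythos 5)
First, note that the paper does not actually prove this lemma: Appendix~\ref{app: extended proof} states explicitly that ``for the sake of brevity and clarity, proofs are not given,'' so there is no reference argument to compare yours against. Your skeleton --- set $\sigma_t$ to the delay back to the round $s$ at which the executed decision was deposited, observe that $N_{a_{t+1},b_{t+1}}$ has not moved since $s$, and replay the three-case analysis of Lemma~\ref{imedstar empirical lower bounds} at time $s$ --- is the natural adaptation and is surely what the authors intend.

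That said, two of your steps do not go through as written. The first is the one you yourself flag: global minimality at time $s$ only gives $\widetilde I_{a,b}(s)\ge \widetilde I_{\aul_s,\bul_s}(s)$, hence $I_{a,b}(s)\ge \bigl(\hat\beta_b(s)/\hat\beta_{\bul_s}(s)\bigr)\, I_{\aul_s,\bul_s}(s)$, and the ratio $\hat\beta_b(s)/\hat\beta_{\bul_s}(s)$ can be strictly below $1$; positivity of the log-frequencies does not save you, since the ratio tends to $\beta_b/\beta_{\bul_s}$, which is below $1$ whenever user $b$ arrives less often than $\bul_s$. So the unnormalised inequality $I_{a,b}(s)\ge\log\!\left(N_{a_{t+1},b_{t+1}}(s)\right)$ asserted by the lemma does not follow from your argument; one must either prove (and propagate downstream) a normalised variant, or restrict attention to the couples actually used later in the analysis. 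The second gap is your treatment of \IMEDStwo\ and of the exploitation branch of \IMEDSstartwo: there the algorithm only minimises $\widetilde I_{\cdot,b_{t+1}}(t)$ over \emph{arms of the current user}, which yields the inequality for couples $(a,b_{t+1})$ only, whereas the lemma asserts it for every couple $(a,b)\notin\Ohat^\star(t-\sigma_t)$, including users $b\neq b_{t+1}$ over which no comparison is ever made at round $t$; ``only the current-user minimisation is needed'' is therefore not enough. (A smaller repair is also needed for $\sigma_t\le\tau_t$: the buffers $\textnormal{FE}(b_{t+1})$ and $\textnormal{E}(b_{t+1})$ are only flushed at visits of $b_{t+1}$ at which the exploitation test fails, so a stored decision can survive past the previous visit at time $t-\tau_t$, in which case the deposit time lies outside $(t-\tau_t,t]$.)
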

\begin{lemma}[Empirical upper bounds]\label{imedstar2 empirical upper bounds}
		Under \IMEDStwo and \IMEDSstartwo, at each step time $t \!\geq\! 1$ such that $(a_{t+1},b_{t+1}) \!\notin\! \Ohat^\star(t)$, we have
		\[
		\sum\limits_{b'\in \Bhat_{a_{t+1},\bul_t}(t\!-\!\sigma_t)} N_{a_{t+1},b'}(t\!-\!\sigma_t)\, \klof{\muhat_{a_{t+1},b'}(t\!-\!\sigma_t)}{\muhat_{\bul_t}^\star(t\!-\!\sigma_t) - \omega_{\bul_t,b'}} \leq \log\!\left(N_b(t\!-\!\sigma_t)\right) \,,
		\]
		where $\sigma_t$ is a random time delay  such that $0 \!\leq \! \sigma_t \!\leq\! \tau_t$.
		Furthermore, we have under \IMEDStwo  
		\[b_{t+1} = \bul_t,\qquad
			N_{a_{t+1},b_{t+1}}(t\!-\!\sigma_t)\,\klof{\muhat_{a_{t+1},b_{t+1}}(t\!-\!\sigma_t)}{\muhat_{b_{t+1}}^\star(t\!-\!\sigma_t)} \leq \log\!\left(N_{b_{t+1}}(t\!-\!\sigma_t)\right)
        \]
        and under \IMEDSstartwo
		\[
		\dfrac{N_{a_{t+1},b_{t+1}}(t\!-\!\sigma_t)}{\log(t\!-\!\sigma_t)} \!\leq\!\! \left\{ \begin{array}{l}
		    \!\!\!\!\dfrac{1}{\klof{\muhat_{a_{t+1},\bul_t}(t\!-\!\sigma_t)}{\muhat_{\bul_t}^\star(t\!-\!\sigma_t)}}  \textnormal{ , if } c_{a_{t+1}}(t\!-\!\sigma_t) = c_{a_{t+1}}^+(t\!-\!\sigma_t) \\
		    \!\!\!\!\min\!\left(\dfrac{1}{\klof{\muhat_{a_{t+1},b_{t+1}}(t\!-\!\sigma_t)}{\muhat_{\bul_t}^\star(t\!-\!\sigma_t) - \omega_{\bul_t,b_{t+1}}}}, \nopt_{a_{t+1},b_{t+1}}(t\!-\!\sigma_t)\!\! \right)\!\!\! \ \textnormal{, else.} 
		\end{array} \right. 
		\]
\end{lemma}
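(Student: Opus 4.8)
The plan is to reduce both families of inequalities to their \textit{controlled scenario} analogues, Lemma~\ref{imedstar empirical upper bounds} and Lemma~\ref{nopt dominates}, by locating the step at which the exploration executed at time $t$ was actually \emph{decided}. In the \textit{uncontrolled scenario} the learner cannot pick the user, so an exploration of a couple $(\aul,\bul)$ is stored in one of the queues $\textnormal{FE}(\cdot)$, $\textnormal{E}(\cdot)$ and executed only when its (forced or tracked) user is next revealed; see Algorithm~\ref{alg:IMEDSstar2}. First I would set $\sigma_t\ge 0$ to be the delay between that decision and the execution at step $t$, and write $\tilde t\coloneqq t-\sigma_t$. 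Because at most one pending exploration per user is kept and it is erased once performed, the decision producing the pull at $t$ was taken \emph{after} the previous appearance of user $b_{t+1}$; with the definition of $\tau_t$ this gives $0\le\sigma_t\le\tau_t$ and, crucially, that no couple $(a,b_{t+1})$ is pulled on $(\tilde t,t]$, hence $N_{a_{t+1},b_{t+1}}(t)=N_{a_{t+1},b_{t+1}}(\tilde t)$. For \IMEDStwo, and for the ``current exploration'' branch of \IMEDSstartwo, the tracked user coincides with the revealed one and one simply takes $\sigma_t=0$.

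At the decision time $\tilde t$ the selection logic is \emph{verbatim} that of \IMEDSstar: an index minimiser $(\aul_{\tilde t},\bul_{\tilde t})$ is chosen, found to lie outside $\Ohat^\star(\tilde t)$, $a_{t+1}=\aul_{\tilde t}$, and the user is drawn from $\Bhat_{\aul_{\tilde t},\bul_{\tilde t}}(\tilde t)\cup\Set{\bul_{\tilde t}}$ by exactly the forced/tracking rules of Algorithm~\ref{alg:IMEDSstar}. The normalisation $\widetilde I_{a,b}=I_{a,b}/\hat\beta_b$ is harmless because every comparison I use is between couples sharing the user $\bul_{\tilde t}$, so the common factor $\hat\beta_{\bul_{\tilde t}}(\tilde t)$ cancels and the raw-index comparison $I_{\aul_{\tilde t},\bul_{\tilde t}}(\tilde t)\le I_{a,\bul_{\tilde t}}(\tilde t)$ survives. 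Taking $a$ to be the current best arm of $\bul_{\tilde t}$ gives $I_{\aul_{\tilde t},\bul_{\tilde t}}(\tilde t)\le\log\!\left(N_{\bul_{\tilde t}}(\tilde t)\right)$; dropping the non-negative logarithmic terms in definition~\eqref{eq:def_imed_index_couple} then yields the first displayed inequality, with every quantity frozen at $\tilde t$.

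The remaining bounds follow the case analysis in the proof of Lemma~\ref{imedstar empirical upper bounds} transported to $\tilde t$. For \IMEDStwo there is no tracking, so $b_{t+1}=\bul_{\tilde t}$; since $(a_{t+1},b_{t+1})\notin\Ohat^\star(\tilde t)$ and $\omega_{b_{t+1},b_{t+1}}=0$ we have $b_{t+1}\in\Bhat_{a_{t+1},b_{t+1}}(\tilde t)$, and retaining only this term of the sum gives the stated single-user bound. For \IMEDSstartwo I would split on the counters \emph{at the decision time}: if $c_{a_{t+1}}(\tilde t)=c^+_{a_{t+1}}(\tilde t)$ the forced branch sets $b_{t+1}\in\argminb N_{a_{t+1},b}(\tilde t)$, so $N_{a_{t+1},b_{t+1}}(\tilde t)\le N_{a_{t+1},\bul_{\tilde t}}(\tilde t)$ and the first inequality controls the latter; otherwise the tracking branch together with Lemma~\ref{nopt dominates} yields both a KL-type bound and $N_{a_{t+1},b_{t+1}}(\tilde t)\le\nopt_{a_{t+1},b_{t+1}}(\tilde t)\log(\tilde t)$, whence the minimum.

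The analytic substance is imported wholesale from the \textit{controlled scenario}; the genuine difficulty is the delay bookkeeping. The hard part will be to define $\sigma_t$ unambiguously from the queue dynamics of Algorithm~\ref{alg:IMEDSstar2}, to prove $0\le\sigma_t\le\tau_t$, and to verify that \emph{all} data entering the bounds — the counts, the empirical means, the reference user $\bul_{\tilde t}$, and the counters $c_a,c_a^+$ — are those frozen at $\tilde t$. Establishing that $N_{a_{t+1},b_{t+1}}$ is constant on $(\tilde t,t]$ is precisely what licenses substituting $\tilde t$ for $t$ throughout, and getting this invariance exactly right across the three priority branches is the delicate point.
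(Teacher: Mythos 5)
Your plan is essentially the paper's own (unwritten) argument: Appendix~\ref{app: extended proof} explicitly omits these proofs, describing the lemma only as a rewriting of the controlled-scenario bounds (Lemma~\ref{imedstar empirical upper bounds} together with Lemma~\ref{nopt dominates}) with all quantities frozen at the decision time $t-\sigma_t$, which is precisely your reduction via the queue dynamics, the cancellation of $\hat\beta_{\bul}$ in same-user index comparisons, and the invariance $N_{a,b_{t+1}}(t)=N_{a,b_{t+1}}(t-\sigma_t)$ on $(t-\sigma_t,t]$. The one detail still to be nailed down --- which you correctly single out as the delicate step --- is that $\sigma_t\le\tau_t$ genuinely follows from the priority rule of Algorithm~\ref{alg:IMEDSstar2}, since a visit to a user consumes only the highest-priority non-empty queue, so in principle a pending $\textnormal{E}(\cdot)$ entry could survive a visit at which $\textnormal{FE}(\cdot)$ was served; your justification ``at most one pending exploration per user is kept and erased once performed'' glosses over this two-queue interaction.
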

Thus, we prove respectively the Pareto-optimality and optimality of \IMEDStwo and \IMEDSstartwo since we show that the empirical means $\muhatab(t\!-\!\sigma_t)$ of couples $(a,b)$ involved in the previous inequalities concentrate as in the case of the \textit{controlled scenario}. This is the case as it is stated in Lemmas~\ref{imedstar2 concentration} of the next subsection.
\subsubsection{Concentration inequality with bounded time delays}
We prove a concentration lemma that does not depend on the followed strategy. It is a rewritting for the case of \textit{controlled scenario} of Lemma~\ref{imedstar concentration}.
	\begin{lemma}[Concentration inequalities]\label{imedstar2 concentration} Let  $\nu \!\in\! \cD_\omega $, $  0 \!<\! \epsilon$, $ \gamma \!\leq\! 1/2 $ and $(a,b),\, (a',b') \!\in\! \cA\!\times\!\cB$. Then for all sequence of stopping times $(\sigma_t)_{t\geq1}$ such that $0 \!\leq\sigma_t\leq\tau_t$ for all $t\!\geq\!1$, we have 
		\[
		\Esp_\nu\!\left[\sum\limits_{t\geq 1}\ind_{\Set{(a_{t+1},b_{t+1})=(a,b),\ N_{a',b'}(t-\sigma_t) \geq \gamma  N_{a,b}(t-\sigma_t),\  \abs{\muhat_{a',b'}(t-\sigma_t) - \mu_{a',b'}} \geq \epsilon }} \right] \leq \dfrac{17}{\gamma\epsilon^4} \,.
		\]
	\end{lemma}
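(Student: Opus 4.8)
The plan is to mirror the proof of Lemma~\ref{imedstar concentration} almost verbatim, the only genuinely new ingredient being the control of the random delay $\sigma_t$. The crucial observation is that on the event $\{(a_{t+1},b_{t+1})=(a,b)\}$ the delay does \emph{not} affect the count of the pulled couple. Indeed, $\tau_t$ is the last return time of user $b_{t+1}$, so no couple involving $b_{t+1}$ is pulled during $(t-\tau_t,t]$, which means $N_{a,b}(\cdot)$ is constant on $[t-\tau_t,t]$ (recall $b=b_{t+1}$ on this event). Since $0\leq\sigma_t\leq\tau_t$ gives $t-\tau_t\leq t-\sigma_t\leq t$, monotonicity of the counts yields $N_{a,b}(t-\sigma_t)=N_{a,b}(t)$. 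This is exactly what lets the stopping-time decomposition go through unchanged.

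First I would reintroduce the predictable stopping times $\tau_{a,b}^n=\inf\{t\geq1:N_{a,b}(t)=n\}$ and bound the summand sample-path-wise. On the event defining the summand at time $t$, set $n=N_{a,b}(t)+1$ and $m=N_{a',b'}(t-\sigma_t)$; then $\tau_{a,b}^n=t+1$, $N_{a',b'}(t-\sigma_t)=m$, $\muhat_{a',b'}(t-\sigma_t)=\muhat_{a',b'}^m$, and, using $N_{a,b}(t-\sigma_t)=N_{a,b}(t)=n-1$, the constraint $N_{a',b'}(t-\sigma_t)\geq\gamma N_{a,b}(t-\sigma_t)$ becomes $m\geq\gamma(n-1)$. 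Hence
\[
\sum_{t\geq1}\ind_{\{\cdots\}}\leq\sum_{n\geq1,\,m\geq0}\ind_{\{m\geq\gamma(n-1),\,\abs{\muhat_{a',b'}^m-\mu_{a',b'}}\geq\epsilon\}}\sum_{t\geq1}\ind_{\{\tau_{a,b}^n=t+1,\,N_{a',b'}(t-\sigma_t)=m\}}\,.
\]
For each fixed $n$ there is at most one $t$ with $\tau_{a,b}^n=t+1$, so the inner $t$-sum is $\leq\sum_{t\geq1}\ind_{\{\tau_{a,b}^n=t+1\}}\leq1$, and the whole expression is dominated by $\sum_{n\geq1,\,m\geq0}\ind_{\{m\geq\gamma(n-1),\,\abs{\muhat_{a',b'}^m-\mu_{a',b'}}\geq\epsilon\}}$, precisely the quantity appearing in the undelayed proof.

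Taking $\Esp_\nu$ and applying Hoeffding's inequality to $\muhat_{a',b'}^m$ (legitimate because the $\tau_{a',b'}^k$ are predictable, so the samples feeding $\muhat_{a',b'}^m$ are i.i.d.\ $\textnormal{Bern}(\mu_{a',b'})$) gives $\Pr_\nu(\abs{\muhat_{a',b'}^m-\mu_{a',b'}}\geq\epsilon)\leq2\e^{-2m\epsilon^2}$. Since for each $m$ the number of admissible $n$ is at most $m/\gamma+1$, the bound reduces to $2\sum_{m\geq0}(m/\gamma+1)\e^{-2m\epsilon^2}$, which is the same geometric-series computation as in Lemma~\ref{imedstar concentration} and evaluates to at most $17/(\gamma\epsilon^4)$.

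The main obstacle—really the only nontrivial point—is the delay-invariance $N_{a,b}(t-\sigma_t)=N_{a,b}(t)$; everything downstream is a copy of the undelayed argument. I would stress that this identity uses only the bound $\sigma_t\leq\tau_t$ together with the definition of $\tau_t$, and not any measurability of $\sigma_t$: the concentration for $\muhat_{a',b'}^m$ is indexed by the intrinsic pull-count $m$ and is therefore insensitive to how the delay is chosen. (Measurability of the $\sigma_t$ is needed elsewhere, e.g.\ in Lemma~\ref{scenario 2: empirical lower bounds}, but not here.)
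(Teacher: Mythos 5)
Your proposal is correct and follows essentially the same route as the paper: the paper's proof likewise reduces the delayed sum to the undelayed one via the observation that on $\{(a_{t+1},b_{t+1})=(a,b)\}$ one has $N_{a,b}(t-\sigma_t)\geq N_{a,b}(t-\tau_t)=N_{a,b}(t)$ (you note the equality, the paper only needs this one inequality), and then repeats the stopping-time decomposition, Hoeffding bound on $\muhat_{a',b'}^m$, and geometric-series computation from Lemma~\ref{imedstar concentration}.
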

	\begin{remark} There is no need to adapt Lemma~\ref{imedstar large deviation} for the case of \textit{controlled scenario} since this concentration lemma does not involve the current time steps explicitly.
    \end{remark}
\begin{proof} It is pointed out that for all time step $t \!\geq\! 1$, $N_{a_{t+1},b_{t+1}}(t\!-\!\sigma_t)\!\geq\! N_{a_{t+1},b_{t+1}}(t\!-\!\tau_t) \!=\! N_{a_{t+1},b_{t+1}}(t)$, then we proceed as in Appendix~\ref{app: concentration_lemmas}.

~\\Considering the stopping times $ \tau_{a,b}^n = \inf{ \Set{t \geq 1, N_{a,b}(t) = n} }$  we will rewrite the sum  
		$$ \sum\limits_{t\geq 1}\ind_{\Set{(a_{t+1},b_{t+1})=(a,b),\ N_{a',b'}(t-\sigma_t) \geq \gamma  N_{a,b}(t-\sigma_t),\  \abs{\muhat_{a',b'}(t-\sigma_t) - \mu_{a',b'}} \geq \epsilon }} $$ and use an Hoeffding's type argument.
		\beqan
		&& \sum\limits_{t \geq 1}\ind_{\Set{(a_{t+1},b_{t+1})=(a,b),\  N_{a',b'}(t-\sigma_t) \geq \gamma N_{a,b}(t-\sigma_t),\  \abs{\muhat_{a',b'}(t-\sigma_t) - \mu_{a',b'}} \geq \epsilon }} \\
		&\leq& \sum\limits_{t \geq 1}\ind_{\Set{(a_{t+1},b_{t+1})=(a,b),\  N_{a',b'}(t-\sigma_t) \geq \gamma N_{a,b}(t),\  \abs{\muhat_{a',b'}(t-\sigma_t) - \mu_{a',b'}} \geq \epsilon }} \\
		&\leq& \sum\limits_{t \geq 1}\sum\limits_{n \geq 1,\, m \geq 0}\ind_{\Set{ \tau_{a,b}^n = t +1, N_{a',b'}(t-\sigma_t) = m }} \ind_{\Set{ m \geq  \gamma(n-1) ,\  \abs{\muhat_{a',b'}^{m} - \mu_{a',b'}} \geq \epsilon }} \\
		&=& \sum\limits_{ m \geq 0} \sum\limits_{n \geq 1}\ind_{\Set{ m \geq \gamma(n-1),\  \abs{\muhat_{a',b'}^{m} - \mu_{a',b'}} \geq \epsilon }}\sum\limits_{t \geq 1}\ind_{\Set{ \tau_{a,b}^n = t +1, N_{a',b'}(t-\sigma_t) = m }}  \\
		&\leq& \sum\limits_{ m \geq 0} \sum\limits_{n \geq 1}\ind_{\Set{ m \geq \gamma(n-1),\  \abs{\muhat_{a',b'}^{m} - \mu_{a',b'}} \geq \epsilon }}\sum\limits_{t \geq 1}\ind_{\Set{ \tau_{a,b}^n = t +1}}  \\
		&\leq& \sum\limits_{ m \geq 0} \sum\limits_{n \geq 1}\ind_{\Set{ m \geq  \gamma(n-1),\  \abs{\muhat_{a',b'}^{m} - \mu_{a',b'}} \geq \epsilon }} \,, \\
		\eeqan 
	where the $\muhat_{a',b'}^{m} $ are defined in  Appendix~\ref{notations}. The proof ends the same way as in Appendix~\ref{app: concentration_lemmas}.
\end{proof}

\section{Continuity of solutions to parametric linear programs}
In this section we recall Lemma 13  established in \citet{magureanu2014oslb} on the continuity of solutions to parametric linear programs.
\begin{lemma} \label{lemma : continuity lp} Consider $ K \in \Real_+^{B\times B} $, $ \Delta \in \Real_+^B  $, and $ \cH \subset \Real_+^{B\times B}\times\Real_+^B $. Define $h = (K,\Delta)$. Consider the function $ Q$ and the set-valued map $Q^\star$
$$ Q(h) = \inf\limits_{x \in \Real_+^B}\Set{\Delta\cdot x | K \cdot x \geq 1 } $$
$$ Q^\star(h) = \Set{x\geq0 : \Delta \cdot x \leq Q(h) | K \cdot x \geq 1 } \,.$$
Assume that:
~\\ (i) For all $h \in \cH$, all rows and columns of $K$ are non-identically 0
~\\ (ii) $\min\limits_{h  \in \cH}\min\limits_{b \in B}\Delta_b > 0 .$
~\\ Then:
~\\ (a) $Q$ is continuous on $ \cH$
~\\ (b) $Q^\star$ is upper hemicontinuous on $ \cH$.
\end{lemma}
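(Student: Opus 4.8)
The plan is to treat $Q$ and $Q^\star$ as the optimal value and the argmin correspondence of a parametric linear program, and to run the standard machinery behind Berge's maximum theorem; the only non-routine ingredient is a strict-feasibility (Slater) construction that uses hypothesis (i). First I would establish \textbf{well-posedness}. Fix $h = (K,\Delta) \in \cH$. Since every row of $K$ is non-identically $0$ by (i), each constraint has a positive coefficient, so choosing the corresponding coordinates of $x$ large enough produces a strictly feasible point; hence $\{x \geq 0 : K\cdot x \geq 1\}$ is nonempty. Because $\Delta \geq 0$ and $x \geq 0$, the objective $\Delta\cdot x$ is bounded below by $0$, and since $\min_b \Delta_b > 0$ by (ii) the sublevel sets $\{x \geq 0 : K\cdot x \geq 1,\ \Delta\cdot x \leq c\}$ are compact; thus the infimum defining $Q(h)$ is attained and $Q^\star(h)$ is a nonempty compact set. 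The uniform bound $\min_{h\in\cH}\min_b \Delta_b \geq m > 0$ then shows that, on a neighbourhood of any $h_0 \in \cH$, all minimizers lie in a fixed box $[0,R]^B$, so $Q^\star$ is locally bounded.

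Next, \textbf{lower semicontinuity of $Q$}. Let $h_n \to h_0$ in $\cH$ realise $\liminf_n Q(h_n)$ and pick $x_n \in Q^\star(h_n)$. By local boundedness $(x_n)$ has a subsequence $x_{n_k} \to x^*$; passing to the limit in $K_{n_k}\cdot x_{n_k} \geq 1$ gives $K_0\cdot x^* \geq 1$, so $x^*$ is feasible for $h_0$, whence $Q(h_0) \leq \Delta_0\cdot x^* = \lim_k \Delta_{n_k}\cdot x_{n_k} = \liminf_n Q(h_n)$.

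The \textbf{main obstacle} is \textbf{upper semicontinuity of $Q$}, because a minimizer $x^*$ of $h_0$ may become infeasible under the perturbation $K_n$. Here I would use a strictly feasible point $\bar x \geq 0$, produced as in the first paragraph, satisfying $K_0\cdot \bar x \geq 1 + \delta$ (componentwise) for some $\delta > 0$; for $n$ large one still has $K_n\cdot \bar x \geq 1 + \delta/2$. Writing $K_n\cdot x^* \geq 1 - \epsilon_n$ with $\epsilon_n \to 0$, the convex combination $x_n := (1-\lambda_n)\,x^* + \lambda_n\,\bar x$ with $\lambda_n := 2\epsilon_n/\delta \to 0$ satisfies $K_n\cdot x_n \geq 1$ (since $(1-\lambda_n)(1-\epsilon_n) + \lambda_n(1+\delta/2) \geq 1$ for this choice of $\lambda_n$), so $x_n$ is feasible for $h_n$ and $x_n \to x^*$. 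Consequently $\limsup_n Q(h_n) \leq \limsup_n \Delta_n\cdot x_n = \Delta_0\cdot x^* = Q(h_0)$. Combined with the previous paragraph, this proves (a).

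Finally, \textbf{upper hemicontinuity of $Q^\star$} (claim (b)) follows from the closed-graph criterion for locally bounded, compact-valued correspondences. If $h_n \to h_0$, $x_n \in Q^\star(h_n)$ and $x_n \to x^*$, then as above $K_0\cdot x^* \geq 1$, while $\Delta_0\cdot x^* = \lim_n \Delta_n\cdot x_n = \lim_n Q(h_n) = Q(h_0)$ by the continuity of $Q$ just established; hence $x^* \in Q^\star(h_0)$. Local boundedness upgrades this closed-graph property to upper hemicontinuity, which completes the proof.
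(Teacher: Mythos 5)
Your proof is correct. Note first that the paper itself does not prove this lemma: it is imported verbatim as Lemma~13 from \citet{magureanu2014oslb}, so there is no in-paper argument to compare against. Your self-contained proof is the standard parametric-LP / Berge-maximum-theorem argument, and every step checks out: hypothesis (i) (only the row part is needed) gives a Slater point $\bar x$ with $K_0\cdot\bar x\geq 1+\delta$, hypothesis (ii) makes the sublevel sets compact so the infimum is attained, the convex-combination repair $x_n=(1-\lambda_n)x^*+\lambda_n\bar x$ with $\lambda_n=2\epsilon_n/\delta$ indeed restores feasibility (the computation $(1-\lambda_n)(1-\epsilon_n)+\lambda_n(1+\delta/2)=1+\lambda_n\epsilon_n\geq1$ is right), and closed graph plus local boundedness does upgrade to upper hemicontinuity for a compact-valued map. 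The one place you should be more explicit is the local-boundedness claim: the bound $\min_b\Delta_b\geq m>0$ alone only gives $\|x\|_\infty\leq Q(h)/m$ for $x\in Q^\star(h)$, so you also need $Q$ to be bounded above on a neighbourhood of $h_0$; this follows immediately from the same Slater point, since $\bar x$ remains feasible for all nearby $h$ and hence $Q(h)\leq\Delta\cdot\bar x$, but as written that step leans on an unstated fact. With that sentence added, the argument is complete and could serve as a proof of the cited lemma.
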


\section{Details on numerical experiments}
\label{app:detail_numerical_exp}

% \begin{equation*}
%     \label{eq:sape_weights}
%     \left(
%     \begin{matrix}
%     0 & &(\alpha)\\
%      & \ddots & \\
%      (\alpha)&  & 0
%     \end{matrix}\right)
% \end{equation*}

For the fixed configuration experiments we used the weight matrix $\omega$ of Table~\ref{tab:w_fixed} and the configuration $\nu$ described in Table~\ref{tab:nu_fixed}. $\omega$ and $\nu$ have been chosen at random in such a way that the regret under \IMED exceeds the structured lower bound on the regret. This means the structure $\omega$ is informative for the bandit configuration $\nu$ and not taking it into account hinders optimality. 

\begin{table}[!ht]
\centering
\begin{tabular}{|c|c|c| c|c|c| c|c|c| c|c|}
\hline
user\textbackslash user & $b_1$ & $b_2$ & $b_3$ & $b_4$ & $b_5$ & $b_6$ & $b_7$ & $b_8$  & $b_9$ & $b_{10}$ \\
\hline

$b_1$ & 0 & 0.07 & 0.07 & 0.12 & 0.20 & 0.05 & 0.16 & 0.14 & 0.28 & 0.03 \\
\hline

$b_2$ & 0.07 & 0 & 0.14 & 0.13 & 0.21 & 0.12 & 0.09 & 0.07 & 0.21 & 0.04 \\
\hline

$b_3$ & 0.07 & 0.14 & 0 & 0.19 & 0.27 & 0.12 & 0.11 & 0.13 & 0.25 & 0.10 \\
\hline

$b_4$ & 0.12 & 0.13 & 0.19 & 0 & 0.26 & 0.17 & 0.22 & 0.20 & 0.34 & 0.09 \\
\hline

$b_5$ & 0.20 & 0.21 & 0.27 & 0.26 & 0 & 0.25 & 0.18 & 0.20 & 0.32 & 0.17 \\
\hline

$b_6$ & 0.05 & 0.12 & 0.12 & 0.17 & 0.25 & 0 & 0.21 & 0.19 & 0.33 & 0.08 \\
\hline

$b_7$ & 0.16 & 0.09 & 0.11 & 0.22 & 0.18 & 0.21 & 0 & 0.02 & 0.14 & 0.13 \\
\hline

$b_8$ & 0.14 & 0.07 & 0.13 & 0.20 & 0.20 & 0.19 & 0.02 & 0 & 0.16 & 0.11 \\
\hline

$b_9$ & 0.28 & 0.21 & 0.25 & 0.34 & 0.32 & 0.33 & 0.14 & 0.16 & 0 &  0.25\\
\hline

$b_{10}$ & 0.03 & 0.04 & 0.10 & 0.09 & 0.17 & 0.08 & 0.13 & 0.11 & 0.25 & 0 \\ 
\hline

\end{tabular}
    \caption{weight matrix $\omega$ used in the fixed configuration experiment.}
    \label{tab:w_fixed}
\end{table}

\begin{table}[H]
\centering
\begin{tabular}{|c|c|c|c|c|c|c|c|c|c|c|}
\hline
  arm \textbackslash user & $b_1$ & $b_2$ & $b_3$ & $b_4$ & $b_5$ & $b_6$ & $b_7$ & $b_8$  & $b_9$ & $b_{10}$ \\
\hline
    $a_1$ & 0.15 & 0.11 & 0.19 & 0.19 & 0.08 & 0.15 & 0.09 & 0.08 & 0.13 & 0.13 \\
    \hline
    
    $a_2$ & 0.70 & 0.73 & 0.71 & 0.71 & 0.70 & 0.71 & 0.78 & 0.79 & 0.64 & 0.70 \\
     \hline
    
    $a_3$ & 0.13 & 0.14 & 0.13 & 0.15 & 0.02 & 0.08 & 0.17 & 0.17 & 0.04 & 0.14 \\
    \hline
    
    $a_4$ & 0.02 & 0.04 & 0.09 & 0.05 & 0.16 & 0.02 & 0.11 & 0.11 & 0.06 & 0.01 \\
    \hline
    
    $a_5$ & 0.95 & 0.98 & 1.00 & 0.97 & 0.98 & 0.98 & 0.90 & 0.91 & 0.84 & 0.97 \\
\hline
\end{tabular}

\begin{comment}
20200121graph1000
omega
[,1] [,2] [,3] [,4] [,5] [,6] [,7] [,8] [,9] [,10]
 [1,] 0.00 0.07 0.07 0.12 0.20 0.05 0.16 0.14 0.28  0.03
 [2,] 0.07 0.00 0.14 0.13 0.21 0.12 0.09 0.07 0.21  0.04
 [3,] 0.07 0.14 0.00 0.19 0.27 0.12 0.11 0.13 0.25  0.10
 [4,] 0.12 0.13 0.19 0.00 0.26 0.17 0.22 0.20 0.34  0.09
 [5,] 0.20 0.21 0.27 0.26 0.00 0.25 0.18 0.20 0.32  0.17
 [6,] 0.05 0.12 0.12 0.17 0.25 0.00 0.21 0.19 0.33  0.08
 [7,] 0.16 0.09 0.11 0.22 0.18 0.21 0.00 0.02 0.14  0.13
 [8,] 0.14 0.07 0.13 0.20 0.20 0.19 0.02 0.00 0.16  0.11
 [9,] 0.28 0.21 0.25 0.34 0.32 0.33 0.14 0.16 0.00  0.25
[10,] 0.03 0.04 0.10 0.09 0.17 0.08 0.13 0.11 0.25  0.00

mu

 [,1] [,2] [,3] [,4] [,5] [,6] [,7] [,8] [,9] [,10]
[1,] 0.15 0.11 0.19 0.19 0.08 0.15 0.09 0.08 0.13  0.13
[2,] 0.70 0.73 0.71 0.71 0.70 0.71 0.78 0.79 0.64  0.70
[3,] 0.13 0.14 0.13 0.15 0.02 0.08 0.17 0.17 0.04  0.14
[4,] 0.02 0.04 0.09 0.05 0.16 0.02 0.11 0.11 0.06  0.01
[5,] 0.95 0.98 1.00 0.97 0.98 0.98 0.90 0.91 0.84  0.97
\end{comment}

    \caption{configuration $\nu$ used in the fixed configuration experiment.}
    \label{tab:nu_fixed}
\end{table}

\begin{figure}[H]
    \centering
     \includegraphics[scale= 0.8]{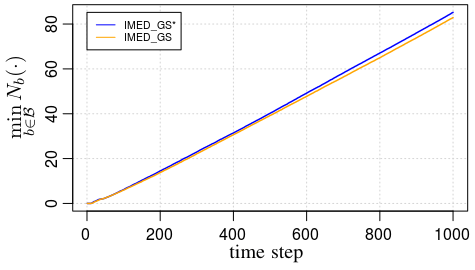} 
     \caption{ $\min_{b\in\cB}N_b(\cdot)$ approximated over $1000$ runs. At each run we sample uniformly at random  
a weight matrix $\omega$  and then sample uniformly at random a configuration $\nu \!\in\!\cD_\omega$. }
 \label{fig:experiments minNb}
\end{figure}

\end{document}